\documentclass{article}

\usepackage{multirow, makecell}
\usepackage{natbib}
\usepackage{amsmath, amsfonts, bm}
\usepackage{amsthm}
\usepackage{booktabs}

\newtheorem{proposition}{Proposition}
\newtheorem{assump}{Assumption}

\newtheorem{thm}{Theorem}

\newtheorem{lemma}{Lemma}

\usepackage{setspace}
\usepackage{comment}
\usepackage{graphicx}
\usepackage[colorlinks=true, allcolors=blue]{hyperref}

\usepackage{xcolor}

\usepackage{tikz}
\usepackage{tikz-qtree}
\usetikzlibrary{trees}
\usetikzlibrary{automata,positioning}
\usetikzlibrary{shapes,decorations,arrows,calc,arrows.meta,fit}

\allowdisplaybreaks

\usepackage[margin=1in]{geometry}

\title{Causal Vaccine Effects on Post-infection Outcomes \\ in the Naturally Infected}

\author{Allison Codi$^{1}$, Elizabeth Rogawski McQuade$^{2}$, Razieh Nabi$^{1}$,\\ 
Mats Stensrud$^{3}$, Kaeum Choi$^1$, David Benkeser$^{1}$\\
$^{1}$Department of Biostatistics and Bioinformatics, Emory University, \\ Atlanta, GA, USA \\
$^{2}$Department of Epidemiology, Emory University, Atlanta, GA, USA \\
$^{3}$Swiss Federal Technology Institute of Lausanne, \\ Institute of Mathematics, Lausanne, Switzerland
}

\begin{document}

\onehalfspacing

\maketitle

\begin{abstract}
Understanding vaccine effects on post-infection outcomes is critical for evaluating the full value proposition of a vaccine. However, defining appropriate causal effects on such outcomes is challenging because infection is affected by vaccination. Existing principal stratification approaches focus on the \emph{Doomed} stratum, individuals who would be infected regardless of vaccine receipt. For many relevant outcomes, however, this estimand will understate vaccine benefit by excluding individuals whose adverse post-infection outcomes are improved because vaccination prevented infection. We therefore propose causal estimands for post-infection outcomes in the \emph{Naturally Infected}, individuals who would be infected in absence of vaccine. We derive  bounds under minimal assumptions and give point identification results under an exclusion restriction and/or a partial principal ignorability assumption. For point-identified settings, we develop efficient one-step estimators with robustness properties under inconsistent nuisance parameter estimation. We further show under what conditions the same identification functional can be interpreted as targeting an effect among individuals exposed to a sufficiently infectious dose of the pathogen, thereby avoiding direct reliance on cross-world parameters and fundamentally untestable causal assumptions. Simulations show that the bounds are valid but often wide, and that the point estimators perform well when their identifying assumptions hold. In a reanalysis of a rotavirus vaccine trial, marginal and Doomed-stratum analyses showed little evidence of an effect on antibiotic use, whereas analyses targeting the Naturally Infected suggested a protective effect under principal ignorability-based assumptions.
\end{abstract}

\section{Introduction}

Vaccines primarily reduce the burden of infectious diseases by preventing infections entirely and/or lessening the severity of disease following an infection. Vaccines can further prevent or improve sequelae of infections and reduce the likelihood of onward transmission of a pathogen in infected individuals. To help establish the full value proposition of a vaccine, it is important to appropriately quantify vaccine effects on each of these outcomes.

In this work, we refer to the primary endpoint of interest as an \emph{infection}, with the understanding that in some settings the primary endpoint is clinical disease caused by an infection. While effects on infection are straightforward to characterize, for endpoints that occur \emph{after} infection, quantifying vaccine effects can be more challenging. A comparison of post-infection outcomes between infected vaccinated and infected unvaccinated individuals will not generally represent an appropriate causal effect of vaccines due to selection bias: individuals who become infected after receiving a vaccine may differ from those who become infected in absence of the vaccine. 

A common solution in the literature has been to consider vaccine effects in the principal stratum of individuals who would be infected irrespective of whether they received vaccine, referred to as the \emph{Doomed} principal stratum \citep{hudgens2006causal}. Such analyses often report estimates of the bounds on vaccine effects in this stratum, with estimation based on maximum likelihood or Bayesian estimators \citep{hudgens2006causal, zhou2016bayesian} and have been employed both in primary and secondary evaluation of vaccines. \citet{mehrotra2006comparison} compared various statistical tests for primary endpoints of clinical trials that combine infection and post-infection endpoints, where the post-infection endpoints are evaluated in the Doomed principal stratum. Similar methods were used to evaluate varicella vaccines to prevent and reduce the severity of herpes zoster infections \citep{oxman2005vaccine}.

Here, we argue that for many relevant post-infection endpoints, a comparison of outcomes in the Doomed principal stratum likely understates the vaccine's true benefit. By construction, this comparison excludes individuals whose post-infection outcomes were improved because the vaccine successfully prevented a primary infection. These individuals are precisely those who we expect to benefit most with respect to post-infection outcomes. As a result, the estimand omits an important component of the vaccine's effect, leading to an underestimation of its positive impact. 

Alternatively, a comparison of the average post-infection outcome between all vaccinated and unvaccinated individuals is also unlikely to well characterize vaccine benefit. The population-level difference in outcomes is often small, as only a limited proportion of participants are likely to experience an infection during the study period even without vaccine. Assuming only participants who would have been infected without vaccine are likely to benefit from a vaccine effect on their post-infection outcomes, the vast majority of the trial participants who remain uninfected (i.e., are Immune) consequently show little or no difference in their post-infection outcome. This dilution effect by the Immune, where the benefit among the relatively few is overshadowed by the lack of significant impact in the majority uninfected population, reduces the statistical power to detect a meaningful improvement in post-infection outcomes attributable to the vaccine.

We propose instead a different estimand for quantifying vaccine effects on post-infection endpoints: the effect of vaccine in the principal strata who would be infected in the absence of vaccine, a group of individuals that we term the \emph{Naturally Infected}. This group includes the Doomed stratum in addition to the Protected stratum, individuals for whom the vaccine prevents infection. We discuss various assumptions that can be used to identify these effects and sensitivity analyses to these assumptions. Finally, we discuss a different estimand in the context of infectious diseases,  that overcomes concerns that have been raised about the unobservable nature of such estimands in the literature \citep{stensrud2023conditional}. This allows us to overcome the reliance of principal stratum estimands on fundamentally untestable assumptions via the introduction of a \emph{necessary and sufficient exposure} to a pathogen and show that principal strata effects often align with effects in the subgroup of individuals who naturally encounter such exposure events. 

\section{Background}


We consider the setting of a randomized controlled vaccine trial where the observed data consist of $X$, a vector of information measured on trial participants at the time of enrollment, $Z$, a binary indicator of vaccine assignment (=1 if vaccine; =0 if placebo or control vaccine), $S$, a binary indicator of experiencing an infection or clinical disease caused by the pathogen of interest during the trial, and $Y$, a post-infection outcome measured at a fixed time following infection or at the end of the study follow-up period. We denote the vector of observed data by $O = (X, Z, S, Y)$ and assume that $O \sim P$ for a probability distribution $P$. We assume $P$ falls in a model that is nonparametric up to positivity assumptions described later.

We also consider counterfactual outcomes for the $n$ independent trial participants. Let $\bm{Z} = (Z_1, \dots, Z_n)$ and $\bm{S} = (S_1, \dots, S_n)$ denote the vaccine assignment vector and the infection status vector for all individuals in the trial, respectively. For each individual $i$, we let $S_i(\bm{z})$ denote the infection outcome that would be observed under an intervention that sets $\bm{Z} = \bm{z}$. Similarly, we let $Y_i(\bm{z}, \bm{s})$ denote the post-infection outcome for the $i$-th individual under an intervention that sets $\bm{Z} = \bm{z}$ and $\bm{S}=\bm{s}$. 

To simplify our exposition, we assume no interference and causal consistency (see \textbf{Assumptions 1-2} in Supplement A), which allows us to write potential outcomes as $S(z)$, $Y(z)$, and $Y(z,s)$. These assumptions are likely reasonable in Phase 3 studies, where participants represent a relatively small fraction of the at-risk population and the vaccine studied in the trial is not available to individuals outside of the study. 
\addtocounter{assump}{2}

We make the assumption that the vaccine exhibits either a null effect or biological benefit with respect to infection in all individuals. \begin{assump}
    \emph{Monotonicity}. For all individuals $S_i(1) \le S_i(0)$.
    \label{assumption:monotonicity}
\end{assump}
Monotonicity is reasonable for vaccines that have advanced beyond pre-clinical evaluation and into large-scale trials. However, this assumption may be violated if risk behavior differs between vaccinated and unvaccinated individuals. These concerns can be mitigated through blinding  \citep{stensrud2024distinguishing}. Under these assumptions, all individuals can be categorized according to the basic principal stratification shown in Table \ref{tab:basic_ps}. 

\begin{table}
    \centering
    \caption{Basic principal stratification under no interference and monotonicity. The \emph{Naturally Infected}$^*$ are individuals who would be infected under placebo/control vaccine, $S(0) = 1$.}
    \begin{tabular}{c|c}
      Basic principal stratum   &  Potential infection outcome  \\
                                &   $(S(0), S(1))$              \\
                                \hline
      Immune                    &   (0, 0) \\
      Protected$^*$             &   (1, 0) \\
      Doomed$^*$                &   (1, 1) 
    \end{tabular}
    \label{tab:basic_ps}
\end{table}

\subsection{Evaluation of post-infection endpoints}

Many previous analyses of post-infection outcomes have tended to focus on clinical settings in which the outcome is only well-defined among individuals who become infected \citep{hudgens2006causal, mehrotra2006comparison, halloran2012causal}. In this work, we refer to such endpoints  as \emph{infection-defined}. These analyses have focused on effects in the Doomed principal stratum such as \begin{align}
E\{ Y(1) - Y(0) \mid S(0) = 1, S(1) = 1 \} \ \mbox{or} \ \frac{E\{ Y(1) \mid S(0) = 1, S(1) = 1 \}}{E\{ Y(0) \mid S(0) = 1, S(1) = 1 \}} \ . \label{eq:doomed_only_estimands}
\end{align}
so that potential outcomes are well defined.

Other analyses have focused on post-infection endpoints that are only non-zero among individuals who become infected, which we refer to as \emph{infection-necessary} endpoints \citep{follmann2009chop}. 
Less attention has been given to post-infection endpoints that are well-defined and non-zero even for uninfected individuals. However, such endpoints are common in practice. We refer to them as \emph{infection-unnecessary}. For example, pediatric vaccines aimed at reducing viral diarrhea have the important secondary benefit of reducing the prescribing of antibiotics, an important benefit in controlling the emergence of antimicrobial resistance \citep{hall2022association}. Thus, suppose we are interested in quantifying the vaccine's effect on the probability a child is prescribed antibiotics over some time period. Children can be prescribed antibiotics for any number of indications that may or may not be related to the pathogen targeted by the vaccine making this an infection-unnecessary outcome.
For infection-unnecessary endpoints, estimands such as (\ref{eq:doomed_only_estimands}) do not capture the full effect of the vaccine on the post-infection endpoint, as they omit the benefit afforded to those who had their primary infection avoided by the vaccine.


On the other hand, a marginal effect on post-infection endpoints such as $E\{Y(1) - Y(0)\}$ may not be sensitive for detecting vaccine effects on post-infection outcomes. Marginal effects combine average post-infection outcomes in each of the three principal stratum, with each stratum weighted by its size in the population of interest, e.g., $E\{Y(1) - Y(0) \} = \sum_{(s_0, s_1) \in \mathcal{S}} E\{ Y(1) - Y(0) \mid S(0) = s_0, S(1) = s_1 \} P\{S(0) = s_0, S(1) = s_1\}$. 
These marginal effects may be small since (i) there may be little or no effect of the vaccine in the Immune stratum and (ii) this stratum may constitute a large fraction of the population. While trials often deliberately minimize the size of the Immune stratum by recruiting high-risk individuals, it is difficult to a-priori identify these individuals. 
This implies that marginal effects estimands on post-infection endpoints will often be unduly influenced by the Immune stratum and thus assume values close to null, leading to a lack of power to detect effects on post-infection outcomes, even for highly biologically effective vaccines.

A solution for directly quantifying a vaccine's effect on a post-infection endpoint is to consider an estimand that excludes the Immune stratum and instead focus on the other two principal strata. We term the union of the Protected and Doomed principal strata the \emph{Naturally Infected} since individuals in this group would be infected in absence of the vaccine. We propose to study estimands in the Naturally Infected principal stratum, such as 
\begin{align}
    E\{Y(1) - Y(0) \mid S(0) = 1\} \ \mbox{or} \ E\{Y(1) \mid S(0) = 1\}/E\{Y(0) \mid S(0) = 1\} \ . 
    \label{eq:naturally_infected_estimands}
\end{align}%
We expect such effects will be more sensitive for detecting effects on post-infection outcomes when compared to population-level vaccine effects. Moreover, we anticipate they will also be more sensitive than vaccine effects only in the Doomed stratum (\ref{eq:doomed_only_estimands}), as they incorporate the (potentially large) positive benefit of vaccination whereby infection is avoided entirely. 

In Supplement B, we compare our work to related works on principal strata effects.

\section{Identification of effects in the Naturally Infected}

We focus on randomized controlled trials, where the following assumptions are generally satisfied by design.
\begin{assump}
    \emph{Vaccine randomization.} We assume that $Y(z) \perp Z$ and $S(z) \perp Z$.
\end{assump}

\begin{assump}
    \emph{Positivity.} $P(S = 1, Z = 0) > 0$ and $P(S = 1, Z = 1) > 0$.
\end{assump}

To describe our identification results, it is helpful to introduce notation for key \emph{nuisance parameters}. We often require both marginal and covariate-conditional formulations of nuisance parameters, with the former distinguished by bar notation. For example, we define $\mu_{zs}(x) = E( Y \mid Z = z, S = s, X = x)$, $\bar{\mu}_{zs} = E( Y \mid Z = z, S = s)$, $\rho_z(x) = P(S = 1 \mid Z = z, X = x)$, $\bar{\rho}_z = P(S = 1 \mid Z = z)$. We further extend this notation to include subscripted dots to indicate marginalization over $S$. Thus, for example $\mu_{z\cdot}(x) = E\{ Y \mid Z = z, X = x \}$ and $\bar{\mu}_{z\cdot} = E\{ Y \mid Z = z \}$.

\subsection{Partial identification of effects}

Some components of Naturally Infected effects \eqref{eq:naturally_infected_estimands} are identified without further assumptions. 

\begin{thm}
    Under Assumptions 1-5, $E\{ Y(0) \mid S(0) = 1\}$ is identified by $ \psi_0$ where $$\psi_0 = \bar{\mu}_{01} = E \left[ \{ \rho_0(X) / \bar{\rho}_0 \} \mu_{01}(X) \right] \ . $$
    \label{thm:id_EY0}
\end{thm}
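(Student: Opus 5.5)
The plan is to use randomization (Assumption 4) to move the counterfactual conditioning event $\{S(0)=1\}$ onto the observed event $\{Z=0, S=1\}$, and then use consistency (Assumption 2) to replace counterfactuals by observed variables.

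\textbf{Independence step.} Assumption 4 as stated gives $Y(z)\perp Z$ and $S(z)\perp Z$ separately. What is actually needed is the joint statement $\{Y(0),S(0)\}\perp Z$, which holds by design in a randomized trial. I will read Assumption 4 as this joint independence, i.e.\ $Z$ independent of the full vector of potential outcomes. This is the only delicate point: with merely marginal independence the conditional expectation below need not factor. Treating that reading as given, the remaining steps are routine.

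\textbf{Main chain.} With joint independence and $P(S(0)=1)=P(S=1\mid Z=0)>0$ (positivity, Assumption 5), I would write
\begin{align*}
E\{Y(0)\mid S(0)=1\} &= \frac{E\{Y(0)S(0)\}}{P\{S(0)=1\}} = \frac{E\{Y(0)S(0)\mid Z=0\}}{P\{S(0)=1\mid Z=0\}}\\
&= \frac{E\{YS\mid Z=0\}}{P(S=1\mid Z=0)} = E(Y\mid Z=0,S=1)=\bar\mu_{01}.
\end{align*}
The second equality uses joint independence. The third uses consistency, since on $\{Z=0\}$ we have $S=S(0)$ and $Y=Y(0)$. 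Here $Y(0)$ is the composite $Y(0,S(0))$, which equals $Y$ on $Z=0$ under Assumptions 1--2. Monotonicity (Assumption 3) is not needed for this part.

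\textbf{Covariate-weighted form.} To obtain the second expression, apply iterated expectations over $X$ given $Z=0$ and $S=1$:
\[
\bar\mu_{01}=\int \mu_{01}(x)\,dP(x\mid Z=0,S=1).
\]
By Bayes' rule,
\[
dP(x\mid Z=0,S=1)=\frac{\rho_0(x)\,dP(x\mid Z=0)}{\bar\rho_0}.
\]
Because $X$ is a baseline covariate, randomization gives $X\perp Z$, so $dP(x\mid Z=0)=dP(x)$. This yields
\[
\bar\mu_{01}=E\bigl[\{\rho_0(X)/\bar\rho_0\}\,\mu_{01}(X)\bigr].
\]
Positivity guarantees $\bar\rho_0>0$. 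Since $\mu_{01}(x)$ only needs to be defined where $\rho_0(x)>0$, and the weight $\rho_0(x)$ vanishes elsewhere, the integrand is well defined.
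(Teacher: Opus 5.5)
Your proposal is correct and follows essentially the same route as the paper: randomization and consistency reduce $E\{Y(0)\mid S(0)=1\}$ to $E(Y\mid Z=0,S=1)$, and iterated expectations over $X$ with the Bayes weight $\rho_0(X)/\bar\rho_0$ then give the covariate-weighted form. You are in fact more careful than the paper's proof, which uses without comment both the joint independence $\{Y(0),S(0)\}\perp Z$ and the independence $X\perp Z$ (the latter is what lets the outer expectation be taken over the marginal law of $X$).
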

An expression of $\psi_0$ using inverse probability weighting is in Supplement D and proof of the theorem in Supplement K.1. We note that the second formulation of the identifying parameter in Theorem \ref{thm:id_EY0} incorporates covariates $X$, which is useful for describing covariate-adjusted estimators later.

Identification of $E\{ Y(1) \mid S(0) = 1\}$ is more challenging owing to the cross-world nature of the parameter. Without further assumptions, it is not possible to point identify this quantity. The challenge in identification is clarified by noting that \begin{equation}
\begin{aligned}
E\{ Y(1) \mid S(0) = 1\} &= E\{ Y(1) \mid S(0) = 1, S(1) = 1\} P\{S(1) = 1 \mid S(0) = 1\} \ + \\
&\hspace{2em}E\{ Y(1) \mid S(0) = 1, S(1) = 0\} [ 1 - P\{S(1) = 1 \mid S(0) = 1\} ] \ ,
\end{aligned}
\label{eq:nat_inf_mean_under_vax} 
\end{equation}
indicating that to identify $E\{ Y(1) \mid S(0) = 1\}$ requires identifying (i) the average post-infection outcome under vaccine in the Doomed stratum; (ii) the relative fraction of the Naturally Infected that are Doomed versus Protected; and (iii) the average post-infection outcome under vaccine in the Protected stratum. Components (i) and (ii) are identifiable without further assumption.
\begin{thm}
Under Assumptions 1-5, $E\{ Y(1) \mid S(0) = 1, S(1) = 1\} = \bar{\mu}_{11}$. We also have that $P\{ S(0) = 1, S(1) = 1 \} = \bar{\rho}_1$, $P\{ S(0) = 0, S(1) = 0 \} = 1 - \bar{\rho}_0$, and $P\{ S(0) = 1, S(1) = 0 \} = \bar{\rho}_0 - \bar{\rho_1}$ and thus, $P\{ S(1) = 1 \mid S(0) = 1 \} = \bar{\rho}_1 / \bar{\rho}_0$.
\label{thm:partial_id}
\end{thm}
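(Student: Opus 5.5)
The plan is to reduce each claim to observed-data quantities in three moves. First, monotonicity (Assumption~\ref{assumption:monotonicity}) collapses events. Second, randomization lets us condition on $Z$. Third, consistency replaces potential outcomes with observed ones.

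\textbf{Step 1: the strata proportions.} By monotonicity, $S(1)=1$ implies $S(0)=1$, so the event $\{S(0)=1,S(1)=1\}$ equals $\{S(1)=1\}$. Hence
\[
P\{S(0)=1,S(1)=1\}=P\{S(1)=1\}=P\{S(1)=1\mid Z=1\}=P(S=1\mid Z=1)=\bar\rho_1 .
\]
The second equality uses $S(z)\perp Z$ and the third uses consistency. By the same argument, $\{S(0)=0,S(1)=0\}=\{S(0)=0\}$, which gives $1-\bar\rho_0$. The Protected proportion is then obtained by subtraction: the three strata in Table~\ref{tab:basic_ps} partition the population, and $P\{S(0)=1\}=\bar\rho_0$, so the Protected stratum has probability $\bar\rho_0-\bar\rho_1$. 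The conditional probability follows as
\[
P\{S(1)=1\mid S(0)=1\}=\frac{P\{S(0)=1,S(1)=1\}}{P\{S(0)=1\}}=\frac{\bar\rho_1}{\bar\rho_0}.
\]
The denominator is well defined because positivity ensures $\bar\rho_0>0$.

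\textbf{Step 2: the Doomed mean under vaccine.} Using the same event identity,
\[
E\{Y(1)\mid S(0)=1,S(1)=1\}=E\{Y(1)\mid S(1)=1\}.
\]
This equals $E\{Y(1)S(1)\}/P\{S(1)=1\}$. Randomization then gives $E\{Y(1)S(1)\mid Z=1\}/P\{S(1)=1\mid Z=1\}$, and consistency turns this into $E(YS\mid Z=1)/P(S=1\mid Z=1)=\bar\mu_{11}$. Positivity guarantees $P(S=1,Z=1)>0$, so the conditioning is well defined.

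\textbf{Main obstacle.} The randomization assumption as written states $Y(z)\perp Z$ and $S(z)\perp Z$ separately. Step 2, however, needs the joint independence $(Y(1),S(1))\perp Z$. I would read Assumption~3 as joint independence, which is what randomization delivers by design, and state this explicitly. I would also note that $Y(1)$ means $Y(1,S(1))$ under the consistency assumptions of Supplement A. Everything else is routine bookkeeping.
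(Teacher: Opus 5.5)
Your proposal is correct and follows essentially the same route as the paper's proof. Monotonicity collapses $\{S(0)=1,S(1)=1\}$ to $\{S(1)=1\}$ and $\{S(0)=0,S(1)=0\}$ to $\{S(0)=0\}$; randomization and consistency then give $\bar\mu_{11}$, $\bar\rho_1$, and $1-\bar\rho_0$; and the Protected share follows by subtraction (you subtract from $P\{S(0)=1\}=\bar\rho_0$ rather than from $1$, which is equivalent). Your remark that the step $E\{Y(1)\mid S(1)=1\}=E\{Y(1)\mid Z=1,S(1)=1\}$ needs joint independence of $(Y(1),S(1))$ and $Z$, rather than the separate independences stated in Assumption 3, is valid; the paper's proof uses this joint independence implicitly.
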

See Supplement K.2 for proof and intuition. Thus, under Assumptions 1-5, \begin{equation}
E\{Y(1) \mid S(0) = 1\} = \bar{\mu}_{11} \frac{\bar{\rho}_1}{\bar{\rho}_0} + E\{ Y(1) \mid S(0) = 1, S(1) = 0 \} \left\{ 1 -  \frac{\bar{\rho}_1}{\bar{\rho}_0} \right\} \ , \label{eqn:partial_id_result}
\end{equation}
indicating that the only component remaining to identify is the average post-infection outcome under vaccine in the Protected stratum. This quantity cannot be identified without further assumptions, though it can be bounded.

\subsection{Identification of bounds}

We consider bounds for a continuous-valued post-infection outcome, such that ties are not possible. The extension allowing ties is included in Supplement D.1. To derive bounds on the average of $Y(1)$ in the Protected, it is helpful to consider the observed uninfected vaccinated participants are a mixture of Immune and Protected individuals. Without further assumptions we do not have any additional knowledge as to which of the vaccinated uninfected individuals are Protected versus Immune. Nevertheless, we can bound the average post-infection outcome under vaccine in the Protected by considering truncated means at the extremes of the distribution of $Y$ in the vaccine uninfected individuals. Specifically, the size of the Protected stratum is identified by $\bar{\rho}_0 - \bar{\rho}_1$ and thus the size of the Protected stratum as a fraction of the vaccinated uninfected participants is given by $q = (\bar{\rho}_0 - \bar{\rho}_1) / (1 - \bar{\rho}_1)$. Define $Y_{\ell}$ and $Y_u$ as respectively, the $q$-th and $(1-q)$-th quantiles of the distribution of $Y \mid Z = 1, S = 0$. Let $\bar{\mu}_{10, \ell} = E(Y \mid Z = 1, S = 0, Y < Y_{\ell})$ and $\bar{\mu}_{10, u} = E(Y \mid Z = 1, S = 0, Y > Y_{u})$.

\begin{thm}
Under Assumptions 1-5, \[
E(Y \mid Z = 1, S = 0, Y < Y_{\ell}) \le E\{ Y(1) \mid S(0) = 1, S(1) = 0\} \le E(Y \mid Z = 1, S = 0, Y > Y_{u}) \ ,
\]
and thus, $\psi_{1,\ell} \le E\{ Y(1) \mid S(0) = 1 \} \le \psi_{1,u}$, where \begin{align*}
    \psi_{1,\ell} = \bar{\mu}_{11} \frac{\bar{\rho}_1}{\bar{\rho}_0}
    + \bar{\mu}_{10, \ell} \left( 1 - \frac{\bar{\rho}_1}{\bar{\rho}_0} \right) \ \ , \ \mbox{and} \ \ 
    \psi_{1,u} = \bar{\mu}_{11} \frac{\bar{\rho}_1}{\bar{\rho}_0}
    + \bar{\mu}_{10, u} \left( 1 - \frac{\bar{\rho}_1}{\bar{\rho}_0} \right) \ . 
\end{align*}
\label{thm:bounds}
\end{thm}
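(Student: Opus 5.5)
The plan is the standard Zhang--Rubin / Lee trimming argument applied to the vaccinated uninfected. Under monotonicity (Assumption~\ref{assumption:monotonicity}), the event $\{S(1)=0\}$ is the disjoint union of the Immune and Protected strata. By randomization and consistency,
\[
P(Y \le y \mid Z=1, S=0) = P\{Y(1)\le y \mid S(1)=0\}.
\]
Write $F$ for this distribution function. It decomposes as
\[
F = q\,F_P + (1-q)\,F_I,
\]
where $F_P$ and $F_I$ are the distribution functions of $Y(1)$ in the Protected and Immune strata. The mixing weight is
\[
q = P\{S(0)=1 \mid S(1)=0\} = \frac{\bar\rho_0-\bar\rho_1}{1-\bar\rho_1},
\]
which follows from Theorem~\ref{thm:partial_id}. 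Positivity ensures $\bar\rho_0>0$. We may also assume $\bar\rho_1<1$; otherwise $q=0$, the Protected stratum is empty, and the claim is trivial. The target $E\{Y(1)\mid S(0)=1,S(1)=0\}$ is exactly the mean of $F_P$.

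The key step is an extremal statement. Among all distributions $G$ with $qG \le F$ (as measures), so that $G$ is a sub-mixture component of mass fraction $q$, the mean of $G$ is minimized by $F$ conditioned on $\{Y<Y_\ell\}$ and maximized by $F$ conditioned on $\{Y>Y_u\}$. Because $Y$ is continuous with no ties, $F(Y_\ell)=q$ and $1-F(Y_u)=q$, so these truncated distributions are admissible choices of $G$.

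To prove the lower bound, I would compare an arbitrary admissible $G$ with the truncated measure $G_\ell = F(\cdot\cap\{y<Y_\ell\})/q$. Both $qG$ and $qG_\ell$ are dominated by $F$ and have total mass $q$. Hence $qG$ puts no more mass than $qG_\ell$ below $Y_\ell$, and any deficit there is made up by mass placed at or above $Y_\ell$. Formally, one writes
\[
\int y\,dG - \int y\,dG_\ell = \int (y - Y_\ell)\,d(G-G_\ell).
\]
This works because both are probability measures, so $\int Y_\ell\,d(G-G_\ell)=0$. The integrand is then nonnegative wherever $G-G_\ell\ge 0$, namely on $[Y_\ell,\infty)$, and nonpositive wherever $G-G_\ell\le 0$, namely on $(-\infty,Y_\ell)$. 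So the difference is $\ge 0$. The upper bound follows symmetrically, or by applying the lower bound to $-Y$.

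Applying this with $G=F_P$ gives the displayed sandwich. Substituting into \eqref{eqn:partial_id_result} then yields $\psi_{1,\ell}\le E\{Y(1)\mid S(0)=1\}\le\psi_{1,u}$, since the weight $1-\bar\rho_1/\bar\rho_0$ is nonnegative under monotonicity. This weight also uses $\bar\mu_{11}$ from Theorem~\ref{thm:partial_id}.

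The main obstacle is making the sub-mixture (stochastic-dominance) argument rigorous and handling the integrability of $Y$. I would first state the sign-decomposition argument for measures with densities, which applies because $Y$ is continuous. If needed, I would instead use the quantile representation: $F_P$ has quantile-weighted mean $\int_0^1 F^{-1}(u)\,w(u)\,du$ with $0\le w\le 1/q$ and $\int w=1$, and this is minimized by $w=\mathbf 1\{u<q\}/q$ by a rearrangement argument.
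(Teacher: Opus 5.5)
Your proposal is correct and follows the paper's route: the vaccinated uninfected are a mixture of Protected and Immune individuals with known Protected fraction $q=(\bar{\rho}_0-\bar{\rho}_1)/(1-\bar{\rho}_1)$, so the Protected mean lies between the lower and upper $q$-trimmed means, and substituting into the decomposition of $E\{Y(1)\mid S(0)=1\}$ (whose weight $1-\bar{\rho}_1/\bar{\rho}_0$ is nonnegative) gives the bounds. The paper only asserts that the trimmed means are extremal, whereas your signed-measure comparison $\int (y-Y_\ell)\,d(G-G_\ell)\ge 0$ for any $G$ with $qG\le F$ actually proves that step.
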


For a proof, see Supplement K.3. A relevant special case of Theorem \ref{thm:bounds} is when the post-infection outcome is infection-necessary, as in this case $\bar{\mu}_{10} = 0$ and thus the average post-infection outcome under vaccine in the Naturally Infected is point identified, $E\{ Y(1) \mid S(0) = 1 \} = \bar{\mu}_{11} \bar{\rho}_1/\bar{\rho}_0$. This result establishes a link between Naturally Infected effects and the \emph{chop lump test} proposed for studying the effect of vaccines on post-infection outcomes \citep{follmann2009chop}. It is straightforward to show that under monotonicity, the test statistic used in that approach reduces exactly to $\bar{\mu}_{11,n} \bar{\rho}_{1,n} / \bar{\rho}_{0,n} - \bar{\mu}_{01,n}$, a plug-in estimate of the additive Naturally Infected effect.
Thus, our proposal for bounds not only gives a new causal interpretation to this existing test, but also appropriately generalizes the procedure to post-infection outcomes that can be non-zero in uninfected individuals.

Theorem \ref{thm:bounds} also holds in any particular covariate stratum, which motivates covariate-adjusted bounds. Such bounds can sometimes be sharper than unadjusted bounds \citep{long2013sharpening}. In Supplement D.2, we propose adjusted bounds and explore the conditions under which bounds are sharpened using covariates.

\subsection{Identification using an exclusion restriction}

We have shown that point identification of Naturally Infected effects is possible in the special case of an infection-necessary endpoint. Such endpoints are specific examples of a broader class of endpoints that satisfy an \emph{exclusion restriction} \citep{angrist1996identification}. We can also make an exclusion restriction assumption for infection-unnecessary endpoints that allows point identification in those settings.
\begin{assump}
    The post-infection outcome $Y$ satisfies a strong exclusion restriction with respect to $S$ such that $P\{Y(1, s = 0) - Y(0, s = 0) = 0\} = 1$. \label{assmp:exclusion_restriction}
\end{assump}
It is also possible to frame the exclusion restriction in a stochastic way, assuming that $E\{ Y(1) \mid S(0) = 0 \} = E\{ Y(0) \mid S(0) = 0 \}$ \citep{nordland2024estimation}; however, the distinction is not crucial here. Broadly, the exclusion restriction assumption stipulates that the vaccine cannot have a causal effect on post-infection outcomes in absence of an infection. This is likely to be reasonable in settings where the only mechanism by which vaccine affects the post-infection outcome is through preventing infection or reducing its severity.

\begin{thm}
Under Assumptions 1-6 
\begin{align*}
    E\{ Y(1) \mid S(0) = 1, S(1) = 0 \} = \frac{\bar{\mu}_{10} (1 - \bar{\rho}_1) - \bar{\mu}_{00} ( 1 - \bar{\rho}_0 ) }{\bar{\rho}_0 - \bar{\rho}_1 } \ ,
\end{align*}
and thus we have that $E\{Y(1) \mid S(0) = 1 \}$ is identified by $\psi_{1,\text{ER}}$, where \begin{align*}
 \psi_{1,\text{ER}} &= 
 \frac{\bar{\mu}_{1\cdot} - \bar{\mu}_{00} (1 - \bar{\rho}_0)}{\bar{\rho}_0} \ . 
\end{align*}
\label{thm:point_ID}
\end{thm}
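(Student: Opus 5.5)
The plan is to decompose the vaccinated uninfected arm into its Immune and Protected components. We then use the exclusion restriction to transfer the Immune component's mean from the placebo arm.

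\textbf{Step 1: the vaccine arm's uninfected mean.} By consistency and randomization (Assumptions 1--2, 4),
\[
\bar{\mu}_{10}(1-\bar{\rho}_1) = E\{Y(1)\,\mathbb{1}(S(1)=0)\}.
\]
Under monotonicity, $\{S(1)=0\}$ is the disjoint union of the Immune stratum $\{S(0)=0\}$ and the Protected stratum $\{S(0)=1,S(1)=0\}$. Hence
\[
\bar{\mu}_{10}(1-\bar{\rho}_1) = E\{Y(1)\mid S(0)=0\}P\{S(0)=0\} + E\{Y(1)\mid S(0)=1,S(1)=0\}P\{S(0)=1,S(1)=0\}.
\]

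\textbf{Step 2: the placebo arm's uninfected mean.} Similarly, $\bar{\mu}_{00}(1-\bar{\rho}_0) = E\{Y(0)\mid S(0)=0\}P\{S(0)=0\}$, since $\{S(0)=0\}$ is exactly the Immune stratum.

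\textbf{Step 3: the exclusion restriction.} In the Immune stratum, $S(1)=S(0)=0$. By consistency, $Y(z)=Y(z,S(z))=Y(z,0)$ for $z=0,1$. Assumption 6 then gives $Y(1)=Y(0)$ almost surely on this stratum, so
\[
E\{Y(1)\mid S(0)=0\}=E\{Y(0)\mid S(0)=0\}.
\]
This step requires that the composite potential outcome $Y(z)$ equals the nested $Y(z,S(z))$. That follows from the consistency assumption in Supplement A, and we will state it explicitly. It is the only delicate point; the rest is arithmetic.

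\textbf{Step 4: solving for the Protected mean.} Subtracting Step 2 from Step 1 and dividing by $P\{S(0)=1,S(1)=0\}=\bar{\rho}_0-\bar{\rho}_1$ from Theorem \ref{thm:partial_id} yields the first display. This division assumes $\bar{\rho}_0>\bar{\rho}_1$. If the Protected stratum is empty, its weight in \eqref{eqn:partial_id_result} is zero and the stated formula for $\psi_{1,\text{ER}}$ still holds directly.

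\textbf{Step 5: the formula for $\psi_{1,\text{ER}}$.} Substitute the Protected mean into \eqref{eqn:partial_id_result}, writing $1-\bar{\rho}_1/\bar{\rho}_0=(\bar{\rho}_0-\bar{\rho}_1)/\bar{\rho}_0$. The factor $\bar{\rho}_0-\bar{\rho}_1$ cancels, giving
\[
\frac{\bar{\mu}_{11}\bar{\rho}_1 + \bar{\mu}_{10}(1-\bar{\rho}_1) - \bar{\mu}_{00}(1-\bar{\rho}_0)}{\bar{\rho}_0}.
\]
Finally, the law of total expectation gives $\bar{\mu}_{11}\bar{\rho}_1+\bar{\mu}_{10}(1-\bar{\rho}_1)=\bar{\mu}_{1\cdot}$, which yields the stated expression for $\psi_{1,\text{ER}}$.
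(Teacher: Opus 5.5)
Your proposal is correct and follows essentially the paper's argument. The paper decomposes all of $E\{Y(1)\}=\bar{\mu}_{1\cdot}$ over the three principal strata and cancels the Doomed term using Theorem~\ref{thm:partial_id}, while you work only with the $S(1)=0$ piece; the key step is the same in both: the exclusion restriction, applied through $Y(z)=Y(z,S(z))$, identifies the vaccine-arm Immune mean by $\bar{\mu}_{00}$, and one then solves for the Protected mean (your handling of the degenerate case $\bar{\rho}_0=\bar{\rho}_1$ is a small addition the paper omits).
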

Proof of and intuition for this result are included in Supplement K.4. 

\subsection{Identification using partial principal ignorability}

An alternative approach to identification is to use a form of partial principal ignorability. 

\begin{assump}
\emph{Partial principal ignorability}: $S(0) \perp Y(1) \mid X, S(1)=0$. \label{assmp:weak_principal_ignorability}
\end{assump}

\begin{assump}
\emph{Positivity}: For a $\delta_2 > 0$, $P\{ \delta_2 < P(S = 1 \mid Z = 1, X) \le 1 - \delta_2 \mid  Z = 0, S = 1 \} = 1$. \label{assmp:positivity_for_pi}
\end{assump}

Assumption 7 stipulates that after adjustment for a selected set of variables, there is no difference between individuals in the Immune versus Protected principal strata in terms of their post-infection outcomes. This cross-world assumption would be satisfied if $X$ included all common causes of the infection endpoint under placebo and the post-infection outcome under vaccine. Assumption 8 (positivity) ensures that the identifying functional is well-defined for each distribution in our model for the observed data.

\begin{thm}
    Under Assumptions 1-5 and 7-8, the $X$-conditional mean in the Protected principal stratum is identified as $E\{ Y(1) \mid S(0) = 1, S(1) = 0, X \} =  \mu_{10}(X)$, and thus $E\{Y(1) \mid S(0) = 1 \}$ is identified by $\psi_{1,\text{PI}}$ where
\begin{align}
 \psi_{1,\text{PI}} &= E\bigg( \frac{\rho_0(X)}{\bar{\rho}_0} \left[ \mu_{11}(X) \frac{\rho_1(X)}{\rho_0(X)} +  \mu_{10}(X) \left\{ 1 - \frac{\rho_1(X)}{\rho_0(X)} \right\} \right] \bigg) \ . \label{eqn:id_psi1}
\end{align}
\label{thm:id_EY1}
\end{thm}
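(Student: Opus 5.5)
The plan is to work conditionally on $X$ throughout and marginalize only at the end. The first step is a covariate-conditional version of Theorem \ref{thm:partial_id}, which holds stratum by stratum. Randomization implies $(Y(z),S(z))\perp Z\mid X$. Strictly this needs joint independence, which holds by design in a trial; I would state this reading of Assumption 4 explicitly. Monotonicity (Assumption 3) gives $\{S(1)=1\}\subseteq\{S(0)=1\}$. Consequently $P\{S(0)=1,S(1)=1\mid X\}=\rho_1(X)$ and $P\{S(0)=1\mid X\}=\rho_0(X)$, so $P\{S(0)=1,S(1)=0\mid X\}=\rho_0(X)-\rho_1(X)$. In the same way, $E\{Y(1)\mid S(0)=1,S(1)=1,X\}=E\{Y(1)\mid S(1)=1,X\}=\mu_{11}(X)$, using consistency (Assumption 2) and randomization.

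The key step is identifying the Protected conditional mean. Assumption 7 states that, given $X$ and $S(1)=0$, the variable $S(0)$ is independent of $Y(1)$. Hence
\[
E\{Y(1)\mid S(0)=1,S(1)=0,X\}=E\{Y(1)\mid S(1)=0,X\}=E\{Y(1)\mid Z=1,S(1)=0,X\}=\mu_{10}(X),
\]
where the last two equalities use randomization and then consistency. The conditioning event must have positive probability, i.e.\ $\rho_0(X)>\rho_1(X)$. Where instead $\rho_0(X)=\rho_1(X)$, the Protected stratum is empty at that $X$ and its term carries zero weight, so the mean there can be defined arbitrarily. Assumption 8 guarantees $1-\rho_1(X)\ge\delta_2>0$ on the support of $X$ among $\{Z=0,S=1\}$, so $\mu_{10}(X)$ is well defined there. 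Assumption 8 also gives $\rho_1(X)>\delta_2>0$ on the same set, so $\mu_{11}(X)$ is well defined as well.

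Next, I would decompose $E\{Y(1)\mid S(0)=1,X\}$ as in \eqref{eq:nat_inf_mean_under_vax}, applied conditionally on $X$ and with weights $\rho_1(X)/\rho_0(X)$ and $1-\rho_1(X)/\rho_0(X)$. This gives
\[
\mu_{11}(X)\frac{\rho_1(X)}{\rho_0(X)}+\mu_{10}(X)\left\{1-\frac{\rho_1(X)}{\rho_0(X)}\right\}.
\]
Finally, I would marginalize over $X$ given $S(0)=1$. By Bayes' rule, the distribution of $X$ given $S(0)=1$ has density $\rho_0(x)/\bar\rho_0$ relative to the marginal distribution of $X$. This is the same reweighting used in Theorem \ref{thm:id_EY0}, and it yields \eqref{eqn:id_psi1}.

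The main obstacle is support. The ratio $\rho_1/\rho_0$ and the conditional means must be well defined wherever the weight $\rho_0(X)$ is positive. I would handle this by noting that the expression is effectively $E[\rho_1(X)\mu_{11}(X)+\{\rho_0(X)-\rho_1(X)\}\mu_{10}(X)]/\bar\rho_0$, which only requires the conditional means on the support of $X$ given $S=1,Z=0$. Assumption 8, together with $\bar\rho_0>0$ from Assumption 5, supplies exactly this.
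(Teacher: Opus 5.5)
Your proposal is correct and follows the paper's proof. The key step is the same chain $E\{Y(1)\mid S(0)=1,S(1)=0,X\}=E\{Y(1)\mid S(1)=0,X\}=E\{Y(1)\mid Z=1,S(1)=0,X\}=\mu_{10}(X)$, justified by partial principal ignorability, randomization, and consistency, with Assumption 8 ensuring $\mu_{10}$ is well defined wherever $\rho_0(X)>0$. You also make explicit what the paper leaves implicit: the $X$-conditional stratum sizes and Doomed mean, the $\rho_0(x)/\bar\rho_0$ reweighting used to marginalize, and the need to read Assumption 4 jointly and conditionally on $X$.
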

A proof and comparison to other forms of principal ignorability is in Supplement K.5. We can perform sensitivity analysis for assessing robustness to violations of partial principal ignorability (see Supplement E). Supplement F describes specific trial design considerations for weighing the relative plausibility of exclusion restrictions and partial principal ignorability.

\section{Estimation}

\subsection{Bounds}

To estimate the bounds, we compute estimates $\bar{\rho}_{z,n} = \sum_{i=1}^n S_i I(Z_i = z) / \sum_{i=1}^n I(Z_i = z)$ of $\bar{\rho}_z$, for $z = 0, 1$ and estimate $\bar{\mu}_{11,n} = \sum_{i=1}^n Y_i S_i Z_i / \sum_{i=1}^n S_i Z_i$. We then compute $q_n = (\bar{\rho}_{0,n} - \bar{\rho}_{1,n}) / (1 - \bar{\rho}_{1,n})$, which is used to calculate $Y_{\ell,n}$ and $Y_{u,n}$, the empirical $q_n$-th and $1-q_n$-th quantiles of the distribution of $Y$ given $Z = 1, S = 0$. An estimate of $\bar{\mu}_{10, \ell}$ can then be computed as $\bar{\mu}_{10, \ell, n} = \sum_{i=1}^n Y_i Z_i (1 - S_i) I(Y_i < Y_{\ell, n}) / \sum_{i=1}^n Z_i (1 - S_i) I(Y_i < Y_{\ell, n})$. The estimate $\bar{\mu}_{10, u, n} = \sum_{i=1}^n Y_i Z_i (1 - S_i) I(Y_i > Y_{u, n}) / \sum_{i=1}^n Z_i (1 - S_i) I(Y_i > Y_{u, n})$ can be similarly calculated. The final estimates of the bounds are thus $\ell_n = \bar{\mu}_{11,n} \bar{\rho}_{1,n}/\bar{\rho}_{0,n} + \bar{\mu}_{10, \ell, n} \left( 1 -\bar{\rho}_{1,n}/\bar{\rho}_{0,n} \right)$ and 
$u_n = \bar{\mu}_{11,n} \bar{\rho}_{1,n}/\bar{\rho}_{0,n} + \bar{\mu}_{10, u, n} \left( 1 - \bar{\rho}_{1,n}/\bar{\rho}_{0,n} \right).$
Confidence intervals for $\ell_n$ can be derived using the nonparametric bootstrap.

\subsection{Efficiency theory for point identification results}

We focus on nonparametric efficient estimation of the identifying functionals $\psi_0$, $\psi_{1,\text{ER}}$, and $\psi_{1,\text{PI}}$. Although some parameters (e.g., $\psi_0$) can be identified without adjustment for covariates $X$, estimators that ignore covariates are generally inefficient \citep{benkeser2021improving}. We focus on one-step estimators that leverage these covariates to achieve efficiency. Not only are these estimators efficient, they are also doubly/multiply robust, as we highlight in our results below. Singly robust estimators are described in the Supplement C.

A key step in deriving one-step estimators is to derive a gradient of the identifying functional. This gradient can be used to debias plug-in estimators, thereby achieve asymptotic efficiency and often robustness \citep{pfanzagl1982contributions}. 
Thus, for each point identification result, we describe (i) how to construct a plug-in estimator and the form of a gradient for the parameter of interest, thereby enabling construction of a one-step estimator. Results pertaining to the large sample behavior of the one-step estimator are stated here with details on regularity conditions included throughout Supplement K. Briefly, one-step estimators require estimators of certain \emph{nuisance parameters}. Asymptomatic behavior of the one-step estimators is generally dictated by appropriate boundedness and convergence of estimates of nuisance parameters to their respective true limiting values \citep{hines2022demystifying}. 

\subsubsection{Estimation of Naturally Infected outcomes under placebo}

A plug-in estimator of $\psi_0$ can be generated by first estimating $\mu_{01}$, e.g., by fitting a regression of $Y$ on $X$ in the subset of data with $Z = 0, S = 1$, resulting in estimate $\mu_{01,n}$. Next, a regression of $S$ onto $X$ is fit in the subset of data with $Z = 0$ to generate an estimate $\rho_{0,n}$ of $\rho_0$, which is used to estimate $\bar{\rho}_0$, by defining $\bar{\rho}_{0,n} = n^{-1} \sum_{i=1}^n \rho_{0,n}(X_i)$. The empirical distribution of $X$ is used to estimate the distribution of $X$, resulting in plug-in estimator $\psi_{0,n} = n^{-1} \sum_{i=1}^n \{ \rho_0(X_i) / \bar{\rho}_{0,n} \} \mu_{01,n}(X_i).$

Next, we introduce a gradient that can be used to define the one-step estimator. Define $\pi_z(x) = P(Z = z \mid X = x)$ and $\tilde{\psi}_0(x) = \rho_0(x) / \bar{\rho}_0 \mu_{01}(x)$. 
\begin{thm}
The efficient gradient for regular estimators of $\psi_0$ in a model for the observed data that is nonparametric up to positivity (Assumption 5) is $\Phi_0$, where \begin{align*}
    \Phi_{0}(O_i) &= \frac{(1 - Z_i)}{\pi_0(X_i)} \frac{S_i}{\bar{\rho}_0} \{ Y_i - \mu_{01}(X_i) \} + \frac{\{\mu_{01}(X_i) - \psi_0\}}{\bar{\rho}_0}  \frac{(1 - Z_i)}{\pi_0(X_i)} \{ S_i - \rho_0(X_i) \} \\
& \hspace{2em} - \frac{\psi_0}{\bar{\rho}_0}\{ \rho_0(X_i) - \bar{\rho}_0 \} + \tilde{\psi}_0(X_i) - \psi_0 \ .
\end{align*}
\end{thm}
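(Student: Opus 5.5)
To derive $\Phi_0$ we use the Gateaux-derivative (pathwise derivative) approach. Write the observed-data density as $p(o) = p(x)\,\pi_z(x)\,p(s \mid z, x)\,p(y \mid z, s, x)$. Since the model is nonparametric up to positivity, the tangent space is saturated, so the unique gradient is the efficient one. The functional is the ratio
$$\psi_0 = N/D, \qquad N = E\{\rho_0(X)\mu_{01}(X)\}, \qquad D = \bar{\rho}_0 = E\{\rho_0(X)\}.$$
The plan is to compute the influence functions $\Phi_N$ and $\Phi_D$ of $N$ and $D$ separately. The delta method then gives $\Phi_0 = (\Phi_N - \psi_0 \Phi_D)/\bar{\rho}_0$.

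For $D$, this is the standard treatment-specific mean of the binary outcome $S$ under $Z=0$. Its influence function is
$$\Phi_D = \frac{1-Z}{\pi_0(X)}\{S - \rho_0(X)\} + \rho_0(X) - \bar{\rho}_0.$$

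For $N$, we perturb each factor of the likelihood in turn.
\begin{itemize}
\item Perturbing $p(x)$ contributes $\rho_0(X)\mu_{01}(X) - N$.
\item Perturbing $p(s \mid z=0,x)$ contributes $\frac{1-Z}{\pi_0(X)}\mu_{01}(X)\{S-\rho_0(X)\}$. This holds because $\rho_0(x)$ is a conditional mean of $S$ given $Z=0, X=x$, so its derivative along a score is $E[(1-Z)\{S-\rho_0\}\,\cdot \mid x]/\pi_0(x)$.
\item Perturbing $p(y \mid z=0,s=1,x)$ contributes $\frac{(1-Z)S}{\pi_0(X)\rho_0(X)}\{Y-\mu_{01}(X)\}\cdot \rho_0(X) = \frac{(1-Z)S}{\pi_0(X)}\{Y-\mu_{01}(X)\}$. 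The factor $\rho_0(X)$ cancels the conditioning probability $P(S=1 \mid Z=0, X)$.
\end{itemize}
Summing these three terms gives $\Phi_N$.

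Next we combine and regroup. Dividing $\Phi_N$ by $\bar{\rho}_0$ yields the first term of $\Phi_0$ directly. The $S - \rho_0$ pieces from $\Phi_N/\bar{\rho}_0$ and $-\psi_0\Phi_D/\bar{\rho}_0$ combine into $\{\mu_{01}(X)-\psi_0\}\frac{1-Z}{\bar{\rho}_0\pi_0(X)}\{S-\rho_0(X)\}$, which is the second term. For the marginal-$X$ pieces, we have
$$\frac{\rho_0\mu_{01} - N}{\bar{\rho}_0} - \frac{\psi_0(\rho_0 - \bar{\rho}_0)}{\bar{\rho}_0} = \tilde{\psi}_0(X) - \psi_0 - \frac{\psi_0}{\bar{\rho}_0}\{\rho_0(X) - \bar{\rho}_0\},$$
using $N/\bar{\rho}_0 = \psi_0$. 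This matches the last two terms of the stated $\Phi_0$.

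The main bookkeeping obstacle is getting the correct weighting in the outcome-regression term, namely the cancellation of $\rho_0(X)$. This step is where the argument is most error-prone. To verify it, we check that the expression has mean zero and satisfies $\frac{d}{dt}\psi_0(P_t) = E\{\Phi_0\, g\}$ for a generic score $g$ decomposed along the factorization. The factor $\pi_z(x)$ does not enter $\psi_0$, so it contributes nothing. Finally, we confirm that the candidate gradient lies in $L_2^0(P)$, which is automatic in the nonparametric model. This also establishes efficiency.
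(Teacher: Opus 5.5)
Your proposal is correct and follows essentially the same route as the paper. Both write $\psi_0$ as the ratio $E\{\rho_0(X)\mu_{01}(X)\}/E\{\rho_0(X)\}$, build the numerator's influence function from the outcome-residual, selection-residual and marginal-$X$ contributions, use the standard influence function of $\bar{\rho}_0$ for the denominator, and combine via the ratio rule before collecting terms (your explicit pathwise-derivative check and saturated-tangent-space remark only spell out what the paper asserts).
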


An estimate of the gradient can be computed by plugging in unknown quantities. For this, in addition to estimates $\rho_{0,n}$ and $\mu_{01,n}$ described above, we require an estimate $\pi_{0,n}$ of $\pi_0$. This could be the known randomization probabilities or based on a simple regression fit of $Z$ on $X$. An estimate of the gradient evaluated on a given observation $O_i$ is
\begin{equation}
\begin{aligned}
\Phi_{0,n}(O_i) &= \frac{(1 - Z_i)}{\pi_{0,n}(X_i)} \frac{S_i}{\bar{\rho}_{0,n}} \{ Y_i - \mu_{01,n}(X_i) \} + \frac{\{\mu_{01,n}(X_i) - \psi_{0,n}\}}{\bar{\rho}_{0,n}}  \frac{(1 - Z_i)}{\pi_{0,n}(X_i)} \{ S_i - \rho_{0,n}(X_i) \} \\
& \hspace{2em} - \frac{\psi_{0,n}}{\bar{\rho}_{0,n}}\{ \rho_{0,n}(X_i) - \bar{\rho}_{0,n} \} + \tilde{\psi}_{0,n}(X_i) - \psi_{0,n} \ ,
\end{aligned} \label{eqn:example_estimated_gradient}
\end{equation}
where $\tilde{\psi}_{0,n}(X_i) = \rho_{0,n}(X_i) / \bar{\rho}_{0,n} \mu_{01,n}(X_i)$. The one-step estimator is defined as $\psi_{0,n}^{+} = \psi_{0,n} + n^{-1} \sum_{i=1}^n \Phi_{0,n}(O_i)$. Under regularity conditions (Supplement K.6), $n^{1/2} (\psi_{0,n}^+ - \psi_0)$ converges in distribution to a mean-zero Gaussian random variable with variance $E\{ \Phi_0(O)^2 \}$. $\psi_{0,n}^+$ is doubly robust in that it is consistent if either $\pi_{0,n}$ is consistent for $\pi_0$ or if both $\rho_{0,n}$ and $\mu_{01,n}$ are consistent for their respective targets. 

\subsection{Estimation under exclusion restriction}

A plug-in estimate of $\psi_{1, \text{ER}}$ can be computed based on estimates of $\bar{\mu}_1$, $\bar{\mu}_{00}$, and $\bar{\rho}_0$. As with estimation of $\psi_0$, it is efficient to include covariates in the estimation, e.g., estimating $\mu_{1\cdot}(X) = E(Y \mid Z = 1, X)$ by regressing $Y$ on $X$ in the subset of data with $Z = 1$. An estimate of $\bar{\mu}_1$ is given by $\bar{\mu}_{1,n} = n^{-1} \sum_{i=1}^n \mu_{1\cdot,n}(X_i)$. Similarly, a regression of $Y$ on $X$ in the subset of data with $Z = 0$ and $S = 0$ yields an estimate $\mu_{00,n}$ of $\mu_{00}$ that can be used to compute $\bar{\mu}_{00,n} = n^{-1} \sum_{i=1}^n \mu_{00,n}(X_i)$. An estimate of $\bar{\rho}_0$ can be as described above, by marginalizing a regression of $S$ on $X$ in the subset of data with $Z = 0$. The plug-in estimate is $\psi_{1,\text{ER},n} = [\bar{\mu}_{1,n} - \bar{\mu}_{00,n} \{ 1 - \bar{\rho}_{0,n} \}] /  \bar{\rho}_{0,n}$.

\begin{thm}
    The efficient gradient for regular estimators of $\psi_{1,\text{ER}}$ in a model that is nonparametric up to Assumption 5 is \begin{align*}
        \Phi_{1,\text{ER}}(O_i) &= 
          \frac{1}{\bar{\rho}_0} \left[ \frac{Z_i}{\pi_1(X_i)} \{ Y_i - \mu_{1\cdot}(X)\} + \mu_{1\cdot}(X) - \bar{\mu}_1 \right] \\
          &\hspace{-1em} + \left\{1 - \frac{1}{\bar{\rho}_0} \right\} \left[ \frac{(1 - S_i) (1 - Z_i)}{( 1 - \bar{\rho}_0 ) \bar{\pi}_0 } \{ Y_i - \mu_{00}(X_i) \} + \frac{( 1 - S_i ) (1 - Z_i)}{( 1 - \bar{\rho}_0 ) \bar{\pi}_0} \{ \mu_{00}(X_i) - \bar{\mu}_{00} \} \right] \\
          &\hspace{1em} + \left\{\frac{\bar{\mu}_{00} - \bar{\mu}_1}{\bar{\rho}_0^2} \right\} \left[ \frac{(1 - Z_i)}{\pi_0(X_i)} \{ S_i - \rho_0(X_i) \} + \rho_0(X_i) - \bar{\rho}_0 \right] \ . 
    \end{align*}
\end{thm}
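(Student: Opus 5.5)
The plan is to write $\psi_{1,\text{ER}}$ as a smooth function of three simpler functionals,
\[
\psi_{1,\text{ER}} = g(\bar{\mu}_1, \bar{\mu}_{00}, \bar{\rho}_0), \qquad g(a,b,c) = \frac{a - b(1-c)}{c},
\]
and to obtain its gradient by the functional delta method (chain rule for pathwise derivatives). Once each component's gradient is known, the gradient of $\psi_{1,\text{ER}}$ is the corresponding linear combination of them. Since the model is nonparametric up to Assumption 5, the tangent space is saturated, so the gradient is unique and therefore the efficient one.

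\textbf{Step 1: which representation defines each component.} Before differentiating, I will fix the form of each functional on the observed-data model, because the stated gradient uses different forms for different pieces. I will take $\bar{\mu}_1 = E\{\mu_{1\cdot}(X)\}$ and $\bar{\rho}_0 = E\{\rho_0(X)\}$, the covariate-adjusted (g-formula) forms that agree with the marginal ones under randomization. I will take $\bar{\mu}_{00} = E(Y \mid Z=0, S=0) = E\{Y(1-S)(1-Z)\}/P(Z=0,S=0)$ as a plain conditional mean.

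\textbf{Step 2: component gradients (standard).}
\begin{itemize}
\item For $\bar{\mu}_1$: the usual augmented IPW gradient $\frac{Z}{\pi_1(X)}\{Y-\mu_{1\cdot}(X)\} + \mu_{1\cdot}(X) - \bar{\mu}_1$.
\item For $\bar{\rho}_0$: the analogue $\frac{1-Z}{\pi_0(X)}\{S-\rho_0(X)\} + \rho_0(X) - \bar{\rho}_0$.
\item For the conditional mean $\bar{\mu}_{00}$: the gradient is $\frac{(1-S)(1-Z)}{P(Z=0,S=0)}(Y-\bar{\mu}_{00})$. Splitting $Y-\bar{\mu}_{00} = \{Y-\mu_{00}(X)\} + \{\mu_{00}(X)-\bar{\mu}_{00}\}$ reproduces the two bracketed terms in the statement.
\end{itemize}
For the third item I will use randomization (Assumption 3, evaluated at the true $P$) to write $P(Z=0,S=0) = \bar{\pi}_0(1-\bar{\rho}_0)$.

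\textbf{Step 3: partial derivatives of $g$.} Direct computation gives
\[
\partial_a g = \frac{1}{c}, \qquad \partial_b g = -\frac{1-c}{c} = 1 - \frac{1}{c},
\]
\[
\partial_c g = \frac{bc - \{a - b(1-c)\}}{c^2} = \frac{b-a}{c^2}.
\]
Evaluated at $(\bar{\mu}_1, \bar{\mu}_{00}, \bar{\rho}_0)$, these are exactly the coefficients $1/\bar{\rho}_0$, $1 - 1/\bar{\rho}_0$ and $(\bar{\mu}_{00}-\bar{\mu}_1)/\bar{\rho}_0^2$ multiplying the three brackets in the stated $\Phi_{1,\text{ER}}$. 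Summing the weighted component gradients then yields the claimed expression.

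\textbf{Main obstacle: Step 1.} The step I expect to need the most care is the bookkeeping of which representation of each component is being differentiated. Outside the randomized submodel, the marginal and covariate-adjusted versions of $\bar{\rho}_0$ and $\bar{\mu}_1$ are different functionals with different gradients. I would handle this by stating explicitly that $\psi_{1,\text{ER}}$ is defined through the representations listed in Step 1, and by invoking randomization only when evaluating the resulting gradient at $P$.

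Finally, I will verify pathwise differentiability directly: along a regular one-dimensional submodel $P_t$ with score $h$, check that $\frac{d}{dt}\psi_{1,\text{ER}}(P_t)\big|_{t=0} = E\{\Phi_{1,\text{ER}}(O)\,h(O)\}$. This follows from the component computations together with $\bar{\rho}_0$ being bounded away from zero (Assumption 5), which makes $g$ differentiable at the true parameter values.
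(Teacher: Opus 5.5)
Your proposal is correct and follows the paper's argument. The paper uses the same three component gradients, namely the AIPW forms for $\bar{\mu}_{1\cdot}$ and $\bar{\rho}_0$ and the plain conditional-mean gradient $\frac{(1-S)(1-Z)}{(1-\bar{\rho}_0)\bar{\pi}_0}(Y-\bar{\mu}_{00})$ for $\bar{\mu}_{00}$, and combines them by the ratio rule on $A/B$ with $A=\bar{\mu}_{1\cdot}-\bar{\mu}_{00}(1-\bar{\rho}_0)$ and $B=\bar{\rho}_0$; this is the same computation as your chain rule on $g(a,b,c)$, and your explicit fixing of which representation of each component is differentiated makes precise a choice the paper leaves implicit.
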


An estimate $\Phi_{1, \text{ER}, n}$ of this gradient could be computed by plugging in estimates of nuisance parameters as in (\ref{eqn:example_estimated_gradient}) and a one-step estimator similarly defined as $\psi_{1,\text{ER},n}^+ = \psi_{1,\text{ER},n} + n^{-1} \sum_{i=1}^n \Phi_{1, \text{ER}, n}(O_i)$. Under regularity conditions (Supplement K.7), $n^{1/2} (\psi_{1,\text{ER}, n}^+ - \psi_{1,\text{ER}})$ converges in distribution to a mean-zero Gaussian random variable with variance $E\{ \Phi_{1,\text{ER}}(O)^2 \}$. $\psi_{1,\text{ER}, n}^+$ is multiply robust, with four minimal combinations of consistent nuisance parameter estimates that yield a consistent one-step estimate. Notably, in the context of a randomized trial where $\pi_0$ and $\pi_1$ are known, one-step estimators are guaranteed to be consistent irrespective of inconsistent estimation of $\rho_0$ and $\mu_{1\cdot}$.

\subsection{Estimation under partial principal ignorability}

To estimate $\psi_{1,\text{PI}}$, we may generate a plug-in estimator by estimating $\rho_0$ and $\bar{\rho}_0$ as described above. Estimates $\mu_{1s,n}$ of $\mu_{1s}$ for $s = 0, 1$, may be obtained by regressing $Y$ onto $X$ in the subset of data with $Z = 1$ and $S = s$. A plug-in estimator is $
\psi_{1,\text{PI},n} = n^{-1} \sum_{i=1}^n \left[ \mu_{11,n}(X_i) \rho_{1,n}(X_i)\right.$ $\left.+ \mu_{10,n}(X_i) \left\{ \rho_{0,n}(X_i) - \rho_{1,n}(X_i) \right\} \right] /\bar{\rho}_{0,n}$. 
We define the $X$-conditional version of the estimand as $\tilde{\psi}_{1, \text{PI}}(x) = \rho_0(x) / \bar{\rho}_0 [ \mu_{11}(x) \rho_1(x) / \rho_0(x) + \mu_{10}(x) \{ 1 - \rho_1(x)/\rho_0(x)\}  ]$.

\begin{thm}
The efficient gradient for regular estimators of $\psi_{1,\text{PI}}$ in a model for the observed data that is nonparametric up to positivity Assumptions 5 and 8 is
\begin{equation}
\begin{aligned}
    \Phi_{1,\text{PI}}(O_i) &= \frac{Z_i}{\pi_1(X_i)} \frac{S_i}{\bar{\rho}_0} \{Y_i - \mu_{11}(X_i) \} + \frac{Z_i}{\pi_1(X_i)}\frac{(1 - S_i)}{\bar{\rho}_0} \frac{\{\rho_0(X_i) - \rho_1(X_i)\}}{\{1 - \rho_1(X_i)\} } \{Y_i - \mu_{10}(X_i) \} 
\\
&\hspace{2em} + \frac{Z_i}{\pi_1(X_i)} \frac{\{\mu_{11}(X_i) - \mu_{10}(X_i)\}}{\bar{\rho}_0}  \{ S_i - \rho_1(X_i) \} 
\\
&\hspace{3em} + \frac{(1 - Z_i)}{\pi_0(X_i)} \frac{\{\mu_{10}(X_i) - \psi_1\}}{\bar{\rho}_0} \{S_i - \rho_0(X_i) \} - \frac{\psi_1}{\bar{\rho}_0} \{\rho_0(X_i)  - \bar{\rho}_0 \}  \\
&\hspace{4em} + \tilde{\psi}_{1, \text{PI}}(X_i) - \psi_{1,\text{PI}} \ . 
\end{aligned}
\label{eqn:eif_psi1}
\end{equation}
\label{thm:eif_psi1}
\end{thm}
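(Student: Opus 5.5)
We derive the efficient gradient by computing the pathwise (Gateaux) derivative of $\psi_{1,\text{PI}}$ along regular parametric submodels $P_t$ with score $g(O)$. In a model that is nonparametric up to the positivity conditions (Assumptions 5 and 8), the tangent space is saturated. Any mean-zero, finite-variance function $\Phi$ with $\frac{d}{dt}\psi_{1,\text{PI}}(P_t)|_{t=0} = E\{\Phi(O) g(O)\}$ is therefore the (unique) efficient gradient.

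First, write the functional in ratio form, $\psi_{1,\text{PI}} = N/\bar\rho_0$, where
\[
N = E\big[\mu_{11}(X)\rho_1(X) + \mu_{10}(X)\{\rho_0(X) - \rho_1(X)\}\big], \qquad \bar\rho_0 = E\{\rho_0(X)\}.
\]
By the quotient rule, the gradient of $\psi_{1,\text{PI}}$ is $\{\mathrm{IF}(N) - \psi_{1,\text{PI}}\,\mathrm{IF}(\bar\rho_0)\}/\bar\rho_0$.

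Next, compute the gradients of the two pieces using standard building blocks. Factor the likelihood as $p(x)\,p(z\mid x)\,p(s\mid z,x)\,p(y\mid s,z,x)$. The standard gradients are:
\begin{itemize}
\item $E\{f(X)\}$ contributes $f(X) - E f(X)$.
\item Perturbing $\rho_z(x)$ inside an $X$-average with weight $w(x)$ contributes $w(X)\,I(Z=z)\{S - \rho_z(X)\}/\pi_z(X)$.
\item Perturbing $\mu_{zs}(x)$ with weight $w(x)$ contributes $w(X)\,I(Z=z,S=s)\{Y - \mu_{zs}(X)\}/[\pi_z(X)\,P(S=s\mid Z=z,X)]$.
\end{itemize}

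Apply these to $N$. The integrand of $N$ depends on $\rho_1$ with weight $\mu_{11} - \mu_{10}$, on $\rho_0$ with weight $\mu_{10}$, on $\mu_{11}$ with weight $\rho_1$, and on $\mu_{10}$ with weight $\rho_0 - \rho_1$. Dividing each $\mu$-weight by the conditional probability of the relevant $S$ value gives two residual terms:
\begin{itemize}
\item $Z S\{Y - \mu_{11}\}/\pi_1$, since the weight $\rho_1$ cancels the factor $P(S=1\mid Z=1,X) = \rho_1$;
\item $Z(1-S)\,\dfrac{\rho_0 - \rho_1}{1-\rho_1}\,\{Y - \mu_{10}\}/\pi_1$.
\end{itemize}
The $\rho_1$ perturbation gives $Z(\mu_{11} - \mu_{10})(S - \rho_1)/\pi_1$. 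The $\rho_0$ perturbation gives $(1-Z)\mu_{10}(S - \rho_0)/\pi_0$. The marginal-$X$ part gives $N(X) - N$, where $N(x)$ denotes the integrand of $N$.

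For $\bar\rho_0$, the gradient is $(1-Z)(S - \rho_0)/\pi_0 + \rho_0(X) - \bar\rho_0$.

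Combining via the quotient rule, the $(1-Z)(S-\rho_0)/\pi_0$ terms merge into $(1-Z)(\mu_{10} - \psi_1)(S - \rho_0)/(\pi_0\bar\rho_0)$. The marginal parts become
\[
\frac{N(X) - N}{\bar\rho_0} - \frac{\psi_1}{\bar\rho_0}\{\rho_0(X) - \bar\rho_0\}
= \tilde\psi_{1,\text{PI}}(X) - \psi_{1,\text{PI}} - \frac{\psi_1}{\bar\rho_0}\{\rho_0(X) - \bar\rho_0\},
\]
using $N(X)/\bar\rho_0 = \tilde\psi_{1,\text{PI}}(X)$ and $N/\bar\rho_0 = \psi_{1,\text{PI}}$. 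This reproduces \eqref{eqn:eif_psi1} term by term.

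The main obstacle is carrying out the pathwise derivative carefully enough to justify the building blocks. To do this, decompose the score as
\[
g = g_X(X) + g_Z(Z\mid X) + g_S(S\mid Z,X) + g_Y(Y\mid S,Z,X),
\]
with each component conditionally mean zero. Then verify, for example, that
\[
\frac{d}{dt} E\{w(X)\mu_{10,t}(X)\} = E\!\left[\frac{Z(1-S)\,w(X)\{Y - \mu_{10}(X)\}}{\pi_1(X)\{1-\rho_1(X)\}}\, g(O)\right].
\]
This identity follows from $\frac{d}{dt}\mu_{10,t}(x) = E[\{Y - \mu_{10}\}\,g_Y \mid Z=1, S=0, X=x]$ and reweighting by the inverse conditional probabilities. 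Assumption 8 (bounding $1 - \rho_1$ away from zero on the relevant set) and Assumption 5 (bounding $\bar\rho_0$ away from zero) ensure the resulting $\Phi_{1,\text{PI}}$ has finite variance, so the representation is valid. Finally, each term of $\Phi_{1,\text{PI}}$ is conditionally mean zero, hence $E\{\Phi_{1,\text{PI}}(O)\} = 0$. Nonparametric saturation then makes this gradient the efficient one.
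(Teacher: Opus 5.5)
Your proposal is correct and follows essentially the same route as the paper: write $\psi_{1,\text{PI}}$ as the ratio $A/\bar\rho_0$, obtain the contributions of $\mu_{11}$, $\mu_{10}$, $\rho_1$, $\rho_0$ and the marginal law of $X$ from the standard building blocks, and combine them with the ratio rule. Your score-decomposition sketch justifying those building blocks goes slightly beyond the paper, which takes them as given. One small caveat: finite variance also needs $\pi_z(X)$ bounded away from zero, which neither Assumption 5 nor Assumption 8 supplies, though it holds by design in a randomized trial and the paper assumes it implicitly too.
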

As above, an estimate $\Phi_{1,\text{PI},n}$ of this gradient can be computed by plugging in estimates of nuisance parameters, and a one-step estimator constructed as $\psi_{1,\text{PI},n}^+ = \psi_{1,\text{PI},n} + n^{-1} \sum_{i=1}^n \Phi_{1,\text{PI},n}(O_i)$. Under regularity conditions (Supplement K.8), $n^{1/2} (\psi_{1,\text{PI}, n}^+ - \psi_{1,\text{PI}})$ converges in distribution to a mean-zero Gaussian random variable with variance $E\{ \Phi_{1,\text{PI}}(O)^2 \}$. $\psi_{1,\text{PI}, n}^+$ is multiply robust with six minimal combinations of consistent nuisance parameter estimates yielding consistent one-step estimates. However, in contrast to $\psi_{1,\text{ER}}$, consistent estimation of $\pi_0$ and $\pi_1$ is not sufficient and thus in the context of a randomized trial, $\psi_{1,\text{PI}}$ requires stronger conditions for consistent estimation than those $\psi_{1,\text{ER}}$.

We include details for performing a sensitivity analysis to the partial principal ignorability assumption in Supplement E. In Supplement H, we provide details on point identification and efficient estimation of the Doomed effects under a principal ignorability assumption.

\subsection{Estimation when both assumptions hold}

In the situation where \emph{both} exclusion restriction \emph{and} partial principal ignorability hold, then it is possible to more efficiently estimate Naturally Infected effects. The key insight in this case is that the conditional mean of $Y(1)$ in the Protected stratum is identified by $E\{ Y \mid S = 0, X \}$ and thus additional data may be used to estimate outcomes in the Protected. We provide theoretical details for this estimator in Supplement G.

\section{Naturally infected effects and exposure-conditional effects}

Principal strata estimands such as $E\{ Y(1) \mid S(0) = 1 \}$ involve counterfactuals defined in a world where $Z = 1$ and a world where $Z = 0$. Thus, except in the special case of infection-necessary outcomes, identification requires fundamentally untestable assumptions such as the exclusion restriction and/or partial principal ignorability. We now present a causal parameter of interest that does not involve cross-world quantities in its definition, nor cross-world assumptions in its identification. See Supplement K.9 for proofs of theorems described in this section.

Suppose that there is a vector-valued, unmeasured variable $\tilde{S} \in \tilde{\mathcal{S}}$ denoting some unmeasured amount of exposure to the pathogen causing the infection outcome $S$. For example, $\tilde{S}$ could represent a vector of information about the dose, total duration, and/or route of exposure to a pathogen that an individual was exposed to. We assume that there is a binary coarsening $e: \tilde{\mathcal{S}} \times \mathcal{X} \rightarrow \{0,1 \}$ and define the random variable $E = e(\tilde{S}, X)$. We make the following assumption about this exposure variable.

\begin{assump}
    \emph{Exposure is necessary and sufficient for infection in absence of vaccine}: $P(S = 1 \mid E = 0, Z = 0) = 0$ and $P(S = 1 \mid E = 1, Z = 0) = 1$.
\end{assump}
Thus, $E = 1$ represents the occurrence of an exposure to the pathogen such that in absence of the vaccine an individual would have $S = 1$ with probability 1, while no one with $E = 0$ would have $S = 1$ \citep{janvin2025quantification}. We allow for this infectious dose to vary by individual characteristics. For example, individuals with previous exposure to a pathogen may require a higher infectious dose than those who are na{\"i}ve to the pathogen. 

In practice, such exposure information is often unavailable; however, it is easy to conceptualize realistic experimental designs under which this information could be collected, for example using exposure monitors, mobile phones, or other means \citep{zhang2022monitoring}. We thus consider identification of exposure-conditional parameters such as $E\{Y(1) \mid E = 1\} - E\{Y(0) \mid E = 0\}$ as a means of quantifying vaccine effects on post-infection endpoints. While these estimands rely on unobserved exposure information, we find that they are still identifiable under versions of the exclusion restriction and partial ignorability assumptions that are experimentally feasible to evaluate. Moreover, we show that the identifying functionals align exactly with those formulated using principal strata.

Both sets of identification results are contingent on the following assumptions \cite{stensrud2023identification,janvin2025quantification,perenyi2025variant}.

\begin{assump}
    \emph{Vaccine is not a cause of exposure}. $E(z) = E$ for $z = 0,1$.
    \label{assumption:blinding}
\end{assump}
Assumption \ref{assumption:blinding} is plausible in an appropriately blinded randomized trial, where participants are unaware of their vaccine assignment. This may be difficult to justify in an open-label trial and in placebo-controlled trials when their are known side effects of vaccination, as individuals may adjust their risk behavior in response to knowledge of their assigned arm. However, vaccine trials are often designed with active comparator vaccines rather than a placebo vaccine (e.g., a rabies vaccine as a control for a malaria vaccine \citep{bejon2008efficacy}), such that for many trials this assumption is likely plausible.

Under this minimal set of assumptions, we have the following identification for the average post-infection outcome under control.

\begin{thm}
    Under Assumptions 1-5 and 10-11, $E\{ Y(0) \mid E = 1 \} = \psi_0$.
\end{thm}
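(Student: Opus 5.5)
The plan is to show that the event $\{E = 1\}$ coincides, up to a null set, with the Naturally Infected stratum $\{S(0) = 1\}$. Once that is established, Theorem~\ref{thm:id_EY0} immediately gives $E\{Y(0) \mid E = 1\} = E\{Y(0) \mid S(0) = 1\} = \psi_0$. The conditions to use are Assumption 10 (exposure is necessary and sufficient in the absence of vaccine), Assumption 11 (vaccine is not a cause of exposure), randomization (Assumption 3), and consistency.

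\textbf{Step 1: from the observed arm to the counterfactual.} Among those with $Z = 0$, consistency gives $S = S(0)$. Assumption 11 gives $E = E(0) = E(1)$, so $E$ is a pre-randomization, baseline-type quantity, as is $S(0)$.

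\textbf{Step 2: randomization for the pair.} I would extend the randomization assumption to the joint variable $(S(0), E)$ and conclude $(S(0), E) \perp Z$. This holds because $E$ is a function of $\tilde S$ and $X$, and $\tilde S$ is not affected by $Z$ under Assumption 11. In a randomized trial, pre-treatment or treatment-invariant variables are independent of assignment. It then follows that
\[
P\{S(0) = 1 \mid E = e\} = P\{S(0) = 1 \mid E = e, Z = 0\} = P(S = 1 \mid E = e, Z = 0),
\]
which equals $e$ by Assumption 10. Hence $S(0) = E$ almost surely.

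\textbf{Step 3: conclude.} Since $S(0) = E$ almost surely, $E\{Y(0) \mid E = 1\} = E\{Y(0) \mid S(0) = 1\}$. The conditioning event has positive probability, because $P(E = 1) = \bar\rho_0 > 0$ by positivity (Assumption 5). Theorem~\ref{thm:id_EY0} then finishes the argument.

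\textbf{Main obstacle.} The delicate point is Step 2. Assumption 3 only states $S(z) \perp Z$ and $Y(z) \perp Z$; it does not mention $E$. I must therefore justify the joint independence $(S(0), E) \perp Z$, or at least $S(0) \perp Z \mid E$. I would argue this from Assumption 11 together with the trial's randomization: $Z$ is assigned independently of all pre-existing individual characteristics and counterfactuals, and $E$ is one of these. If that argument is not available, I would state the joint independence explicitly as the interpretation of Assumption 3 under Assumption 11.
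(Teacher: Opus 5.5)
Your proof is correct, but it takes a different route from the paper's.

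\textbf{The paper's route.} The paper never leaves the placebo arm. It uses randomization together with $E(z)=E$ to write $E\{Y(0)\mid E=1\}=E(Y\mid Z=0,E=1)$. Sufficiency then restricts this to the $S=1$ cell. A Bayes-rule step using necessity gives $P(E=1\mid Z=0,S=1)=1$, which lets $E$ be dropped. The result is $E(Y\mid Z=0,S=1)=\bar{\mu}_{01}=\psi_0$ directly. The written version in Supplement K.9.1 has slips: it reduces to the $S=0$ cell and ends with a probability instead of $\psi_0$. The intended argument is nonetheless clearly this one.

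\textbf{Your route.} You transport the placebo-arm relation $P(S=1\mid E=e,Z=0)=e$ to the whole population. This yields the structural fact $S(0)=E$ almost surely, and the theorem then becomes a corollary of Theorem~\ref{thm:id_EY0}.

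\textbf{The independence you flag.} Assuming $(S(0),E)\perp Z$ is not an extra cost of your route. The paper's very first step, $E\{Y(0)\mid E=1\}=E\{Y(0)\mid Z=0,E=1\}$, also needs independence between $Z$ and a pair involving $E$, namely $(Y(0),E)$. The paper attributes this to ``randomization'' plus $E(z)=E$, even though the randomization assumption as written only covers $Y(z)$ and $S(z)$. So stating the joint independence explicitly, as your fallback proposes, is on the same footing as what the paper does.

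\textbf{What each approach buys.} Your route is more transparent. It shows that, under these assumptions, the exposure-defined subgroup is the Naturally Infected stratum itself. That explains why the identifying functionals coincide, and it immediately gives $E\{Y(1)\mid E=1\}=E\{Y(1)\mid S(0)=1\}$ as well. The paper's route is more self-contained: it stays entirely within observed-arm conditional expectations and needs no appeal to Theorem~\ref{thm:id_EY0}.
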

As with principal strata estimands, further assumptions are needed to identify $E\{Y(1) \mid E = 1\}.$

\subsection{Identification under exposure-conditional exclusion restriction and exposure ignorability}

Suppose that instead of the typical exclusion restriction (Assumption \ref{assmp:exclusion_restriction}), which is cross-world in nature, we instead assume an exposure-conditional exclusion restriction. 
\begin{assump}
    \emph{Exposure-conditional exclusion restriction}. $E\{ Y \mid Z = 1, E = 0 \} = E\{ Y \mid Z = 0, E = 0 \}$
    \label{assmp:exp_cond_er}
\end{assump}

\begin{thm}
    Under Assumptions 1-5 and 10-11, $E\{ Y(1) \mid E = 1 \} = \psi_{1,\text{ER}}$.
    \label{thm:exp_cond_er_id}
\end{thm}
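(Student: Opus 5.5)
We prove the claim by decomposing the vaccinated-arm mean over the exposure variable $E$ and identifying each piece from the placebo arm. The statement lists Assumptions 1--5 and 10--11, where Assumption 11 is the exposure-conditional exclusion restriction. We read the numbering so that Assumption 10 is ``exposure is necessary and sufficient'' and Assumption 11 is ``vaccine is not a cause of exposure''. The exclusion restriction (Assumption~\ref{assmp:exp_cond_er}) is used as well; if it were not counted among the cited assumptions, the result could not hold in general.

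\textbf{Step 1: randomization extends to $E$.} By Assumption~\ref{assumption:blinding}, $E(z)=E$, and $E$ is a pre-randomization quantity in the sense that it is unaffected by $Z$. We therefore interpret randomization as giving $(Y(z),E)\perp Z$, so that $P(E=1\mid Z=z)=P(E=1)$ for $z=0,1$. Under Assumption 10, $S=E$ almost surely in the placebo arm. Hence
\[
P(E=1)=P(E=1\mid Z=0)=\bar\rho_0,
\qquad
E(Y\mid Z=0,E=0)=\bar\mu_{00}.
\]

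\textbf{Step 2: decompose the vaccinated-arm mean.} Consistency and randomization give
\[
\bar\mu_{1\cdot}=E\{Y(1)\}=E\{Y(1)\mid E=1\}\,\bar\rho_0+E\{Y(1)\mid E=0\}\,(1-\bar\rho_0).
\]
For the second term, $E\{Y(1)\mid E=0\}=E(Y\mid Z=1,E=0)$ by randomization and consistency. Assumption~\ref{assmp:exp_cond_er} turns this into $E(Y\mid Z=0,E=0)$, which equals $\bar\mu_{00}$ by Step 1. Solving for the first term gives
\[
E\{Y(1)\mid E=1\}=\frac{\bar\mu_{1\cdot}-\bar\mu_{00}(1-\bar\rho_0)}{\bar\rho_0}=\psi_{1,\text{ER}}.
\]
Assumption 5 guarantees $\bar\rho_0>0$, so the division is valid.

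\textbf{Main obstacle.} The delicate point is Step 1. We must justify that $E$ is independent of $Z$, rather than only that $E$ does not depend on $z$. This follows because $E(z)=E$ is a common potential outcome, and randomization makes $Z$ independent of all potential outcomes, including $E(0)$ and $E(1)$. We also need $Y(1)\perp Z\mid E$, which holds under joint randomization of $(Y(1),E(1))$. We will state this joint independence explicitly as the reading of Assumption 3 extended to the unmeasured exposure. Everything else is a direct application of the law of total expectation.
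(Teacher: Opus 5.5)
Your proof is correct and follows essentially the paper's argument. Both condition on $E$, use the exposure-conditional exclusion restriction for the $E=0$ piece, and use necessity/sufficiency under placebo ($S=E$ almost surely when $Z=0$) to get $P(E=1)=\bar\rho_0$ and the placebo-arm means. The only difference is cosmetic: the paper decomposes $E\{Y(1)-Y(0)\}$ and separately identifies $E\{Y(0)\mid E=1\}=\bar\mu_{01}$ before rearranging, whereas you decompose $E\{Y(1)\}$ directly and reach $\psi_{1,\text{ER}}$ without that step. Your explicit joint randomization $(Y(1),E)\perp Z$, and your point that all three exposure assumptions are needed despite the paper's inconsistent numbering, are exactly what the paper uses implicitly.
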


Theorem \ref{thm:exp_cond_er_id} establishes that analyses targeting effects based on $\psi_{1,\text{ER}}$ can equivalently be interpreted as effects in the subpopulation who would naturally be exposed to an infectious dose of the pathogen of interest \citep{stensrud2023identification}. While Assumption \ref{assmp:exp_cond_er} is not testable in settings where $E$ is unmeasured, were $E$ to be measured, any straight-forward test of mean independence would suffice to evaluate this assumption. 

We can also provide identification under the following assumptions to similarly provide an alternative to partial principal ignorability. 

\begin{assump}
    \emph{Exposure is necessary for infection in presence of vaccine}: $P(S = 1 \mid E = 0, Z = 1) = 0$.
    \label{assmp:exp_necess_under_vax}
\end{assump}

\begin{assump}
    \emph{Conditional ignorability of exposure}: $Y \perp E \mid V, X, S$.
    \label{assmp:cond_ignor_of_exposure}
\end{assump}

Assumption \ref{assmp:exp_necess_under_vax} is likely to be satisfied in most realistic settings where exposure is necessary for infection in absence of vaccine. Nevertheless, we separately state this assumption here as it is not needed to prove identification under exposure-conditional exclusion restriction. Assumption \ref{assmp:cond_ignor_of_exposure} is similar to the partial principal ignorability assumption; however, in contrast, it could be experimentally validated if exposure information were collected: it only involves observable quantities, no potential outcomes. 

\begin{thm}
    Under Assumptions 1-5, 8-10, 12-13, $E\{ Y(1) \mid E = 1 \} = \psi_{1, \text{PI}}$.
    \label{thm:exposure_identification}
\end{thm}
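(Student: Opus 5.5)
Throughout, $V$ in Assumption~\ref{assmp:cond_ignor_of_exposure} is read as the vaccine indicator $Z$. The plan is to decompose $E\{Y(1)\mid E=1\}$ by infection status under vaccine, conditionally on $X$, and then identify each piece from observables in the vaccine arm.

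\textbf{Step 1: reduce to observed-arm quantities.} By randomization (Assumption 4), Assumption~\ref{assumption:blinding}, and consistency, the pair $(Y(1),E)$ conditional on $X$ has the same law as $(Y,E)$ given $Z=1,X$. We treat $E$ as a pre-treatment variable unaffected by $Z$, jointly independent of $Z$ with the potential outcomes given $X$. Hence $E\{Y(1)\mid E=1\} = E\{Y\mid Z=1,E=1\}$. Moreover, the distribution of $X$ given $E=1$ is the same in both arms. By Assumption 9, $\{E=1\}=\{S=1\}$ almost surely in the placebo arm, so
\[
dP(x\mid E=1)=dP(x\mid Z=0,S=1)=\frac{\rho_0(x)}{\bar\rho_0}\,dP(x),
\]
and $P(E=1\mid X=x)=\rho_0(x)$. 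Therefore
\[
E\{Y(1)\mid E=1\}=E\Big[\tfrac{\rho_0(X)}{\bar\rho_0}\,E\{Y\mid Z=1,E=1,X\}\Big].
\]

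\textbf{Step 2: split by $S$ within the vaccine arm.} Write
\begin{align*}
E\{Y\mid Z=1,E=1,X\} ={}& E\{Y\mid Z=1,E=1,S=1,X\}\,P(S=1\mid Z=1,E=1,X)\\
&+E\{Y\mid Z=1,E=1,S=0,X\}\,P(S=0\mid Z=1,E=1,X).
\end{align*}
By Assumption~\ref{assmp:exp_necess_under_vax}, $S=1$ implies $E=1$ when $Z=1$. Then
\[
P(S=1\mid Z=1,E=1,X)=\frac{P(S=1\mid Z=1,X)}{P(E=1\mid Z=1,X)}=\frac{\rho_1(X)}{\rho_0(X)},
\]
using $P(E=1\mid Z=1,X)=P(E=1\mid X)=\rho_0(X)$. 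The same implication gives $E\{Y\mid Z=1,E=1,S=1,X\}=\mu_{11}(X)$.

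\textbf{Step 3: the main obstacle.} The remaining piece is $E\{Y\mid Z=1,E=1,S=0,X\}$, which involves the unobserved exposure among uninfected vaccinees. Assumption~\ref{assmp:cond_ignor_of_exposure} gives $Y\perp E\mid Z,X,S$, so this term equals $E\{Y\mid Z=1,S=0,X\}=\mu_{10}(X)$. It is well defined: Assumption~\ref{assmp:positivity_for_pi} guarantees $1-\rho_1(X)>0$ on the support of $X$ given $E=1$. The same support ensures that conditioning on $E=1,S=0$ is meaningful whenever $\rho_0(X)>\rho_1(X)$; when the two are equal, the term carries zero weight anyway.

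Substituting Steps 2 and 3 into the display from Step 1 yields exactly \eqref{eqn:id_psi1}, i.e.\ $\psi_{1,\text{PI}}$. The care needed is mainly in Step 1, in justifying that $E$ is independent of $Z$ given $X$ by randomization plus Assumption~\ref{assumption:blinding}. Everything else is bookkeeping.
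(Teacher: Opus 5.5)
Your proposal is correct and follows the paper's proof essentially step for step: you reweight by $P(X=x\mid E=1)=\rho_0(x)P(X=x)/\bar\rho_0$ using Assumptions 9--10 and randomization, split $E(Y\mid Z=1,E=1,X)$ by $S$, get $P(S=1\mid Z=1,E=1,X)=\rho_1(X)/\rho_0(X)$ from Assumption 12, and use Assumption 13 to remove $E$ from the outcome regression. The differences are minor. You obtain the $S=1$ term directly from $\{Z=1,S=1\}\subseteq\{E=1\}$ instead of through Assumption 13. You also spell out the positivity check and the independence of $Z$ from $(E,Y(1))$ given $X$, which the paper invokes only as ``randomization'' and ``$E\perp V\mid X$''.
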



In Supplement H.3, we show that a similar interpretation is also achievable for the estimand in the Doomed principal stratum. In that case, we imagine an exposure $E^*$ that is sufficient for infection \emph{irrespective} of vaccine status, such that all individuals would be infected if exposed to $E^*$ (as opposed to $E$ wherein some vaccinated individuals are protected following exposure). The Doomed principal stratum estimand has an interpretation of individuals who are naturally exposed to $E^*$.

\section{Simulations}

\subsection{Asymptotic properties of estimators}

We conducted a simulation study to evaluate finite-sample performance of point estimators and estimators of bounds under a range of data-generating mechanisms when causal assumptions required by each method were and were not met (Supplement I.1). We found that estimated bounds appropriately covered the true effect, but were wide. We found that all point estimators performed well when their assumptions were met and poorly when their assumptions were not. When both exclusion restriction and partial principal ignorability held, we found that the semiparametric estimator had the smallest variance, followed by the estimator of $\psi_{1,\text{PI}}$. The estimator of $\psi_{1,\text{ER}}$ tended to have higher variance.

\subsection{Comparing power of estimands in realistic setting}

We conducted a simulation study to evaluate the power of hypothesis tests based on different causal estimands for detecting protective vaccine effects on a post-infection outcome. The goal of this simulation was to explore the potential benefits for using Naturally Infected effects to infer a causal effect of $Z$ on $Y$, compared to using either a marginal effect or an effect in the Doomed principal strata. The data-generating process was calibrated to resemble key features of the PROVIDE study (NCT01375647), a randomized placebo-controlled trial of an oral rotavirus vaccine conducted in Dhaka, Bangladesh from 2011–2014 \citep{colgateDelayedDosingOral2016a}. We generated simulated datasets of size $n=700$. The infection variable of interest $S$ was rotavirus infection, and the post-infection outcome of interest $Y$ was receipt of any antibiotics by week 52. Three baseline covariates $X=(X_1,X_2,X_3)$ denoting respectively gender, height-for-age Z-score, and number of household members were generated to reflect observed distributions in the PROVIDE data. Vaccine assignment $Z$ was generated independently of potential outcomes according to a Bernoulli$(0.5)$ distribution. 

Conditional on $X$, principal stratum membership and potential post-infection outcomes were simulated in such a way that allowed us to (i) satisfy monotonicity, exclusion restriction, and partial principal ignorability; (ii) mimic the distribution of rotavirus infection and antibiotic use observed in the observed data to the extent possible; and (iii) control the level of vaccine efficacy against infection and the size of vaccine effects in principal strata on post-infection outcomes. See Supplement I.2 for details. This approach allowed us to vary the extent to which the effect of $Z$ on $Y$ was driven by the composition of principal strata in the population and the size of the effect in the Doomed vs. Protected principal strata. 

We considered four different compositions of principal strata that can be defined based on vaccine efficacy (i.e., the relative amount of Protected vs. Doomed individuals) and the proportion Immune. First, we simulated a setting with modest vaccine efficacy (66\%) to prevent infection and a relatively low proportion of Immune (40\%). We then held vaccine efficacy fixed (66\%) while increasing the proportion of Immune individuals (60\%) to explore the extent to which increasing baseline immunity dilutes population-level effects. We then held Immune fixed at (40\%) while decreasing (to 50\%) and increasing (to 85\%) vaccine efficacy in order to explore effects in settings where the primary mechanism of vaccines impact is through the prevention of infection versus through improving the post-infection outcome among infected individuals.

For each of these four principal strata compositions, we varied the effect size on post-infection outcomes in the Doomed and Protected principal strata across a two-dimensional grid. For each setting, we simulated and analyzed 1000 datasets. Tests of vaccine effects were carried out using one-step estimators with relevant nuisance parameters estimated using Super Learner incorporating logistic regression, multivariate adaptive regression splines, generalized additive models, and forward stepwise regression. Power was defined as the proportion of simulated datasets wherein the null hypothesis of no effect was rejected using a two-sided level 0.05 Wald test using estimated influence-function-based standard errors. Results were summarized using contour plots highlighting regions with at least 80\% power. 

In a setting with modest vaccine efficacy to prevent infection (top row, Figure \ref{fig:placeholder}), we found that all estimators had at least some power to reject the null hypothesis of no effect of $Z$ on $Y$. However, as the size of the Immune grew with vaccine efficacy held constant (second row), we found that as expected the power to detect effects using a population-level effect disappeared. Tests based on the exclusion restriction-based Naturally Infected effects estimator were also not powered. However, in this setting the principal ignorability-based estimators maintained power to detect effects. The power to detect effects using population effect estimates and exclusion restriction-based estimates was also diminished when vaccine efficacy was reduced holding the proportion of Immune fixed (third row vs. first row), while in this setting principal-ignorability-based estimators maintained power. On the other hand, when vaccine efficacy was increased (fourth row), power improved for the population- and exclusion restriction-based tests, but was still less than the principal ignorability-based ones. 

Across all settings, power was essentially identical between the test based on the exclusion restriction-based estimator and the population estimator, despite the magnitude of the effect being larger for the Naturally Infected effect. Both had inferior power to the principal ignorability-based tests. The semiparametric estimator that assumed both the exclusion restriction and principal ignorability had improved power relative to tests based on the estimator that assumed principal ignorability alone, though the difference was modest.

\begin{figure}
    \centering
    \includegraphics[width=\linewidth]{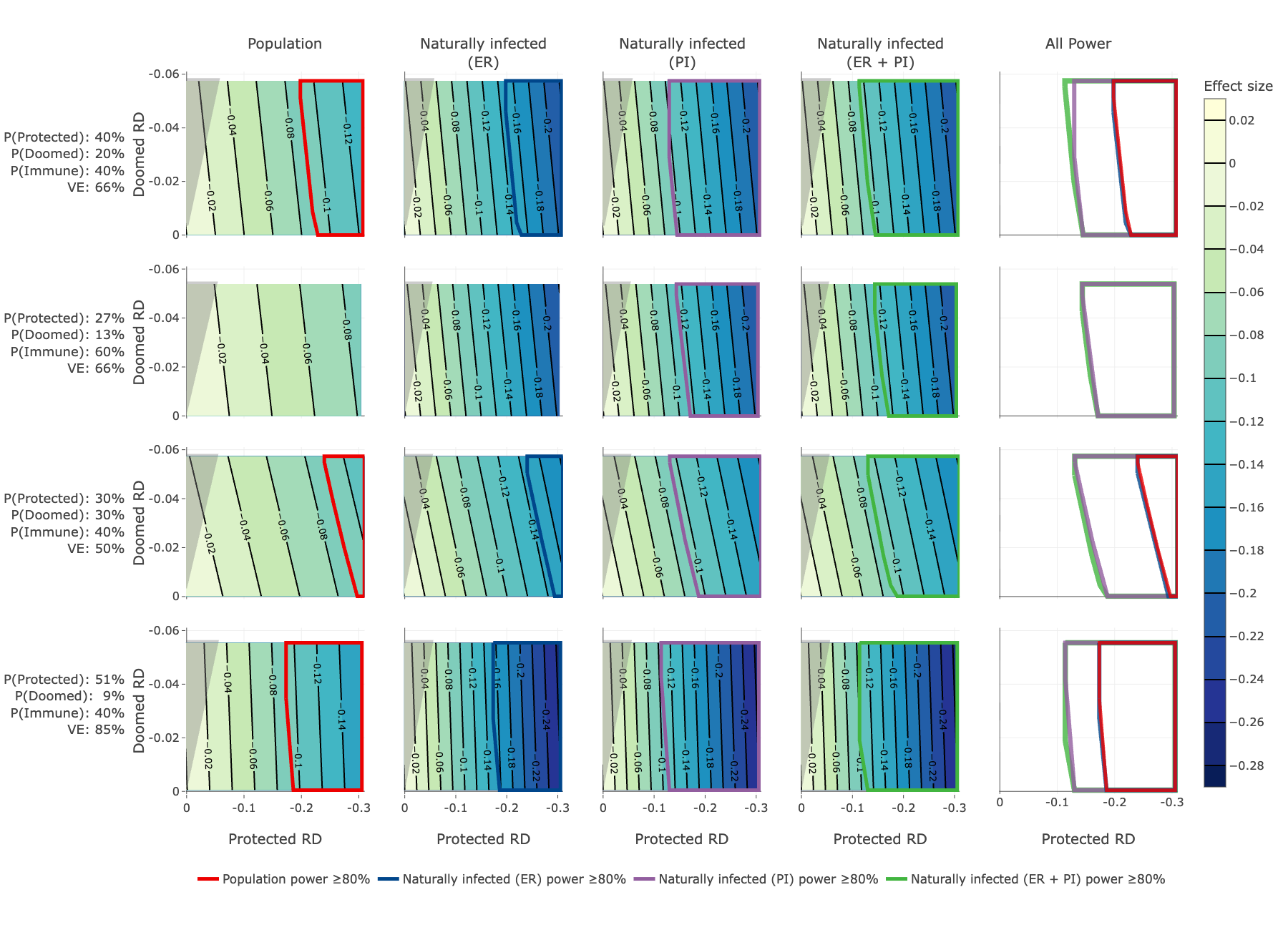}
    \caption{Power of a hypothesis test to reject the null hypothesis of no effect of $Z$ on $Y$ under different principal strata mixtures (rows) based on various effect estimators (columns) and under different principal stratum-specific effect sizes (axes of each figure). The horizontal axis is the risk difference (RD) in the Protected strata; the vertical axis is the RD in the Doomed strata. Grayed areas indicate regions where the effect in the Doomed exceeds the effect in the Protected stratum, which are unlikely in vaccine contexts. Contours indicate the size of each effect and outlined regions indicate where tests had at least 80\% power to detect the difference. The final column shows these regions for each estimator.}
    \label{fig:placeholder}
\end{figure}

\section{Data analysis}

The PROVIDE study (NCT01375647) was a randomized placebo-controlled trial of an  oral rotavirus vaccine \citep{colgateDelayedDosingOral2016a}. Seven hundred infants were randomized 1:1 to receive two doses of vaccine or placebo. Rotavirus diarrhea ($S$) was identified via twice-weekly surveillance for diarrhea using a stool rotavirus antigen enzyme immunoassay.  Our analysis considers any episodes of rotavirus diarrhea from birth to one year of age in the per protocol population. Any antibiotic use for all-cause diarrhea ($Y$) was reported by a caregiver at the time of each diarrhea episode. From the available data we included the following adjustment variables: baseline height-for-age Z-score, gender, and number of household members. 

We estimated bounds for the effect of $Z$ on $Y$ in the Naturally Infected, as well as point estimates using one-step estimators based on the exclusion restriction, partial principal ignorability, and both. We compared these estimates to one-step estimates of the marginal effect of $Z$ on $Y$, as well as the effect in the Doomed stratum. All estimates used super learning for nuisance parameter estimation, with the candidate regression library consisting of generalized linear models, generalized additive models, multivariate adaptive regression splines, and stepwise generalized linear models. 

The estimated bounds on the additive effect indicated that the effect of vaccine led from anywhere between a 42.5\% (95\% CI: -56.8\%, -25.7\%) decrease in antibiotic use for diarrhea to an 8.6\% increase (2.2\%, 16.1\%), providing no evidence of vaccine harm or benefit with respect to antibiotic use for diarrhea in the Naturally Infected. Covariate adjustment did not meaningfully impact bound width (Supplement J.1). Similarly, there was no evidence of a vaccine effect on antibiotic use when considering the estimated marginal effect of vaccine, the estimated effect in the Doomed, nor the Naturally Infected estimate that assumed only the exclusion restriction (Table \ref{tab:real_data_results}). On the other hand, the Naturally Infected estimators that assumed partial principal ignorability demonstrated some evidence that the vaccine had a positive effect in reducing antibiotic use for diarrhea, with an estimated 8\% lower absolute probability of antibiotic use among the Naturally Infected (95\% CI: 18\% lower to 1\% higher; p-value = 0.071). The estimate that additionally assumed the exclusion restriction had a nearly identical point estimate with a slightly narrower confidence interval. A sensitivity analysis for the effects in the Naturally Infected based on the partial principal ignorability assumption is included in Supplement J.2. 

\begin{table}[!h]
\centering
\caption{Estimate of additive and multiplicative effects of rotavirus vaccine on antibiotic use for diarrhea within the first year of life in marginal, Doomed, and Naturally Infected strata using AIPW estimators}
\begin{tabular}[t]{llcccc}
\toprule
\multicolumn{2}{c}{ } & \multicolumn{2}{c}{Additive} & \multicolumn{2}{c}{Multiplicative} \\
\cmidrule(l{3pt}r{3pt}){3-4} \cmidrule(l{3pt}r{3pt}){5-6}
Estimand & Estimator & Estimate (95\% CI) & p-value & Estimate (95\% CI) & p-value \\
\midrule
Marginal & & -0.009 (-0.084, 0.066) & 0.811 & 0.988 (0.892, 1.094) &  0.811\\
\cmidrule{1-6}
Doomed &  & 0.053 (-0.035, 0.141) &  0.239 & 1.058 (0.963, 1.164) & 0.241\\
\cmidrule{1-6}
\multirowcell{3}{Naturally\\ infected} & ER & -0.026 (-0.238, 0.186) & 0.810 & 0.971 (0.762, 1.238) & 0.813\\
 & PI & -0.085 (-0.183, 0.014) & 0.091 & 0.905 (0.807, 1.015) & 0.090 \\
 & PI + ER & -0.087 (-0.181, 0.008) & 0.073 & 0.903 (0.809, 1.008) & 0.070\\
\bottomrule
\end{tabular}
\label{tab:real_data_results}
\end{table}

\section{Discussion}

Naturally infected effects represent a new approach for characterizing the effect of vaccines on post-infection endpoints. We have provided a comprehensive overview of practical estimation of Naturally Infected effects, spanning estimation of bounds, the two most common forms of assumptions for point identification, and a common form of sensitivity analysis. As with many principal effect estimands, bounds are rarely expected to be informative in practice and therefore assumptions required for point identification must be closely scrutinized. We find that for sufficiently well monitored trials, the exclusion restriction often is plausible (see Supplement F for further discussion). However, while Naturally Infected effects estimated assuming the exclusion restriction are larger than similarly estimated population-level effects, hypothesis testing-based inference is rarely different between the two approaches. Thus, the partial principal ignorability assumption will likely be needed in practical applications to estimate Naturally Infected effects. This finding has implications for other areas of biomedicine, e.g., in ``responder'' analysis, where treatment effects are characterized in the principal strata of treatment ``responders'' as indicated by having a biomarker above a certain threshold when treated \citep{nordland2024estimation}.

We also give conditions under which the same observed-data parameter that identifies the principal stratum estimand can be interpreted as an effect among individuals exposed to a sufficiently infectious dose. This estimand aligns with interventionist causal inference \cite{richardson2013single,robins2010alternative}. As exposure monitoring becomes more feasible in infectious disease trials, the assumptions required for this interpretation can be tested empirically in practice.

An R package for estimating Naturally Infected, Doomed, and marginal effects using one-step and singly robust estimators is available at (https://github.com/allicodi/vaxstrat). Code for implementing simulations and data analysis is available at https://github.com/allicodi/vaxstrat\_analysis.

\subsection*{Acknowledgments}
We thank the volunteers who participated in the PROVIDE trial and the PROVIDE study team including Beth Kirkpatrick, Rashidul Haque, and William A Petri, Jr.

\appendix
\setcounter{thm}{0}
\setcounter{assump}{0}
\renewcommand{\thesection}{\Alph{section}}
\renewcommand{\thethm}{S\arabic{thm}}
\renewcommand{\thelemma}{S\arabic{lemma}}
\renewcommand{\thesubsection}{\Alph{section}.\arabic{subsection}}

\section{Additional detail on no inference and consistency assumptions}
The no interference assumption states that the counterfactual outcomes for each individual in the study are independent of the vaccine assignment of other individuals. \begin{assump}
    \emph{No interference}. For any two vaccine assignment vectors $\bm{z} = (z_1, \dots, z_n)$ and $\bm{z}' = (z_1', \dots, z_n')$, then we have that if $z_i = z_i'$ then $S_i(\bm{z}) = S_i(\bm{z}')$. Similarly, for any two infection status vectors $\bm{s} = (s_1, \dots, s_n)$ and $\bm{s}' = (s_1', \dots, s_n')$ if $z_i = z_i'$ and $s_i = s_i'$ then $Y_i(\bm{z}, \bm{s}) = Y_i(\bm{z}', \bm{s}')$.
\end{assump} 
While this assumption of no interference is often violated in infectious disease settings \citep{halloran1995causal}, we make the assumption given the motivating example applies to estimating vaccine effects in Phase 3 studies, where participants represent a relatively small fraction of the at-risk population and the vaccine studied in the trial is not available to individuals outside of the study. In these settings, enrolled individuals are unlikely to come in contact. Extensions of the methods to account for interference are possible in future work. With this assumption, counterfactual infection status can be expressed as $S_i(z)$ and the counterfactual post-infection outcome as $Y_i(z,s)$. 
\begin{assump}
    \emph{Causal consistency}. We have that if $Z_i = z$ then $S_i = S_i(z)$ and in addition if $S_i = s$, then we have that $Y_i = Y_i(z, s)$.
\end{assump} 
The assumption of causal consistency stipulates that if we observe an individual to receive vaccine formulation $z$, then the observed infection outcome $S_i$ equals the counterfactual outcome $S_i(z)$. Moreover, we also have that the observed post-infection outcome $Y_i$ equals the counterfactual $Y_i(z, S_i)$. With this assumption, we can express the counterfactual post-infection outcome as $Y_i(z)$.

\section{Relationship to existing principal strata literature}

Causal effects in principal strata have been widely used in applied statistics to study problems involving noncompliance \citep{angrist1996identification, frumento2012evaluating, mealli2013using}, 
truncation by death \citep{ding2011identifiability, wang2017identification}, mediation \citep{gallop2009mediation, forastiere2018principal, kim2019bayesian}, and the evaluation of surrogate endpoints \citep{frangakis2002principal, gilbert2008evaluating, jiang2016principal}.

Depending on the estimand of interest, identification of principal strata-based estimands is often facilitated through a combination of assumptions: (i) monotonicity, an example of which is given in Assumption 3; (ii) an exclusion restriction that limits the causal pathways whereby $Z$ can affect $Y$, discussed in detail in the next section; and (iii) principal ignorability, which states that conditional on a set of auxilliary variables, there is independence between potential outcomes and principal strata membership\citep{jo2009use, feller2017principal, jiang2022multiply}. Others have used strong parametric modeling assumptions to facilitate identification \citep{imai2009statistical,zhang2009likelihood}, though these approaches are often sensitive to small changes in modeling assumptions \citep{ho2022weak}. Barring these assumptions, it is often only feasible to draw inference pertaining to bounds on effects in principal strata \citep{imai2008sharp,zhang2008evaluating}. 

Building on this past work, in this paper we develop assumption-free identification of bounds on Naturally Infected effects, as well as approaches for identification under an exclusion restriction and partial principal ignorability. We discuss the plausibility of these various assumptions specifically in the vaccine and infectious disease context, highlighting specific trial design elements that may help researchers choose between assumptions in practice. 

Our work on bounds is related, but distinct from previous work identifying bounds for effects in the Doomed strata \citep{hudgens2006causal}. We also draw connections between Naturally Infected effects and the chop lump test that has been proposed for testing vaccine effects on infection-necessary post-infection outcomes \citep{follmann2009chop}.

Our results pertaining to identification under an exclusion restriction is closely related to the well-known local average treatment effect under one-sided non-compliance \citep{angrist1996identification} and recent work on efficient ``treatment responder'' analysis \citep{nordland2024estimation}. However, this appears to be the first discussion of how these approaches can be used to study effects on post-infection endpoints in the context of infectious diseases.

The findings regarding identification and estimation of effects under a form of principal ignorability relate closely to recent results on efficient and robust estimation of principal strata effects \citep{jiang2022multiply}. However, in contrast to these results, our principal stratum of interest is partially identifiable, which leads to identification using a weaker form of principal ignorability than is typically utilized in the literature. To complement our results on effects in the Naturally Infected, in Supplement H, we also provide detailed identification and estimation procedures for the effect in the Doomed principal stratum \citep{hudgens2006causal, halloran2012causal} under a form of principal ignorability, which have not been previously discussed in the literature.

\section{Inverse probability weighting and plug-in estimators of Naturally Infected effects}

In this section, we describe singly robust estimators of the effects of interest. These estimators are generally compatible with estimation of relevant nuisance parameters using only parametric working models, with inference obtained utilizing appropriate nonparametric bootstrap methods. We provide explicit expressions for the singly robust estimators of $\psi_0$; estimators of other estimands follow straightforwardly from their identifying functionals.

To generate plug-in estimators, we can replace nuisance parameters appearing in their identifying functionals with estimates based on parametric working models. Thus, for example, a plug-in estimator of $\psi_0$ can be computed as \begin{align*}
    \psi_{0,n} = \frac{1}{n} \sum_{i=1}^n  \frac{\bar{\rho}_{0,n}(X_i)}{\bar{\rho}_{0,n}} \mu_{10,n}(X_i) \ .
\end{align*}
Similar estimators for other identifying functionals can easily be constructed.

To generate inverse probability weighted (IPW) estimators, we must first express identifying functions in a suitable IPW form. For example, $\psi_0$ can be expressed as \begin{align*}
    \psi_0 = E \left[ 
        \frac{S}{\bar{\rho}_0} \frac{(1 - Z)}{\pi_0(X)} Y 
    \right] \ . 
\end{align*}
An IPW estimator can then be constructed as \begin{align*}
    \psi_{0,n} = \frac{1}{n} \sum_{i=1}^n  \frac{S_i}{\bar{\rho}_{0,n}} \frac{(1 - Z_i)}{\pi_{0,n}(X_i)} Y_i \ ,
\end{align*}
where $\pi_{0,n}$ can either be estimated based on a parametric working model or can use the known randomization probability and $\bar{\rho}_{0,n}$ can either be based on a marginalized parametric working model or can be set to the sample proportion of infected placebo recipients.

A similar strategy can be used to generate IPW estimates of the other estimands described. These can be based off the following IPW representations of parameters, including those defined in the Doomed population (see Supplement H): \begin{align*}
    \psi_{1,\text{ER}} &= 
    \frac{1}{\bar{\rho}_0} \left( E\left\{ \frac{Z}{\pi_1(X)} Y \right\} - E\left[\frac{Z(1-S)}{\pi_0(X)\{ 1 - \rho_0(X) \}} Y \right] (1 - \bar{\rho}_0) \right)  \\
    \psi_{1,\text{PI}} &= E \left(
        \frac{S}{\bar{\rho}_0} 
        \left[
            \frac{Z}{\pi_1(X)} + \frac{(1 - Z)}{\pi_0(X)} \left\{1 - \frac{\rho_1(X)}{\rho_0(X)} \right\}
        \right] Y 
    \right) \ , \\
    \eta_0 &= E \left\{ 
        \frac{(1 - Z)}{\pi_0(X)} \frac{S}{\rho_0(X)} \frac{\rho_1(X)}{\bar{\rho}_1} Y
    \right\} \ , \\
    \eta_1 &= E \left\{ 
        \frac{Z}{\pi_1(X)} \frac{S}{\bar{\rho}_1} Y
    \right\} \ . 
\end{align*}

\section{Additional results on bounds}

\subsection{Estimation of bounds with tied outcomes}

To estimate bounds for a post-infection outcome with ties, we can compute $\bar{\rho}_{z,n}$,  $\bar{\mu}_{11,n}$, and $q_n$ as above. Let $n_{10} = \sum_{i=1}^n I(Z_i = 1, S_i = 0)$ denote the number of uninfected vaccine recipients and define $n^* = \lceil q_n \times n_{10} \rceil$. To obtain an estimate of the lower bound, we order post-infection outcomes in the uninfected vaccine recipients from smallest to largest. Let $Y^*_{[i]}$ denote the $i$-th smallest value observed in this group, $i = 1, \dots, n_{10}$. We can then compute the estimate $\bar{\mu}_{10,\ell,n} = \frac{1}{n^*} \sum_{i=1}^n Y^*_{[i]}$, which is the average of the $n^*$ smallest values of the post-infection outcome in the vaccine uninfected group. This estimate can then be used to compute the final estimate $\ell_n$ of $\ell$. An estimate of the upper bound can be computed by averaging the $n^*$ largest outcomes in the uninfected vaccine recipients to generate an estimate $\bar{\mu}_{10,u,n}$ that can similarly be used to compute an estimate $u_n$ of $u$.

\subsection{Covariate-adjusted bounds}

We propose the following covariate-adjusted bounds. Let 
\begin{align*}
\ell(x) &= \mu_{11}(x)\frac{\rho_{1}(x)}{\rho_{0}(x)} 
+ \mu_{10,l}(x)\left(1 - \frac{\rho_{1}(x)}{\rho_{0}(x)}\right) \ , \ \mbox{and} \\
u(x) &= \mu_{11}(x)\frac{\rho_{1}(x)}{\rho_{0}(x)} 
+ \mu_{10,u}(x)\left(1 - \frac{\rho_{1}(x)}{\rho_{0}(x)}\right) \ .
\end{align*}
Following the proof of Theorem 3, we have that $(\ell(x), u(x))$ is a valid bound for $E\{Y(1) \mid S(0) = 1, X = x\}$ for any given $x$. Thus, $\bar{\ell} = \sum_x l(x) P(X=x)$ and $\bar{u} = \sum_x u(x) P(X=x)$ are bounds for the marginal quantity $E\{Y(1) \mid S(0) = 1\}$.

We can derive conditions under which we will have sharper bounds utilizing covariates. For the lower bound:
\begin{align*}
&\sum_{x}\Bigg[\mu_{11}(x)\frac{\rho_1(x)}{\rho_0(x)}
\;+\;\mu_{10,l}(x)\Bigl(1-\frac{\rho_1(x)}{\rho_0(x)}\Bigr)\Bigg]P(X=x) \\
&\quad >\; \bar{\mu}_{11}\frac{\bar \rho_1}{\bar \rho_0}
\;+\;\bar{\mu}_{10,l}\Bigl(1-\frac{\bar \rho_1}{\bar \rho_0}\Bigr) \\
\Rightarrow\quad
&\sum_{x}\mu_{11}(x)\frac{\rho_1(x)}{\rho_0(x)}P(X=x)
\;+\;\sum_{x}\mu_{10,l}(x)\Bigl(1-\frac{\rho_1(x)}{\rho_0(x)}\Bigr)P(X=x) \\
&\qquad -\;\bar{\mu}_{11}\frac{\bar \rho_1}{\bar \rho_0}
\;-\;\bar{\mu}_{10,l}\Bigl(1-\frac{\bar \rho_1}{\bar \rho_0}\Bigr)
\;>\;0 \\
\Rightarrow\quad
&\sum_{x}\mu_{11}(x)\frac{\rho_1(x)}{\rho_0(x)}P(X=x)
\;+\;\sum_{x}\mu_{10,l}(x)\Bigl(1-\frac{\rho_1(x)}{\rho_0(x)}\Bigr)P(X=x) \\
&\qquad -\;\sum_{x}\mu_{11}(x)\frac{\bar \rho_1}{\bar \rho_0}P(X=x)
\;-\;\bar{\mu}_{10,l}\Bigl(1-\frac{\bar \rho_1}{\bar \rho_0}\Bigr)
\;>\;0 \\
\Rightarrow\quad
&\sum_{x}\mu_{11}(x)\Biggl[\frac{\rho_1(x)}{\rho_0(x)}-\frac{\bar \rho_1}{\bar \rho_0}\Biggr]P(X=x) \\
&\qquad +\;\sum_{x}\mu_{10,l}(x)\Bigl(1-\frac{\rho_1(x)}{\rho_0(x)}\Bigr)P(X=x)
\;-\;\bar{\mu}_{10,l}\Bigl(1-\frac{\bar \rho_1}{\bar \rho_0}\Bigr)
\;>\;0 \ . 
\end{align*}

For the upper bound:

\begin{align*}
&\sum_{x}\Bigg[\mu_{11}(x)\frac{\rho_1(x)}{\rho_0(x)}
\;+\;\mu_{10,u}(x)\Bigl(1-\frac{\rho_1(x)}{\rho_0(x)}\Bigr)\Bigg]P(X=x) \\
&\quad <\; \bar{\mu}_{11}\frac{\bar \rho_1}{\bar \rho_0}
\;+\;\bar{\mu}_{10,u}\Bigl(1-\frac{\bar \rho_1}{\bar \rho_0}\Bigr) \\
\Rightarrow\quad
&\sum_{x}\mu_{11}(x)\frac{\rho_1(x)}{\rho_0(x)}P(X=x)
\;+\;\sum_{x}\mu_{10,u}(x)\Bigl(1-\frac{\rho_1(x)}{\rho_0(x)}\Bigr)P(X=x) \\
&\qquad -\;\bar{\mu}_{11}\frac{\bar \rho_1}{\bar \rho_0}
\;-\;\bar{\mu}_{10,u}\Bigl(1-\frac{\bar \rho_1}{\bar \rho_0}\Bigr)
\;<\;0 \\
\Rightarrow\quad
&\sum_{x}\mu_{11}(x)\frac{\rho_1(x)}{\rho_0(x)}P(X=x)
\;+\;\sum_{x}\mu_{10,u}(x)\Bigl(1-\frac{\rho_1(x)}{\rho_0(x)}\Bigr)P(X=x) \\
&\qquad -\;\sum_{x}\mu_{11}(x)\frac{\bar \rho_1}{\bar \rho_0}P(X=x)
\;-\;\bar{\mu}_{10,u}\Bigl(1-\frac{\bar \rho_1}{\bar \rho_0}\Bigr)
\;<\;0 \\
\Rightarrow\quad
&\sum_{x}\mu_{11}(x)\Biggl[\frac{\rho_1(x)}{\rho_0(x)}-\frac{\bar \rho_1}{\bar \rho_0}\Biggr]P(X=x) \\
&\qquad +\;\sum_{x}\mu_{10,u}(x)\Bigl(1-\frac{\rho_1(x)}{\rho_0(x)}\Bigr)P(X=x)
\;-\;\bar{\mu}_{10,u}\Bigl(1-\frac{\bar \rho_1}{\bar \rho_0}\Bigr)
\;<\;0 \ .
\end{align*}

Without additional structure on the data generating distribution, it is difficult to understand when these inequalities may be expected to hold. Thus, it is not straightforward to interpret these inequalities in terms that are useful for selecting which covariates (if any) would result in sharper bounds for the effects of interest. We explore in simulations and data analysis the extent to which covariates sharpen bounds empirically and leave to future work explicating conditions under which sharpening is guaranteed.

\section{Sensitivity analysis for partial principal ignorability}

\setcounter{assump}{0}
\renewcommand{\theassump}{S\arabic{assump}}

To assess sensitivity to the partial principal ignorability assumption, we propose an identification based on the following assumption.
\begin{assump}
    For all $x$ and for $\epsilon \in \mathbb{R}^+$, \begin{equation*}
        \frac{E\{ Y(1) \mid S(1) = 0, S(0) = 0, X = x\}}{E\{Y(1) \mid S(1) = 0, S(0) = 1, X = x \}} = \epsilon \ . \label{eq:sens_assmp}
\end{equation*} 
\end{assump}
Similar sensitivity analyses for other principal effects are described in \citet{ding2017principal}. 

\begin{thm}
Under Assumptions 1-5, 8, and S1, $E\{ Y(1) \mid S(0) = 1 \} = \psi_{1, \text{PI}, \epsilon}$, where \begin{equation}
    \psi_{1, \text{PI}, \epsilon} = E \left(
    \frac{\rho_0(X)}{\bar{\rho}_0} \left[
     \mu_{11}(X) \frac{\rho_1(X)}{\rho_0(X)} + \mu_{10}(X) \frac{ \{ 1 - \rho_1(X) \} }{ \rho_0(X) - \rho_1(X) + \epsilon \{ 1 - \rho_0(X) \}} \left\{ 1 - \frac{\rho_1(X)}{\rho_0(X)} \right\}
    \right]
    \right)
    \label{eqn:id_psi1_epsilon}
\end{equation}

\end{thm}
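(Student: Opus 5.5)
The plan is to redo the argument of Theorem \ref{thm:id_EY1} conditional on $X$, with Assumption S1 taking the place of partial principal ignorability. The goal is to identify the conditional Protected mean
\[
m_P(x) = E\{Y(1) \mid S(1)=0, S(0)=1, X=x\}
\]
in terms of $\mu_{10}(x)$, $\rho_0(x)$, $\rho_1(x)$ and $\epsilon$. Once that is done, the conditional version of the decomposition \eqref{eq:nat_inf_mean_under_vax} is averaged over the law of $X$ given $S(0)=1$.

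\emph{Step 1: decompose $\mu_{10}(x)$ by stratum.} By randomization (applied within $X$-strata, since $Z$ is independent of $X$ and of the potential outcomes), consistency, and monotonicity, the set $\{Z=1, S=0\}$ is the set $\{S(1)=0\}$, which is the union of the Immune and Protected strata. The conditional proportions are identified by Theorem \ref{thm:partial_id} applied given $X=x$:
\[
P\{S(0)=1 \mid S(1)=0, X=x\} = \frac{\rho_0(x)-\rho_1(x)}{1-\rho_1(x)},
\qquad
P\{S(0)=0 \mid S(1)=0, X=x\} = \frac{1-\rho_0(x)}{1-\rho_1(x)}.
\]
Hence
\[
\mu_{10}(x) = E\{Y(1)\mid S(1)=0,X=x\}
= m_I(x)\,\frac{1-\rho_0(x)}{1-\rho_1(x)} + m_P(x)\,\frac{\rho_0(x)-\rho_1(x)}{1-\rho_1(x)},
\]
where $m_I(x)$ is the Immune mean. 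Assumption S1 gives $m_I(x)=\epsilon\, m_P(x)$. Substituting and solving yields
\[
m_P(x) = \mu_{10}(x)\,\frac{1-\rho_1(x)}{\rho_0(x)-\rho_1(x)+\epsilon\{1-\rho_0(x)\}}.
\]
Positivity (Assumption 8) together with $\epsilon>0$ keeps the denominator strictly positive on the relevant support. As a check, setting $\epsilon=1$ gives $m_P=\mu_{10}$, which recovers Theorem \ref{thm:id_EY1}.

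\emph{Step 2: assemble.} Given $X=x$, the conditional analogue of \eqref{eqn:partial_id_result} is
\[
E\{Y(1)\mid S(0)=1,X=x\} = \mu_{11}(x)\frac{\rho_1(x)}{\rho_0(x)} + m_P(x)\left\{1-\frac{\rho_1(x)}{\rho_0(x)}\right\}.
\]
Here the Doomed mean given $X=x$ equals $\mu_{11}(x)$, because $\{Z=1,S=1\}$ corresponds to $S(1)=1$, which under monotonicity implies $S(0)=1$.

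Finally, integrate over $X$ given $S(0)=1$. Since $P\{S(0)=1\mid X=x\}=\rho_0(x)$ and $P\{S(0)=1\}=\bar\rho_0$, Bayes' rule gives the density ratio $\rho_0(X)/\bar\rho_0$. Multiplying the conditional expression by this ratio and taking expectations over the marginal law of $X$ produces \eqref{eqn:id_psi1_epsilon}.

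The main obstacle is Step 1. It requires being careful that the conditional stratum proportions are identified within each $X$ stratum, which uses randomization conditional on $X$ and monotonicity. It also requires the denominator to be nonzero, where I would rely on Assumption 8 and $\epsilon>0$. The remaining steps are routine bookkeeping that mirrors the proofs of Theorems \ref{thm:id_EY0} and \ref{thm:id_EY1}.
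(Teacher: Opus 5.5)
Your proposal is correct and follows essentially the same route as the paper's proof: write $\mu_{10}(x)=E\{Y(1)\mid S(1)=0,X=x\}$ as a mixture of the Immune and Protected means with weights $\{1-\rho_0(x)\}/\{1-\rho_1(x)\}$ and $\{\rho_0(x)-\rho_1(x)\}/\{1-\rho_1(x)\}$. Then substitute Assumption S1, solve for the Protected mean, and plug it into the Doomed/Protected decomposition of $E\{Y(1)\mid S(0)=1\}$. Your conditional-on-$X$ assembly with weight $\rho_0(X)/\bar{\rho}_0$ and your check that the denominator stays positive just spell out details the paper leaves terse; note that what each $X$-stratum step actually needs is randomization conditional on $X$, $Z \perp \{S(z),Y(z)\} \mid X$, which the paper also uses implicitly, rather than $Z \perp X$.
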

Assumption S1 relates the expected post-infection outcome in the Immune stratum relative to the Protected stratum. We note that partial principal ignorability (Assumption 7) implies that $\epsilon = 1$ and in that case (\ref{eqn:id_psi1_epsilon}) reduces to the previous identification result given in Theorem 5. A sensitivity analysis can be implemented by varying the value of the constant $\epsilon$ and demonstrating how estimates of $\psi_{1,\epsilon}$ vary with $\epsilon$. We suggest that a relevant sensitivity analysis is to report values of $\epsilon$ at which the point estimate (or confidence interval limit) of the effect is equal to the estimated lower and upper bounds. Beyond this restriction of the possible values of $\epsilon$, we expect that scientific context can often inform a narrower range of values.

To aid in construction of efficient estimators of $\psi_{1, \text{PI}, \epsilon}$ we have the following theorem establishing its efficient influence function in our model. We define the following quantities, which are useful for concisely expressing the result: \begin{align*}
    \psi_{1, \text{PI}, \epsilon} 
    &= E \left(
    \frac{\rho_0(X)}{\bar{\rho}_0} \left[
     \mu_{11}(X) \frac{\rho_1(X)}{\rho_0(X)} + \mu_{10}(X) \frac{ \{ 1 - \rho_1(X) \} }{ \rho_0(X) - \rho_1(X) + \epsilon \{ 1 - \rho_0(X) \}} \left\{ 1 - \frac{\rho_1(X)}{\rho_0(X)} \right\}
    \right]
    \right) \\
    &= E \left(
    \underbrace{\frac{\rho_1(X)}{\bar{\rho}_0}\mu_{11}(X)}_{\psi_{11,\text{PI}, \epsilon \mid X}(X)} + \underbrace{\frac{\rho_0(X) -  \rho_1(X) }{\bar{\rho}_0} \frac{ \{ 1 - \rho_1(X) \} }{ (1-\epsilon) \rho_0(X) - \rho_1(X) + \epsilon  \}}  \mu_{10}(X)}_{\psi_{10,\text{PI}, \epsilon \mid X} (X)} 
    \right) \ .
\end{align*}
We similarly define $\psi_{11,\text{PI}, \epsilon} = E\{ \psi_{11,\text{PI},\epsilon \mid X} (X) \}$ and $\psi_{10, \text{PI}, \epsilon} = E\{ \psi_{10, \text{PI}, \epsilon \mid X} (X) \}$

\begin{thm}
    The efficient gradient of $\psi_{1, \text{PI}, \epsilon}$ in a model for the observed data that is nonparametric up to Assumptions 5 and 8 is $\Phi_{1, \epsilon} - \psi_{1, \text{PI}, \epsilon}$, where for a typical observation $O_i$,
\begin{equation}
\begin{aligned}
    \Phi_{1, \epsilon}(O_i) &= \frac{Z_i}{\pi_1(X_i)} \frac{S_i}{\bar{\rho}_0} \{ Y_i - \mu_{11}(X_i) \} + \frac{Z_i}{\pi_1(X_i)} \frac{\mu_{11}(X_i)}{\bar{\rho}_0} \{ S_i - \rho_{1}(X_i) \} \\ 
    &\hspace{0.15cm} - \frac{\psi_{11,\text{PI},\epsilon}}{\bar{\rho}_0} \frac{(1 - Z_i)}{\bar{\pi}_0} \{ S_i - \bar{\rho}_0 \} + \psi_{11,\text{PI},\epsilon \mid X}(X_i) 
    \\
    &\hspace{0.15cm} + \frac{Z_i}{\pi_1(X_i)} \frac{(1 - S_i)}{\bar{\rho}_0} \frac{\{ \rho_0(X_i) - \rho_1(X_i) \} }{\{ (1 - \epsilon) \rho_0(X_i) - \rho_1(X_i) + \epsilon\} } \{ Y_i - \mu_{10}(X_i) \} \\ 
    &\hspace{0.15cm} + \frac{(1 - Z_i)}{\pi_0(X_i)} \frac{\{1 - \rho_1(X_i)\}}{\{ (1-\epsilon) \rho_0(X_i) - \rho_1(X_i) + \epsilon \} } \frac{\mu_{10}(X_i)}{\bar{\rho}_0} \{ S_i - \rho_0(X_i) \} \\ 
    &\hspace{0.15cm} - \frac{Z_i}{\pi_1(X_i)} \frac{\{ 1- \rho_1(X_i)\}}{\{ (1-\epsilon) \rho_0(X_i) - \rho_1(X_i) + \epsilon\} } \frac{\mu_{10}(X_i)}{\bar{\rho}_0} \{ S_i - \rho_1(X_i) \}   \\
    &\hspace{0.15cm} - \frac{\psi_{10,\text{PI},\epsilon}}{\bar{\rho}_0} \frac{(1 - Z_i)}{\bar{\pi}_0} \{ S_i - \bar{\rho}_0 \} \\
    &\hspace{0.15cm} - \frac{Z_i}{\pi_1(X_i)} \frac{\{ \rho_0(X_i) - \rho_1(X_i) \}}{\bar{\rho}_0} \frac{\mu_{10}(X_i)}{\{ (1-\epsilon) \rho_0(X_i) - \rho_1(X_i) + \epsilon\}} \{ S_i - \rho_1(X_i) \}  \\ 
    &\hspace{0.15cm} - (1-\epsilon) \frac{(1 - Z_i)}{\pi_0(X_i)} \frac{\{ \rho_0(X_i) - \rho_1(X_i)\}}{\bar{\rho}_0} \frac{\{1 - \rho_1(X_i)\}}{\{(1-\epsilon) \rho_0(X_i) - \rho_1(X_i) + \epsilon\}^2} \mu_{10}(X_i) \{ S_i - \rho_0(X_i) \}  \\ 
    &\hspace{0.15cm} + \frac{Z_i}{\pi_1(X)} \frac{\{ \rho_0(X_i) - \rho_1(X_i) \}}{\bar{\rho}_0} \frac{\{ 1 - \rho_1(X_i)\}}{\{(1-\epsilon) \rho_0(X_i) - \rho_1(X_i) + \epsilon\}^2} \mu_{10}(X_i) \{ S_i - \rho_1(X_i) \}  \\ 
    &\hspace{0.15cm} + \psi_{10,\text{PI},\epsilon \mid X}(X_i) \ . 
\end{aligned}
\label{eqn:eif_psi1eps}
\end{equation}
\label{thm:eif_psi1eps}
\end{thm}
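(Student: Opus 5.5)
We derive the efficient influence function by Gateaux differentiation along a smooth parametric submodel $P_t$ through $P$ with score $g(O) = g_X(X) + g_Z(Z \mid X) + g_S(S \mid Z, X) + g_Y(Y \mid S, Z, X)$. The model is nonparametric up to the positivity restrictions (Assumptions 5 and 8), so the tangent space is saturated. Hence any pathwise derivative representation $\frac{d}{dt}\psi_{1,\text{PI},\epsilon}(P_t)|_{t=0} = E\{D(O) g(O)\}$ with mean-zero $D$ identifies $D$ as the unique, hence efficient, gradient.

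We split the functional as $\psi_{1,\text{PI},\epsilon} = \psi_{11,\text{PI},\epsilon} + \psi_{10,\text{PI},\epsilon}$. Each piece is written as $E\{h(X)\}/\bar{\rho}_0$, where
\[
h_{11}(x) = \rho_1(x)\mu_{11}(x), \qquad h_{10}(x) = \{\rho_0(x)-\rho_1(x)\}\{1-\rho_1(x)\}\mu_{10}(x)/D_\epsilon(x),
\]
with $D_\epsilon(x) = (1-\epsilon)\rho_0(x) - \rho_1(x) + \epsilon$. We then apply the product and chain rules using standard building-block influence functions:
\begin{itemize}
\item for $\mu_{zs}(x)$: $\frac{I(Z=z,S=s)}{\pi_z(x)P(S=s\mid Z=z,x)}\{Y-\mu_{zs}(x)\}$;
\item for $\rho_z(x)$: $\frac{I(Z=z)}{\pi_z(x)}\{S-\rho_z(x)\}$;
\item for the marginal $E\{h(X)\}$: $h(X)-E\{h(X)\}$;
\item for $\bar{\rho}_0$: either $\rho_0(X)-\bar\rho_0$ plus the $\rho_0$ term, or, as in the displayed statement, the unadjusted form $\frac{1-Z}{\bar\pi_0}(S-\bar\rho_0)$.
\end{itemize}
The quotient rule for $1/\bar\rho_0$ produces the terms $-\frac{\psi_{1j,\text{PI},\epsilon}}{\bar\rho_0}(\cdot)$.

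For the $\psi_{11}$ piece, this yields immediately the first line of \eqref{eqn:eif_psi1eps}: the residual term in $Y$, the $\mu_{11}$-weighted $S-\rho_1$ term, the $\bar\rho_0$ correction, and $\psi_{11,\text{PI},\epsilon\mid X}$.

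For the $\psi_{10}$ piece, we differentiate $h_{10}$ with respect to each of its four nuisance inputs separately.
\begin{itemize}
\item \emph{With respect to $\mu_{10}$.} The weight $\frac{Z}{\pi_1}\frac{1-S}{1-\rho_1}$ multiplied by $(\rho_0-\rho_1)(1-\rho_1)/D_\epsilon$ cancels the $1-\rho_1$ factor. This gives the $Y-\mu_{10}$ term.
\item \emph{With respect to $\rho_0$ in the numerator.} This gives $\frac{1-Z}{\pi_0}\frac{(1-\rho_1)\mu_{10}}{D_\epsilon}(S-\rho_0)$.
\item \emph{With respect to $\rho_1$ in the numerator.} The numerator $(\rho_0-\rho_1)(1-\rho_1)$ has derivative $-(1-\rho_1)-(\rho_0-\rho_1)$. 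This produces the two negative $S-\rho_1$ terms.
\item \emph{With respect to the denominator $D_\epsilon$.} Its derivative is $-h_{10}/D_\epsilon$ times $\{(1-\epsilon)\,d\rho_0 - d\rho_1\}$. This gives the two squared-denominator terms with the stated signs: $-(1-\epsilon)$ on the $\rho_0$ term and $+$ on the $\rho_1$ term.
\end{itemize}
Finally, we add the $\bar\rho_0$ correction and $\psi_{10,\text{PI},\epsilon\mid X}(X)$. Subtracting the constant $\psi_{1,\text{PI},\epsilon}$ then gives a mean-zero gradient, and we would verify this directly by iterated expectations.

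The main obstacle is bookkeeping rather than concept. We must track the signs and the cancellations of $1-\rho_1(X)$ in the $D_\epsilon$-derivative terms. We must also justify the particular representation of the $\bar\rho_0$ contribution. Using $\frac{1-Z}{\bar\pi_0}(S-\bar\rho_0)$ rather than the covariate-adjusted version is legitimate only when these differ by a term orthogonal to the tangent space. Because the model is saturated, the efficient gradient is unique; so we would check that the displayed form equals the adjusted form under the model, by confirming that the difference is zero or else absorbed by the $\rho_0$-residual terms. If it is not, we would replace that term with the adjusted $\bar\rho_0$ gradient, as used in Theorem \ref{thm:eif_psi1}. As a sanity check, setting $\epsilon=1$ should recover Theorem \ref{thm:eif_psi1}.
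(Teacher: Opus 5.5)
Your route matches the paper's. Its proof is only a pointer to the building blocks in Supplement K.6: conditional-mean residuals, the marginal-$X$ term, and the ratio rule. Your derivatives of $h_{11}$ and $h_{10}$ correctly reproduce every $X$-indexed term of the display (lines 1, 3--5 and 7--9, plus the two $\psi_{1j,\text{PI},\epsilon\mid X}$ terms).

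\textbf{The gap.} It is the $\bar\rho_0$ contribution, which you leave open, and the check you propose fails. The difference between the covariate-adjusted and displayed forms is $\frac{1-Z}{\pi_0(X)}\{S-\rho_0(X)\}+\rho_0(X)-\bar\rho_0-\frac{1-Z}{\bar\pi_0}(S-\bar\rho_0)$, and it is not zero. Even with $\pi_0\equiv\bar\pi_0$ it equals $\{\rho_0(X)-\bar\rho_0\}\{1-(1-Z)/\bar\pi_0\}$. It cannot be absorbed into the $\rho_0$-residual terms, because their coefficients are already fixed by $\partial h_{10}/\partial\rho_0$. Your orthogonality criterion does not rescue it either:
\begin{itemize}
\item In the saturated model, nothing nonzero is orthogonal to the tangent space.
\item In a model with known constant randomization probability, the difference \emph{is} orthogonal, but there the adjusted form is the efficient one and the display is merely a gradient.
\end{itemize}
Your fallback of replacing the term by the adjusted gradient therefore proves a different formula, not the stated one.

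\textbf{The missing idea.} The two forms are efficient influence functions of two different functionals, which agree in value only when $Z\perp X$.
\begin{itemize}
\item The paper defines $\bar\rho_0=P(S=1\mid Z=0)=E\{(1-Z)S\}/E(1-Z)$. By the ratio rule, its nonparametric influence function is exactly $(1-Z)(S-\bar\rho_0)/\bar\pi_0$, with $\bar\pi_0=P(Z=0)$.
\item The adjusted form is instead the influence function of $E\{\rho_0(X)\}$.
\end{itemize}
Under the first reading, the quotient rule applied to $E\{h_{11}(X)+h_{10}(X)\}/\bar\rho_0$ produces the $\bar\rho_0$ terms in lines 2 and 6 directly, and the theorem holds as stated. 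Your planned mean-zero check settles which reading is meant. At a law in the model where $Z$ depends on $X$, $E\{(1-Z)(S-\bar\rho_0)\}=0$ holds only for $\bar\rho_0=P(S=1\mid Z=0)$.

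\textbf{The $\epsilon=1$ check.} For the same reason, it matches Theorem \ref{thm:eif_psi1} in every term but this one. Line 8 vanishes, lines 7 and 9 cancel, and lines 1 and 5 combine into the $\{\mu_{11}-\mu_{10}\}$ term. Theorem \ref{thm:eif_psi1}, like the Supplement K.6 recipe followed literally, uses the $E\{\rho_0(X)\}$ convention. The remaining mismatch therefore does not signal an error in the display.
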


This can be shown using the same techniques outlined for other parameters above.

\section{Design considerations for identifying assumptions}

Both the assumption of exclusion restriction and partial principal ignorability are fundamentally cross-world in nature, with both assumptions involving a condition on counterfactuals defined under vaccination and no vaccination. Thus, these conditions must be scrutinized in each context to determine their plausibility. 

A key consideration for the validity of the exclusion restriction assumption is whether and to what extent the random variable $S$ truly measures infection status. If $S$ is a highly sensitive measure of infection, then the exclusion restriction is likely reasonable for many post-infection outcomes -- there is generally no way for a vaccine to impact outcomes that directly result from infection in the absence of an infection. However, randomized trials commonly employ \emph{passive surveillance} for infections, whereby participants are encouraged to seek care if they experience symptoms related to infection. At these visits, infection is confirmed using an appropriate diagnostic. Barring symptoms, however, participants may only be seen at several routinely scheduled study visits. Such a design may lead to asymptomatic or mildly symptomatic infections being missed during the course of follow-up. The possibility for missed infections may call into question the validity of the exclusion restriction, unless it can be argued that either asymptomatic infections are so mild as to have no impact on the post-infection outcome of interest or that the vaccine has no effect on asymptomatic infections. If $S$ is not a sensitive measure of infection and asymptomatic/mildly symptomatic infections are likely to impact the outcome of interest, then it may be preferable to base inference on bounds or the weak principal ignorability estimand and include relevant sensitivity analyses to assess robustness of results to these assumptions.

A notable exception to the above discussion regarding plausibility of the exclusion restriction is post-infection outcomes $Y$ that are also potential side effects of the vaccine, such as adverse events of special interest (AESI). Such events are often negative side effects that are related to the biological mechanism of the vaccine. Because the mechanism of vaccines is often to simulate a mild infection, clinical AESI events often occur after natural infection as well and therefore may be interesting to study as post-infection endpoints. In this case, the exclusion restriction would be unlikely to hold, as we would expect a negative vaccine effect in the Immune, reflecting the occurrence of vaccine-related AESIs. We would argue that naturally infected effects are unlikely to be the target casual effect of interest for these outcomes because they exclude important vaccine effects in the Immune; population-level effects may be more clinically relevant here.

\section{Semiparametric estimator under exclusion restriction and partial principal ignorability}

\begin{thm}
    If both the exclusion restriction and partial principal ignorability hold then $Y \perp Z \mid S = 0, X$, and for each distribution in a semiparametric model that respects this conditional independence, we have $\psi_{1,\text{ER}} = \psi_{1,\text{PI}}$. 
    \label{thm:equiv_er_and_pi_id}
\end{thm}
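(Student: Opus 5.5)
The plan has two parts. First I show that the two assumptions together force the observed-data restriction $\mu_{10}(x) = \mu_{00}(x)$, and in fact the full conditional independence $Y \perp Z \mid S = 0, X$. Second, I show that on any distribution obeying this restriction the two functionals $\psi_{1,\text{ER}}$ and $\psi_{1,\text{PI}}$ reduce to the same expression. Throughout I use randomization (Assumption 4) in its trial form: $Z$ is independent of $(X, S(0), S(1), Y(0), Y(1))$. In particular $X \perp Z$, so $P(X \mid Z = z) = P(X)$.

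\emph{Step 1: the conditional independence.} Fix $x$. By consistency and randomization, the law of $Y \mid Z = 0, S = 0, X = x$ equals the law of $Y(0) \mid S(0) = 0, X = x$. By monotonicity (Assumption 3), $S(0) = 0$ implies $S(1) = 0$, so this is the Immune stratum. On the Immune stratum the strong exclusion restriction (Assumption 6) gives $Y(0) = Y(0,0) = Y(1,0) = Y(1)$ almost surely. Hence $Y \mid Z = 0, S = 0, X = x$ has the law of $Y(1) \mid S(0) = 0, S(1) = 0, X = x$.

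On the other side, $Y \mid Z = 1, S = 0, X = x$ has the law of $Y(1) \mid S(1) = 0, X = x$. This is a mixture over the Immune and Protected strata. Partial principal ignorability (Assumption 7) says the conditional law of $Y(1)$ given $S(1) = 0, X = x$ does not depend on $S(0)$. Therefore the mixture equals its Immune component, namely $Y(1) \mid S(0) = 0, S(1) = 0, X = x$.

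Comparing the two displays gives $Y \mid Z = 1, S = 0, X \overset{d}{=} Y \mid Z = 0, S = 0, X$, which is the claimed independence $Y \perp Z \mid S = 0, X$. In particular $\mu_{10} = \mu_{00} =: m$. Positivity (Assumptions 5 and 8) ensures these conditional laws are well defined on the relevant support.

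\emph{Step 2: equality of the functionals.} Now take any distribution in the semiparametric model, so that $\mu_{10} = \mu_{00} = m$, and expand each term of $\psi_{1,\text{ER}}$ over $X$ using $X \perp Z$.
\begin{itemize}
\item The marginal mean under vaccine is $\bar{\mu}_{1\cdot} = E\{\rho_1(X)\mu_{11}(X) + (1 - \rho_1(X)) m(X)\}$.
\item The placebo uninfected mean is $\bar{\mu}_{00} = E\{(1 - \rho_0(X)) m(X)\}/(1 - \bar{\rho}_0)$. This follows because the density of $X$ given $Z = 0, S = 0$ is proportional to $(1 - \rho_0(x))\, dP(x)$, with normalizing constant $1 - \bar{\rho}_0$.
\end{itemize}
Substituting into $\psi_{1,\text{ER}}$ from Theorem \ref{thm:point_ID}, the numerator becomes
\[
E\{\rho_1(X)\mu_{11}(X) + (\rho_0(X) - \rho_1(X)) m(X)\}.
\]
Dividing by $\bar{\rho}_0$ gives exactly the expression in (\ref{eqn:id_psi1}) with $\mu_{10} = m$, so $\psi_{1,\text{ER}} = \psi_{1,\text{PI}}$.

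\emph{Main obstacle.} The step needing most care is keeping conditional and marginal quantities separate in Step 2. The expression for $\bar{\mu}_{00}$ must use the conditional distribution of $X$ given $Z = 0, S = 0$, not the marginal of $X$. This depends on $X \perp Z$, which I would state explicitly as part of Assumption 4 in the trial setting. If covariates were not independent of $Z$, the same argument would go through with $\pi_z(x)$-weighting, but the reduction of $\bar{\mu}_{1\cdot}$ and $\bar{\mu}_{00}$ to expectations over $P(X)$ would then need to be redone.

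A secondary subtlety arises in Step 1 if Assumption 7 is read only as mean-independence. In that case the argument delivers only $\mu_{10} = \mu_{00}$ rather than full conditional independence. That weaker conclusion is still all that Step 2 uses, so I would note that the equality $\psi_{1,\text{ER}} = \psi_{1,\text{PI}}$ needs only the mean version.
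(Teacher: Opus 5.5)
Your proof is correct and follows essentially the same route as the paper. Monotonicity, the strong exclusion restriction and partial principal ignorability give $\mu_{10}=\mu_{00}$, and substituting $\bar{\mu}_{1\cdot}=E[\rho_1(X)\mu_{11}(X)+\{1-\rho_1(X)\}\mu_{10}(X)]$ and $\bar{\mu}_{00}(1-\bar{\rho}_0)=E[\{1-\rho_0(X)\}\mu_{00}(X)]$ into $\psi_{1,\text{ER}}$ recovers $\psi_{1,\text{PI}}$.

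Working with conditional laws rather than means actually delivers the full independence $Y \perp Z \mid S=0, X$ that the statement asserts, whereas the paper's chain only equates means. Your explicit use of $X \perp Z$ is what the paper's algebra uses implicitly. Your closing aside is too optimistic, though: without $X \perp Z$ (or standardized definitions of the barred quantities) the two functionals genuinely differ. For example, $\mu_{11}\equiv 1$, $\mu_{10}=\mu_{00}\equiv 0$ gives numerators $E\{\rho_1(X)\mid Z=1\}$ versus $E\{\rho_1(X)\}$.
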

We use $\psi_{1,\cdot}$ to denote the common value of $\psi_{1,\text{ER}}$ and $\psi_{1, \text{PI}}$ in this model. Under this set of assumptions, it is possible to use either $\psi_{1,\text{ER},n}^+$ or $\psi_{1,\text{PI},n}^+$ to estimate effects of interest; however, both will be inefficient. The key insight is that in this model the $X$-conditional mean of $Y(1)$ in the Protected strata is identified by $\mu_{\cdot 0}(X) = E(Y \mid S = 0, X)$. Thus, a plug-in estimator can be constructed as  $
\psi_{1,\cdot,n} = n^{-1} \sum_{i=1}^n \left[ \mu_{11,n}(X_i) \rho_{1,n}(X_i) +  \mu_{\cdot 0,n}(X_i) \left\{ \rho_{0,n}(X_i) - \right. \right.$ $\left.\left. \rho_{1,n}(X_i) \right\} \right]/ \bar{\rho}_{0,n},$ where $\mu_{\cdot 0, n}$ can either be estimated via direct regression of $Y$ on $X$ in the subset of data with $S = 0$ or by marginalizing estimates $\mu_{10,n}$ and $\mu_{00,n}$. Efficient one-step estimation is facilitated via the following gradient. Let $\bar{\rho}_{\cdot} = P(S = 1)$.

\begin{thm}
    The efficient gradient for regular estimators of $\psi_{1,\cdot}$ in a semiparametric model for the observed data that assumes positivity and respects both the exclusion restriction and weak principal ignorability is:\begin{align*}
        \Phi_{1, \cdot}(O_i) &= \frac{Z_i}{\pi_1(X_i)} \frac{S_i}{\bar{\rho}_0} \{Y_i - \mu_{11}(X_i) \} + \frac{(1 - S_i)}{1 - \bar{\rho}_{\cdot}} \frac{\rho_0(X_i) - \rho_1(X_i)}{\bar{\rho}_0} \{ Y_i - \mu_{\cdot 0}(X_i) \} 
\\
&\hspace{3em} + \frac{Z_i}{\pi_1(X_i)} \frac{\mu_{11}(X_i) - \mu_{\cdot 0}(X)}{\bar{\rho}_0} \{ S_i - \rho_1(X_i) \} + \frac{(1 - Z_i)}{\pi_0(X_i)} \frac{\mu_{\cdot0}(X_i) - \psi_{1,\cdot}}{\bar{\rho}_0} \{ S_i - \rho_0(X_i) \} \\
&\hspace{3em} - \frac{\psi_{1,\cdot}}{\bar{\rho}_0} \{ \rho_0(X_i) - \bar{\rho}_0 \} + \tilde{\psi}_1(X_i) - \psi_{1, \cdot}
    \end{align*}
\end{thm}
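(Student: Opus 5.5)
The plan is to derive the efficient gradient in two stages. First I will compute a gradient of $\psi_{1,\cdot}$ in the nonparametric model using the usual pathwise-derivative calculus. Then I will project it onto the tangent space of the semiparametric model defined by $Y \perp Z \mid S = 0, X$. By Theorem \ref{thm:equiv_er_and_pi_id}, $\psi_{1,\cdot}=\psi_{1,\text{PI}}$ on this model, and in the model $\mu_{10}=\mu_{00}=\mu_{\cdot 0}$. So I will work with the representation
\[
\psi_{1,\cdot} = E\Big(\big[\mu_{11}(X)\rho_1(X) + \mu_{\cdot 0}(X)\{\rho_0(X)-\rho_1(X)\}\big]/\bar{\rho}_0\Big),
\]
which is a functional defined on the full nonparametric model and agrees with $\psi_{1,\cdot}$ on the submodel.

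\textbf{Step 1 (a gradient of the extended functional).} I will differentiate along a regular parametric submodel $P_t$ with score $g$, factoring the likelihood as $p(x)\,p(z\mid x)\,p(s\mid z,x)\,p(y\mid s,z,x)$. The contributions are assembled term by term, as in Theorem \ref{thm:eif_psi1}:
\begin{itemize}
\item $\mu_{11}$ contributes $\frac{Z}{\pi_1}\frac{S}{\bar\rho_0}(Y-\mu_{11})$.
\item $\mu_{\cdot 0}(X)=E(Y\mid S=0,X)$ has influence function $\frac{1-S}{P(S=0\mid X)}(Y-\mu_{\cdot 0})$. Its weight is $(\rho_0-\rho_1)/\bar\rho_0$.
\item $\rho_1$ and $\rho_0$ give the terms $\frac{Z}{\pi_1}\frac{\mu_{11}-\mu_{\cdot0}}{\bar\rho_0}(S-\rho_1)$ and $\frac{1-Z}{\pi_0}\frac{\mu_{\cdot0}}{\bar\rho_0}(S-\rho_0)$.
\item The normalization $\bar\rho_0$ contributes $-\frac{\psi_{1,\cdot}}{\bar\rho_0}\big[\frac{1-Z}{\pi_0}(S-\rho_0)+\rho_0-\bar\rho_0\big]$.
\item The marginal of $X$ contributes $\tilde\psi_1(X)-\psi_{1,\cdot}$.
\end{itemize}
Collecting the $(1-Z)(S-\rho_0)$ terms reproduces the coefficient $(\mu_{\cdot0}-\psi_{1,\cdot})/\bar\rho_0$ in the stated display.

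\textbf{Step 2 (projection onto the semiparametric tangent space).} The restriction $Y\perp Z\mid S=0,X$ only affects the factor $p(y\mid s=0,z,x)$. The tangent space therefore replaces the nonparametric subspace $\{(1-S)h(Y,Z,X)\}$ with mean zero given $(S=0,Z,X)$ by its subspace of functions not depending on $Z$, i.e.\ $(1-S)h(Y,X)$ with $E\{h\mid S=0,X\}=0$. All other components of the Step 1 gradient are already in the tangent space, since the other factors are unrestricted. The only component to check is the $\mu_{\cdot0}$ piece. As written, $\frac{1-S}{P(S=0\mid X)}\frac{\rho_0-\rho_1}{\bar\rho_0}(Y-\mu_{\cdot0})$ depends on $(Y,X)$ only, so it already lies in the restricted subspace. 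Hence the Step 1 gradient is itself in the tangent space and is the efficient gradient.

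\textbf{Main obstacle: the weight on the $\mu_{\cdot 0}$ residual.} The stated display has weight $(1-\bar\rho_\cdot)^{-1}$ rather than $P(S=0\mid X)^{-1}$. I will check whether the two coincide under the paper's setup, for instance when $S$ and $X$ are independent, or whether $\mu_{\cdot0}$ is intended as the $S=0$ regression with marginal weighting. If they do not coincide, I will verify which weight gives mean zero and reproduces the pathwise derivative. The step I expect to be delicate is precisely this: confirming that, once $\mu_{\cdot0}$ pools across arms, the residual term has the correct normalization. It is also where efficiency gains over $\Phi_{1,\text{PI}}$ arise, because placebo uninfected subjects now contribute. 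Finally, I will confirm mean zero and verify that the gradient reproduces $\frac{d}{dt}\psi_{1,\cdot}(P_t)=E\{\Phi_{1,\cdot}\,g\}$.
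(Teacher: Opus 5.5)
Your derivation is correct, and it reaches the gradient by a different route than the paper. The paper starts from the nonparametric gradient $\Phi_{1,\text{PI}}$. There the $\mu_{10}$-residual term carries the arm-specific weight $Z(1-S)/[\pi_1(X)\{1-\rho_1(X)\}]$. The paper projects only that term onto $\{(1-S)h(Y,X): E(h\mid S=0,X)=0\}$ via $E\{\cdot\mid Y,S,X\}-E\{\cdot\mid S=0,X\}$, and then substitutes $\mu_{\cdot0}$ for $\mu_{10}$.

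You instead write $\psi_{1,\cdot}$ through the pooled regression $\mu_{\cdot0}(X)=E(Y\mid S=0,X)$. The nonparametric gradient of that extension already lies in the restricted tangent space, so no projection has to be computed. You only need two standard facts: a gradient of an extension is a gradient on the submodel, and a gradient lying in the submodel's tangent space is the canonical one. The paper's route shows exactly where the efficiency gain over $\Phi_{1,\text{PI}}$ comes from: $Z$ is replaced by its conditional mean given $(Y,S=0,X)$, which pools the placebo uninfecteds. Your route is shorter and makes tangent-space membership immediate. Both must give the same object, since projecting any gradient onto the tangent space yields the canonical gradient.

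On your ``main obstacle'': the weight $P(S=0\mid X)^{-1}=\{1-\rho_\cdot(X)\}^{-1}$, with $\rho_\cdot(X)=P(S=1\mid X)$, is the correct one. Do not try to reconcile it with the marginal $(1-\bar\rho_\cdot)^{-1}$.

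Carrying out the paper's own projection confirms this. Under $Y\perp Z\mid S=0,X$ one has $E(Z\mid Y,S=0,X)=\pi_1(X)\{1-\rho_1(X)\}/P(S=0\mid X)$. This cancels the $\pi_1(X)\{1-\rho_1(X)\}$ in the $\mu_{10}$-residual term and leaves $(1-S)/P(S=0\mid X)$. So the bar in the displayed $1-\bar\rho_\cdot$ (defined as $P(S=1)$) is a slip. It is harmless only when $P(S=0\mid X)$ is constant in $X$, which is the degenerate case you speculated about.

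Your proposed mean-zero check cannot tell the two weights apart, since $E\{(1-S)(Y-\mu_{\cdot0}(X))\mid X\}=0$ for either. The pathwise-derivative check does. Along the score $(1-S)a(X)\{Y-\mu_{\cdot0}(X)\}$, the derivative of $\psi_{1,\cdot}$ is $E[\{\rho_0(X)-\rho_1(X)\}a(X)\mathrm{Var}(Y\mid S=0,X)]/\bar\rho_0$. A residual term with weight $1/w(X)$ gives the same expression with an extra factor $P(S=0\mid X)/w(X)$ inside; all other terms of your gradient are orthogonal to this score within the model. That forces $w(X)=P(S=0\mid X)$ wherever $\rho_0(X)\neq\rho_1(X)$.
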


One-step estimators can be constructed based on this gradient. Under regularity conditions given in the Proof section below, $n^{1/2} (\psi_{1,\cdot, n}^+ - \psi_{1,\cdot})$ converges in distribution to a mean-zero Gaussian random variable with variance $E\{ \Phi_{1,\cdot}(O)^2 \}$. Robustness conditions for $\psi_{1,\cdot}$ are essentially the same as for $\psi_{1,\text{PI}}$ (see Section K.11).

\section{Identification, estimation, and interpretation of effects in the Doomed}

\subsection{Identification}
The effect of a vaccine on post-infection outcome in the Doomed strata is a contrast in $z$ of $E\{ Y(z) \mid S(0) = 1, S(1) = 1 \}$.

\begin{thm}
    Under Assumptions 1-6, $E\{Y(1) \mid S(0) = 1, S(1) = 1 \} = \eta_1$, where \[
    \eta_1 = \bar{\mu}_{11} = E \left\{ \frac{\rho_1(X)}{\bar{\rho}_1} \mu_{11}(X) \right\} \ .
    \]
\end{thm}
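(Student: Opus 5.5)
The argument has two parts. The first equality, $E\{Y(1)\mid S(0)=1,S(1)=1\}=\bar{\mu}_{11}$, is already given by Theorem \ref{thm:partial_id}, and only Assumptions 1-5 are needed for it. For completeness I would re-derive it as follows. Monotonicity (Assumption 3) gives $S(1)\le S(0)$, so the event $\{S(1)=1\}$ coincides with $\{S(0)=1,S(1)=1\}$. Hence
\[
E\{Y(1)\mid S(0)=1,S(1)=1\}=E\{Y(1)\mid S(1)=1\}.
\]
Randomization (Assumption 4), applied jointly to $(Y(1),S(1))$, gives
\[
E\{Y(1)\mid S(1)=1\}=E\{Y(1)\mid S(1)=1,Z=1\}.
\]
Consistency (Assumption 2) then turns this into $E(Y\mid S=1,Z=1)=\bar{\mu}_{11}$. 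Positivity (Assumption 5) ensures that $P(S=1,Z=1)>0$, so this conditional mean is well defined. The exclusion restriction (Assumption 6) is not used anywhere in the argument.

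The second equality is a covariate representation of $\bar{\mu}_{11}$, obtained by iterated expectations. Write
\[
\bar{\mu}_{11}=\frac{E(YS\mid Z=1)}{P(S=1\mid Z=1)}.
\]
Condition the numerator on $X$ within the arm $Z=1$:
\[
E(YS\mid Z=1)=E\{\rho_1(X)\mu_{11}(X)\mid Z=1\},
\]
and the denominator is $\bar{\rho}_1$.

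The only substantive step is replacing the conditional law of $X$ given $Z=1$ by the marginal law of $X$. I would justify this by randomization, $Z\perp X$, which holds in the trial by design and is the same property that lets the paper write $\psi_0=E[\{\rho_0(X)/\bar{\rho}_0\}\mu_{01}(X)]$ in Theorem \ref{thm:id_EY0}. With that substitution,
\[
\bar{\mu}_{11}=E\left\{\frac{\rho_1(X)}{\bar{\rho}_1}\mu_{11}(X)\right\}.
\]

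One point needs care: $\mu_{11}(X)$ has to be defined wherever $\rho_1(X)>0$, which Assumption 5 guarantees almost surely on that set. I would therefore define the product $\rho_1(X)\mu_{11}(X)$ to be zero wherever $\rho_1(X)=0$, so the integrand is well defined everywhere.
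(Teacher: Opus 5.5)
Your proof is correct and follows the paper's argument step for step: monotonicity to reduce the conditioning to $\{S(1)=1\}$, then randomization and consistency to reach $E(Y\mid Z=1,S=1)=\bar{\mu}_{11}$, then iterated expectations over $X$ for the covariate form. Your explicit appeal to $Z\perp X$ in that last step, and your remark that the exclusion restriction is never used, spell out what the paper's proof leaves implicit under ``the tower rule.''
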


To identify the counterfactual mean in the Doomed stratum under placebo, we make two further assumptions.

\begin{assump}
    \emph{Positivity}: For some $\delta_3 > 0$, $P\{ P(S = 1 \mid V = 0, X) > \delta_3 \mid V = 1, S = 1\} = 1$
    \label{assmp:positivity_for_doomed}
\end{assump}

\begin{assump}
    \emph{Partial principal ignorability}: $S(1) \perp Y(0) \mid S(0) = 1, X$
    \label{assmp:cross_world_for_doomed}
\end{assump}

\begin{thm}
    Under Assumptions 1-5 (from the main body) and Assumptions \ref{assmp:positivity_for_doomed}-\ref{assmp:cross_world_for_doomed} above,  
    $E\{Y(0) \mid S(0) = 1, S(1) = 1 \} = \eta_0$, where \[
    \eta_0 = E \left\{ \frac{\rho_1(X)}{\bar{\rho}_1} \mu_{01}(X) \right\} \ .
    \]
\end{thm}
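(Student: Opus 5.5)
We identify $E\{Y(0) \mid S(0)=1, S(1)=1\}$ by conditioning on $X$, using the partial principal ignorability assumption to drop the conditioning on $S(1)$, and then reweighting by the $X$-conditional probability of being Doomed. The steps below are routine given the earlier results.

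\textbf{Step 1 (conditional Doomed probabilities).} Apply Theorem \ref{thm:partial_id} within each covariate stratum. The same argument holds conditionally on $X$, since randomization makes $Z$ independent of the potential outcomes given $X$ as well. It gives
\[
P\{S(0)=1, S(1)=1 \mid X\} = P\{S(1)=1 \mid X\} = \rho_1(X),
\]
where the first equality uses monotonicity, so that $S(1)=1$ implies $S(0)=1$. The second equality uses randomization and consistency. The same argument gives $P\{S(0)=1 \mid X\} = \rho_0(X)$, and marginally $P\{S(0)=1,S(1)=1\} = \bar\rho_1$.

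\textbf{Step 2 (tower property).} Write
\[
E\{Y(0) \mid S(0)=1,S(1)=1\} = E\bigl[\,E\{Y(0) \mid S(0)=1,S(1)=1,X\}\,\bigm|\, S(0)=1,S(1)=1\bigr].
\]
The outer expectation is over the distribution of $X$ among the Doomed. By Bayes' rule and Step 1, that distribution has density $\rho_1(x)/\bar\rho_1$ with respect to the marginal law of $X$.

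\textbf{Step 3 (inner expectation).} By Assumption \ref{assmp:cross_world_for_doomed}, $S(1)$ is independent of $Y(0)$ given $S(0)=1$ and $X$. Hence
\[
E\{Y(0)\mid S(0)=1,S(1)=1,X\} = E\{Y(0)\mid S(0)=1,X\}.
\]
By randomization (conditional on $X$) and consistency,
\[
E\{Y(0)\mid S(0)=1,X\} = E\{Y(0)\mid Z=0,S(0)=1,X\} = E(Y\mid Z=0,S=1,X) = \mu_{01}(X).
\]
This is the argument used for Theorem \ref{thm:id_EY0}, carried out within strata of $X$.

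\textbf{Step 4 (positivity).} The regression $\mu_{01}(X)$ must be well defined on the support of $X$ among the Doomed. Assumption \ref{assmp:positivity_for_doomed} gives $P(S=1 \mid Z=0, X) > \delta_3$ almost surely there, which justifies the conditioning in Step 3. Combining Steps 2 and 3 yields $E\{\rho_1(X)\mu_{01}(X)/\bar\rho_1\} = \eta_0$.

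\textbf{Main obstacle.} The only delicate point is that the conditioning event $Z=0,S=1$ defining $\mu_{01}$ has positive probability wherever $\rho_1(X)>0$. Monotonicity ensures this: $\rho_0 \ge \rho_1$ pointwise, and the stated positivity condition supplies the uniform lower bound. I expect this to be the step needing care, but not a serious barrier.
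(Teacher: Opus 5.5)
Your proof is correct and follows the paper's argument essentially line for line: the tower rule with the Doomed covariate weight $\rho_1(X)/\bar\rho_1$, then partial principal ignorability to drop $S(1)$, then randomization within $X$-strata and consistency to reach $\mu_{01}(X)$. Your Steps 1 and 4 make explicit what the paper's one-line proof leaves implicit: that monotonicity plus randomization given $X$ yields the weight $\rho_1(x)/\bar\rho_1$, and that the positivity assumption keeps $\mu_{01}$ well defined on the Doomed covariate support.
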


\subsection{Efficiency theory}

We define $\tilde{\eta}_0(x) = \rho_1(x) \mu_{11}(x) / \bar{\rho}_1$ and $\tilde{\eta}_1(x) = \rho_1(x) \mu_{01}(x) / \bar{\rho}_1$.

\begin{thm}
    The efficient gradient of $\eta_1$ in a model for the observed data that is nonparametric up to positivity is \begin{align*}
    \Theta_1(O_i) &= \frac{Z_i}{\pi_1(X_i)} \frac{S_i}{\bar{\rho}_1} \{ Y_i - \mu_{11}(X_i) \} \\
    &\hspace{2em} + \frac{\{\mu_{11}(X_i) - \eta_1\}}{\bar{\rho}_1} \frac{Z_i}{\pi_1(X_i)} \{ S_i - \rho_1(X_i) \} \\
    &\hspace{2em} - \frac{\eta_1}{\bar{\rho}_1} \{ \rho_1(X_i) - \bar{\rho}_1 \} + \tilde{\eta}_1(X_i) - \eta_1 \ .
    \end{align*}
\end{thm}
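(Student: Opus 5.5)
We derive the efficient influence function by computing the pathwise (Gateaux) derivative of $\eta_1$ along a regular one-dimensional parametric submodel $P_t$ with score $g(O)$. We then identify the function $\Theta_1$ with $\frac{d}{dt}\eta_1(P_t)\big|_{t=0} = E\{\Theta_1(O) g(O)\}$. Since the model is nonparametric up to positivity, the tangent space is saturated, so any mean-zero, square-integrable gradient is the efficient one. Write $\eta_1 = \nu/\bar{\rho}_1$ with $\nu = E\{\rho_1(X)\mu_{11}(X)\}$ and $\bar{\rho}_1 = E\{\rho_1(X)\}$. Here $\bar{\rho}_1 = P(S=1\mid Z=1)$ is taken in its covariate-standardized form $E\{\rho_1(X)\}$, which coincides with the stated marginal form under randomization. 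The quotient rule then gives the gradient of $\eta_1$ as $\bar{\rho}_1^{-1}\{\mathrm{IF}(\nu) - \eta_1 \mathrm{IF}(\bar{\rho}_1)\}$, so the work reduces to finding the influence functions of $\nu$ and $\bar{\rho}_1$ separately.

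For $\bar{\rho}_1 = E\{\rho_1(X)\}$, the standard result for a covariate-adjusted treatment-specific mean gives $\frac{Z}{\pi_1(X)}\{S - \rho_1(X)\} + \rho_1(X) - \bar{\rho}_1$.

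For $\nu$, factor the density as $p(x)p(z\mid x)p(s\mid z,x)p(y\mid s,z,x)$ and differentiate the three relevant pieces.
\begin{itemize}
\item The marginal of $X$ contributes $\rho_1(X)\mu_{11}(X) - \nu$.
\item The conditional law of $S$ given $Z=1,X$ enters through $\rho_1$ and contributes $\frac{Z}{\pi_1(X)}\mu_{11}(X)\{S-\rho_1(X)\}$.
\item The conditional law of $Y$ given $Z=1,S=1,X$ enters through $\mu_{11}$. Its derivative is $E[\rho_1(X) E\{(Y-\mu_{11})g \mid Z=1,S=1,X\}]$, which one rewrites as $E[\frac{ZS}{\pi_1(X)\rho_1(X)}\rho_1(X)\{Y-\mu_{11}(X)\}g]$. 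This yields $\frac{ZS}{\pi_1(X)}\{Y-\mu_{11}(X)\}$, since $P(Z=1,S=1\mid X)=\pi_1(X)\rho_1(X)$.
\end{itemize}
The conditional law of $Z$ given $X$ contributes nothing, because $\eta_1$ does not depend on $\pi$.

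Combining the pieces, the $Y$-residual term becomes $\frac{Z}{\pi_1}\frac{S}{\bar{\rho}_1}(Y-\mu_{11})$. The two $S$-residual terms merge into $\frac{\mu_{11}(X)-\eta_1}{\bar{\rho}_1}\frac{Z}{\pi_1(X)}\{S-\rho_1(X)\}$. The $X$-terms give $\{\rho_1\mu_{11}-\nu\}/\bar{\rho}_1 - \eta_1\{\rho_1-\bar{\rho}_1\}/\bar{\rho}_1$, which equals $\tilde{\eta}_1(X) - \eta_1 - \frac{\eta_1}{\bar{\rho}_1}\{\rho_1(X)-\bar{\rho}_1\}$ once we note $\nu/\bar{\rho}_1 = \eta_1$. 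Here $\tilde{\eta}_1(x)$ is read as $\rho_1(x)\mu_{11}(x)/\bar{\rho}_1$; the labels in the preceding definition appear swapped, and I would flag this. Altogether this reproduces $\Theta_1$. Finally, I would check that $\Theta_1$ has mean zero, and that it is square integrable because $\pi_1$ is bounded away from zero (by randomization or positivity) and $\bar{\rho}_1>0$ (Assumption 5).

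The main obstacle is bookkeeping rather than conceptual. The delicate points are:
\begin{itemize}
\item correctly converting each conditional-score derivative into an unconditional inverse-weighted residual form;
\item seeing that the $X$-marginal terms from $\nu$ and from $\bar{\rho}_1$ combine with the $\eta_1$ coefficient exactly as in the displayed expression, which means deliberately splitting $\rho_1\mu_{11}/\bar{\rho}_1 - \eta_1\rho_1/\bar{\rho}_1$ into $\tilde{\eta}_1 - \eta_1$ and $-\eta_1(\rho_1-\bar{\rho}_1)/\bar{\rho}_1$.
\end{itemize}
As a cross-check, I would verify that $\Theta_1$ agrees with $\Phi_0$ after swapping $Z=0 \leftrightarrow Z=1$. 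The functional forms of $\psi_0$ and $\eta_1$ are identical up to that swap, so the gradient theorem for $\psi_0$ can in fact be invoked directly by symmetry.
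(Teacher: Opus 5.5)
Your proposal is correct and follows essentially the paper's route. The paper gives no separate proof for $\Theta_1$, but its derivation of $\Phi_0$ applies the same ratio rule to $E\{\rho(X)\mu(X)\}/E\{\rho(X)\}$ using outcome-residual, selection-residual, and $X$-marginal contributions, and $\Theta_1$ is exactly the image of $\Phi_0$ under $Z \leftrightarrow 1-Z$, as your cross-check notes. Your two flags are also right: $\tilde{\eta}_1(x)$ must be $\rho_1(x)\mu_{11}(x)/\bar{\rho}_1$ for $\Theta_1$ to have mean zero, since the paper's definitions of $\tilde{\eta}_0$ and $\tilde{\eta}_1$ are swapped, and reading $\bar{\rho}_1$ as $E\{\rho_1(X)\}$ matches how the paper itself treats $\bar{\rho}_0$ in its gradient derivations.
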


\begin{thm}
    The efficient gradient for regular estimators of $\eta_0$ in a model for the observed data that is nonparametric up to positivity is \begin{align*}
    \Theta_0(O_i) &= \frac{(1 - Z_i)}{\pi_0(X_i)} \frac{S_i}{\bar{\rho}_1} \frac{\rho_1(X_i)}{\rho_0(X_i)} \{ Y_i - \mu_{01}(X_i) \} \\
    &\hspace{2em} + \frac{\{\mu_{01}(X_i) - \eta_0\}}{\bar{\rho}_1} \frac{Z_i}{\pi_1(X_i)} \{ S_i - \rho_1(X_i) \} \\
    &\hspace{2em} - \frac{\eta_0}{\bar{\rho}_1} \{ \rho_1(X_i) - \bar{\rho}_1 \} + \tilde{\eta}_0(X_i) - \eta_0 \ .
    \end{align*}
\end{thm}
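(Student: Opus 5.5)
The plan is to compute the efficient influence function of $\eta_0$ directly by pathwise differentiation. The model for $O=(X,Z,S,Y)$ is nonparametric apart from positivity, and positivity is an open restriction, so the tangent space is saturated. Any gradient we find is therefore the unique, hence efficient, gradient. We first write
\[
\eta_0 = \frac{N}{D}, \qquad N = E\{\rho_1(X)\mu_{01}(X)\}, \qquad D = \bar{\rho}_1 = E\{\rho_1(X)\}.
\]
The influence function of the ratio then follows from the delta method (quotient rule):
\[
\mathrm{IF}(\eta_0) = \frac{\mathrm{IF}(N) - \eta_0\,\mathrm{IF}(D)}{\bar{\rho}_1}.
\]
The work thus reduces to two standard building blocks, which we derive by a point-mass contamination (Gateaux derivative) argument, or equivalently by differentiating along a regular parametric submodel. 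We factor the density as $p(x)\,p(z\mid x)\,p(s\mid z,x)\,p(y\mid s,z,x)$ and use the usual scores for each factor.

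\emph{Step 1: the denominator.} For $D = E\{\rho_1(X)\}$, the contribution from the marginal of $X$ is $\rho_1(X)-\bar{\rho}_1$. The contribution from perturbing $P(S=1\mid Z=1,X)$ is $\frac{Z}{\pi_1(X)}\{S-\rho_1(X)\}$. Hence
\[
\mathrm{IF}(D) = \frac{Z}{\pi_1(X)}\{S-\rho_1(X)\} + \rho_1(X) - \bar{\rho}_1 .
\]

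\emph{Step 2: the numerator.} For $N = E\{\rho_1(X)\mu_{01}(X)\}$ there are three pieces.
\begin{itemize}
\item The marginal of $X$ gives $\rho_1(X)\mu_{01}(X) - N$.
\item Perturbing $\rho_1$ with $\mu_{01}$ held fixed gives $\mu_{01}(X)\frac{Z}{\pi_1(X)}\{S-\rho_1(X)\}$.
\item Perturbing $\mu_{01}(x) = E(Y\mid Z=0,S=1,X=x)$ gives $\rho_1(X)\frac{(1-Z)S}{\pi_0(X)\rho_0(X)}\{Y-\mu_{01}(X)\}$.
\end{itemize}
The third piece carries the weight $1/\{\pi_0(X)\rho_0(X)\}$ because $\mu_{01}$ is a regression in the stratum $\{Z=0,S=1\}$, which has conditional probability $\pi_0(X)\rho_0(X)$ given $X$. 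This is exactly where the positivity assumption $P(S=1\mid Z=0,X)>\delta_3$ is needed: it keeps $\rho_1/\rho_0$ bounded, so the gradient has finite variance and the pathwise derivative exists.

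\emph{Step 3: combine.} We substitute Steps 1 and 2 into the quotient formula. The $\frac{Z}{\pi_1}(S-\rho_1)$ terms merge into
\[
\frac{\mu_{01}(X)-\eta_0}{\bar{\rho}_1}\,\frac{Z}{\pi_1(X)}\{S-\rho_1(X)\}.
\]
The $X$-marginal terms combine as
\[
\frac{\rho_1(X)\mu_{01}(X)-N}{\bar{\rho}_1} - \frac{\eta_0\{\rho_1(X)-\bar{\rho}_1\}}{\bar{\rho}_1},
\]
and using $N/\bar{\rho}_1=\eta_0$ this equals $\rho_1(X)\mu_{01}(X)/\bar{\rho}_1 - \eta_0 - \frac{\eta_0}{\bar{\rho}_1}\{\rho_1(X)-\bar{\rho}_1\}$. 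This is the stated "$\tilde{\eta}_0(X)-\eta_0$" term (with $\tilde{\eta}_0$ read as $\rho_1\mu_{01}/\bar{\rho}_1$) together with the centering term. The outcome-regression term becomes the first display term of $\Theta_0$.

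As a check, we will verify that $\Theta_0$ has mean zero. We will also verify the von Mises remainder $\eta_0(P') - \eta_0(P) + E_P\{\Theta_0(O;P')\}$. It should reduce to second-order products such as $(\pi_{1}'-\pi_1)(\rho_1'-\rho_1)$ and $(\rho_0'/\pi_0' - \rho_0/\pi_0)(\mu_{01}'-\mu_{01})$, plus a $(\bar{\rho}_1'-\bar{\rho}_1)^2$-type term. This confirms that $\Theta_0$ is indeed a gradient.

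The main obstacle is the bookkeeping in Step 2 for the $\mu_{01}$ perturbation. Its score lives only on $\{Z=0,S=1\}$, while its weight $\rho_1(X)$ comes from the other arm, so the cross-arm ratio $\rho_1/\rho_0$ must be tracked carefully. The remaining terms are routine applications of the quotient and chain rules.
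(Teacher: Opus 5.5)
Your derivation is correct and follows the paper's own route. The paper gives no separate proof of this gradient, but its generic recipe in the proof of Theorem 6 is exactly what you carry out: inverse-weighted residuals for $\mu_{01}$ and $\rho_1$, the $X$-marginal centering term, and the ratio rule applied to $E\{\rho_1(X)\mu_{01}(X)\}/\bar{\rho}_1$. You are also right to read $\tilde{\eta}_0$ as $\rho_1\mu_{01}/\bar{\rho}_1$, since the paper's stated definitions of $\tilde{\eta}_0$ and $\tilde{\eta}_1$ are swapped.

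One small slip affects only your optional remainder check. The $\mu_{01}$ cross term involves $\pi_0'\rho_0'-\pi_0\rho_0$, not $\rho_0'/\pi_0'-\rho_0/\pi_0$. There is also an extra $(\rho_1'-\rho_1)(\mu_{01}'-\mu_{01})$ term, and the final term is $(\bar{\rho}_1'-\bar{\rho}_1)(\eta_0'-\eta_0)/\bar{\rho}_1'$.
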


As in the main body, these gradients can be used to formulate efficient one-step estimators of the effects of interest.

\subsection{Exposure-conditional interpretation}

We can also formulate an equivalent exposure-conditional interpretation of the Doomed-only estimand as follows. As with the formulation of exposure for the Naturally Infected estimand, we assume there is a binary coarsening $e^*: \tilde{S} \times \mathcal{X} \rightarrow \{0, 1\}$ and define the random variable $E^* = e^*(\tilde{S}, \mathcal{X})$. 
As previously, we make several assumptions regarding this exposure variable.

\begin{assump}
    \emph{Exposure is sufficient for infection irrespective of vaccine}: $P(S = 1 \mid E^* = 1, X = x) = 1$ for all $x$
\end{assump}

\begin{assump}
    \emph{Vaccine is not a cause of exposure}: $E^*(z) = E$ for $z = 0, 1$.
\end{assump}

\begin{assump}
    \emph{No unmeasured confounders of exposure and post-infection outcome}: $Y \perp E \mid V, X, S$ \ .
\end{assump}

Notably, for this formulation, we do not require that $E^*$ is necessary for infection. Thus, for example, we could imagine that relative to the original exposure variable $E$, the variable $E^*$ may represent a higher dosage of challenge to the infectious agent, such that all individuals (even those who have been vaccinated) experience a clinical infection following exposure to $E^*$, whereas only some vaccinated individuals would experience clinical infection following exposure $E$. We consider identifying the parameter $E\{ Y(v) \mid E^* = 1 \}$ for $v = 0, 1$, which can then be used to construct causal contrasts of interest. 

\begin{thm}
    Under Assumptions 1-5 and S4-S6,  and we have that $E\{ Y(1) \mid E^* = 1 \} = \eta_1$ and $E\{ Y(0) \mid E^* = 1 \} = \eta_0$.
\end{thm}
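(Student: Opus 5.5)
We prove the two identities separately, following the template of Theorems 10 and 13 with $E^*$ in place of $E$. Throughout, $V$ in Assumption S6 is read as the vaccine indicator $Z$. The two sufficiency assumptions then read: S4 gives $P(S=1\mid E^*=1,X,Z=z)=1$ for each $z$ (interpreting S4 as holding within each arm), and S5 gives $E^*(z)=E^*$, so that $E^*\perp Z$ by randomization (Assumption 4). Assumption S6 is read with $E^*$ in place of $E$, i.e. $Y\perp E^*\mid Z,X,S$.

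\textbf{The vaccine arm.} By randomization together with S5,
\[
E\{Y(1)\mid E^*=1\}=E\{Y(1)\mid E^*=1,Z=1\}=E(Y\mid E^*=1,Z=1),
\]
where the last step is consistency. Sufficiency (S4) lets us add $S=1$ to the conditioning set. Iterating over $X$ then gives
\[
E\{Y(1)\mid E^*=1\}=\int E(Y\mid E^*=1,Z=1,S=1,X=x)\,dP(x\mid E^*=1,Z=1).
\]
By S6 the inner conditional mean equals $\mu_{11}(x)$.

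The remaining work is to identify the covariate law $P(x\mid E^*=1,Z=1)$. Because $E^*\perp Z$, this is $P(x\mid E^*=1)$. We want to show it equals $\rho_1(x)\,dP(x)/\bar\rho_1$, which is the covariate law of infected vaccinees. That in turn requires showing that, among vaccinees, infection occurs exactly when $E^*=1$, i.e.
\[
P(S=1\mid Z=1,X)=P(E^*=1\mid X).
\]

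This step is the main obstacle. Sufficiency alone only gives $\{E^*=1\}\subseteq\{S=1\}$; the reverse inclusion is a necessity condition, which the statement says is not required. So the first thing to try is the reverse inclusion for vaccinees. If it is unavailable, we should show instead that $\eta_1$ corresponds to the conditional law of $X$ given $E^*=1$ only up to this gap. Concretely, we would check whether the paper intends S4 together with the construction of $E^*$ (dose high enough that the vaccinated infected are exactly those exposed to $E^*$) to imply $P(S=1\mid Z=1,X)=P(E^*=1\mid X)$, and state that as the working reading of the assumptions. Under it, Bayes' rule gives
\[
dP(x\mid E^*=1)=\frac{\rho_1(x)\,dP(x)}{\bar\rho_1},
\]
and therefore $E\{Y(1)\mid E^*=1\}=E\{\rho_1(X)\mu_{11}(X)/\bar\rho_1\}=\eta_1$.

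\textbf{The placebo arm.} The same chain of steps gives
\[
E\{Y(0)\mid E^*=1\}=\int E(Y\mid E^*=1,Z=0,S=1,X=x)\,dP(x\mid E^*=1).
\]
Here S4 supplies $S=1$ in the placebo arm, and S6 reduces the inner mean to $\mu_{01}(x)$. The covariate weight is the same $P(x\mid E^*=1)=\rho_1(x)\,dP(x)/\bar\rho_1$ already established for the vaccine arm; it does not depend on $Z$ because $E^*\perp Z$. This yields $E\{Y(0)\mid E^*=1\}=E\{\rho_1(X)\mu_{01}(X)/\bar\rho_1\}=\eta_0$.

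The positivity assumptions (5 and S2) guarantee that $\mu_{01}$ is well defined on the support of $\rho_1(x)\,dP(x)$. The crux of the whole argument is therefore the single identity $P(E^*=1\mid X)=\rho_1(X)$; everything else is routine iterated expectation.
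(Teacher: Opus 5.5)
Your route is the same as the paper's. You decompose over $X$, use S4 to add $S=1$ to the conditioning set, and use S6 to reduce the inner means to $\mu_{11}(x)$ and $\mu_{01}(x)$. You then obtain the common weight $P(X=x\mid E^*=1)=\rho_1(x)P(X=x)/\bar{\rho}_1$ from $P(E^*=1\mid Z=1,X)=\rho_1(X)$. The step you flag is a genuine gap, but it lies in the statement, not in your reasoning, so it should not be left as a hedged ``working reading''. Under Assumptions 1--5 and S4--S6 alone the theorem is false. For a counterexample, take:
\begin{itemize}
\item $X\in\{0,1\}$ with $P(X=1)=1/2$, and $Z\sim\mathrm{Bernoulli}(1/2)$ independent of everything;
\item $E^*$ independent of $(X,Z)$ with $P(E^*=1)=0.1$, and $Y\equiv X$;
\item $S(0)=S(1)=1$ when $E^*=1$;
\item when $E^*=0$, $P\{S(1)=1\mid X=0\}=0$ and $P\{S(1)=1\mid X=1\}=4/9$, with $S(0)=1$ whenever $S(1)=1$.
\end{itemize}
Monotonicity, randomization, positivity, S4 and S5 all hold, and S6 holds trivially because $Y$ is a function of $X$. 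Here $\rho_1(0)=0.1$ and $\rho_1(1)=0.5$, so $\eta_1=\eta_0=0.25/0.3=5/6$. However, $E\{Y(1)\mid E^*=1\}=E\{Y(0)\mid E^*=1\}=E(X\mid E^*=1)=1/2$.

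The paper's proof has exactly this hole. To get $P(E^*=1\mid S=1,Z=1,X=x)=1$, it applies Bayes' rule and then silently drops the denominator term $P(S=1\mid E^*=0,Z=1,X=x)P(E^*=0\mid Z=1,X=x)$ (printed there with $E^*=1$ in the conditioning). That term vanishes only if $E^*$ is necessary for infection among vaccinees. The proof later even cites ``exposure necessity'', although the surrounding text says necessity is not required.

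The repair is to add the assumption $P(S=1\mid E^*=0,Z=1,X)=0$, in the vaccine arm only. It must fail in the placebo arm whenever the Protected stratum is nonempty. Together with S4, this forces $E^*=S(1)$ almost surely, hence $P(E^*=1\mid X)=\rho_1(X)$. With that assumption, your argument is complete as written. Strictly, your weighting step needs only $P(E^*=1\mid X)\propto\rho_1(X)$, but vaccine-arm necessity is the natural assumption that delivers it.
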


\section{Simulations}
\subsection{Results for ``Asymptotic properties of estimators'' simulation}

\subsubsection{Data generating process details}

For each simulation, we generated a dataset of size $n \in \{500,4000\}$. Baseline covariates $X=(X_1,X_2,X_3)$ were generated independently with $X_j \sim \text{Bernoulli}(0.5)$ for $j=1,2,3$. To generate infection potential outcomes, we set $P\{S(1)=1,S(0)=1\mid X\}=\text{expit}(-1+0.5X_1-X_1X_2-0.5X_3)$, $P\{S(1)=0,S(0)=0\mid X\}=\text{expit}(-1+0.5X_1-X_1X_3-0.5X_3)$, and the conditional probability of $S(1)=0, S(0)=1$ given by one minus these two conditional probabilities. Infection potential outcomes were generated deterministically based on stratum membership. 

Binary outcome potential outcomes were generated from stratum- and treatment-specific logistic regression models. For individuals in the Doomed stratum, $P\{Y(0)=1 \mid S(0) = 1, S(1) = 1, X \} = \text{expit}(-1+0.5X_1-X_1X_2+0.5X_3)$ and $P\{Y(1)=1 \mid S(0) = 1, S(1) = 1, X\}=\text{expit}(\text{logit}(P\{Y(0)=1\mid S(0) = 1, S(1) = 1, X\})+ 0.1)$, yielding a small effect of vaccine in the Doomed stratum. For Immune individuals, $P\{Y(0)=1\mid S(0) = 0, S(1) = 0, X\}=\text{expit}(-0.5+0.5X_1-X_1X_3+0.5X_2)$ and $P\{Y(1)=1\mid S(0) = 0, S(1) = 0, X\} = \epsilon_{\text{I}} P\{Y(0)=1\mid S(0) = 0, S(1) = 0, X\}$. Thus, the parameter $\epsilon_{\text{I}}$ was used to control the extent to which the exclusion restriction was violated. For Protected individuals, we set $P\{Y(0)=1\mid S(0) = 1, S(1) = 0, X\} = P\{Y(0)=1\mid S(0) = 1, S(1) = 1, X\}$ and $P\{Y(1) = 1 \mid S(0) = 1, S(1) = 0, X\} = \epsilon_{\text{P}} P\{Y(1)=1\mid S(0) = 0, S(1) = 0, X\}$. Thus, $\epsilon_{\text{P}}$ was used to control the extent to which partial principal ignorability was violated. Vaccine assignment $Z$ was generated according to $P(Z=1\mid X)=\text{expit}(-0.14-0.5X_1+X_1X_2-1.2X_3)$. 

In the scenario where both assumptions held, we set $\epsilon_{\text{P}}=1$ and $\epsilon_{\text{I}}=1$. In other three scenarios, we set either $\epsilon_{\text{P}}=0.5$ (to violate partial principal ignorability) and/or $\epsilon_{\text{I}}=0.5$ (to violate the exclusion restriction).

Observed infection and outcome were then set as $S=S(Z)$ and $Y=Y(Z)$. 

The true value of counterfactual means in the Naturally Infected in each of the four settings are shown in Table \ref{tab:sim1_truth}. These values were calculated based on a single independent Monte Carlo sample of size 10,000,000 generated from the same data-generating process, leveraging the full set of potential outcomes. 

\begin{table}[!h]
\centering
\caption{\label{tab:sim1_truth}True effects for ``Asymptotic properties of estimators'' simulation}
\fontsize{9}{11}\selectfont
\begin{tabular}{lccrr}
\toprule
 &  &  & \multicolumn{2}{c}{Effect} \\
\cmidrule(lr){4-5}
 & $E\{Y(1)\mid S(0)=1\}$ 
 & $E\{Y(0)\mid S(0)=1\}$ 
 & Additive 
 & Multiplicative \\
\midrule
\textbf{PI and ER satisfied} & 0.405 & 0.333 & 0.072 & 1.216\\
\textbf{PI satisfied, ER violated} & 0.257 & 0.333 & -0.076 & 0.772\\
\textbf{PI violated, ER satisfied} & 0.257 & 0.333 & -0.076 & 0.772\\
\textbf{PI and ER violated} & 0.183 & 0.333 & -0.150 & 0.550\\
\bottomrule
\end{tabular}
\end{table}

For each scenario and sample size, one thousand simulated data sets were analyzed using bounds and point estimates. Nuisance parameters were estimated using saturated logistic regression models ensuring consistent nuisance parameter estimation. Performance was summarized in terms of bias (scaled by $n^{1/2}$), variance and mean squared error (scaled by $n$), and coverage of nominal 95\% Wald confidence intervals based on the estimated influence function. We report these results for both additive and multiplicative effects.

\subsubsection{Results}

The estimators of the bounds performed well in terms of bias and confidence interval coverage for the theoretical value of the bound across all settings (Table \ref{tab:sim1_bounds_unadjusted_combined}). The bounds also captured true effect a high proportion of the time in small samples and 100\% of the time in large samples, irrespective of whether partial principal ignorability and/or exclusion restrictions held. However, bounds were wide and, as expected, median width was not impacted by sample size. Covariate adjustment narrowed the bounds marginally, though adjusted bounds were still wide (Tables \ref{tab:sim1_bounds_n500_rearranged} and \ref{tab:sim1_bounds_n4000_rearranged})

In settings where both partial principal ignorability and the exclusion restriction were satisfied, all point estimators were approximately unbiased and achieved approximately nominal coverage. The estimators that assume partial principal ignorability tended to have smaller variance and therefore smaller MSE, with the smallest variance achieved by the semiparametric estimator that leveraged both assumptions. As expected, when either principal ignorability or the exclusion restriction was violated, the estimators that relied on the violated assumption exhibited high bias and poor coverage. When both assumptions were violated, all estimators failed to deliver proper inference. Overall, this set of simulations confirmed our theorems establishing asymptotic validity of estimators under our stated assumptions.

\begin{table}[t]
\centering
\fontsize{9}{11}\selectfont
\caption{Performance of bounds across settings and sample sizes. Bias and coverage for $\ell$ and $u$ refer to how well point estimates approximate and confidence intervals respectively cover the true theoretical value of the bound. Coverage for the effect refers to the proportion of simulations where the true effect was in the interval $\ell_n, u_n$. The median width and interquartile range (IQR) for this width is also shown.}
\begin{tabular}[t]{llcccccc}
\toprule
\multicolumn{2}{c}{} & \multicolumn{2}{c}{$n^{1/2} \times$ Bias} & \multicolumn{3}{c}{Coverage} & \multicolumn{1}{c}{} \\
\cmidrule(l{3pt}r{3pt}){3-4} \cmidrule(l{3pt}r{3pt}){5-7}
Setting & $n$ & $\ell$ & $u$ & $\ell$ & $u$ & Effect & Med. Width (IQR) \\
\midrule
\multirow{2}{*}{\textbf{PI and ER satisfied}}
  & 500  & -0.034 & 0.018  & 0.939 & 0.947 & 0.983 & 0.28 (0.26, 0.31) \\
  & 4000 & -0.041 & -0.034 & 0.945 & 0.943 & 1.000 & 0.28 (0.27, 0.29) \\
\midrule
\multirow{2}{*}{\textbf{PI satisfied, ER violated}}
  & 500  & 0.141  & -0.008 & 0.949 & 0.950 & 0.981 & 0.28 (0.25, 0.3) \\
  & 4000 & -0.023 & -0.027 & 0.951 & 0.950 & 1.000 & 0.28 (0.27, 0.29) \\
\midrule
\multirow{2}{*}{\textbf{PI violated, ER satisfied}}
  & 500  & 0.071  & 0.043  & 0.959 & 0.959 & 0.906 & 0.23 (0.21, 0.26) \\
  & 4000 & -0.013 & 0.050  & 0.949 & 0.952 & 1.000 & 0.23 (0.22, 0.24) \\
\midrule
\multirow{2}{*}{\textbf{PI and ER violated}}
  & 500  & -0.009 & 0.017  & 0.944 & 0.949 & 0.918 & 0.16 (0.14, 0.18) \\
  & 4000 & -0.014 & 0.029  & 0.948 & 0.950 & 1.000 & 0.16 (0.15, 0.17) \\
\bottomrule
\end{tabular}
\label{tab:sim1_bounds_unadjusted_combined}
\end{table}

\begin{table}[!ht]
\centering
\caption{Performance of one-step estimators. Var. = variance; MSE = mean squared error; Cov. = coverage of nominal 95\% confidence interval. $^1$ scaled by $n^{1/2}$; $^2$ scaled by $n$; $^3$ computed on the log scale}
\begin{tabular}[t]{lcccccclcc}
\toprule
\multicolumn{2}{c}{ } & \multicolumn{4}{c}{Additive Scale} & \multicolumn{4}{c}{Multiplicative Scale} \\
\cmidrule(l{3pt}r{3pt}){3-6} \cmidrule(l{3pt}r{3pt}){7-10}
Method & $n$ & Bias$^1$ & Var.$^2$ & MSE$^2$ & Cov. & Bias$^{1,3}$ & Var.$^{2,3}$ & MSE$^{2,3}$  & Cov. \\
\midrule
\addlinespace[0.3em]
\multicolumn{10}{l}{\textbf{PI and ER satisfied}}\\
\multirow{2}{*}{$\psi_{1,\text{PI},n}^+ - \psi_{0,n}^+$} & 500 & -0.062 & 1.605 & 1.607 & 0.938 & -0.222 & 11.428 & 11.465 & 0.943\\
 & 4000 & -0.034 & 1.405 & 1.405 & 0.951 & -0.124 & 9.743 & 9.749 & 0.953\\
 \midrule
\multirow{2}{*}{$\psi_{1,\text{ER},n}^+ - \psi_{0,n}^+$} & 500 & -0.080 & 2.290 & 2.294 & 0.944 & -0.367 & 16.005 & 16.124 & 0.949\\
 & 4000 & -0.046 & 1.984 & 1.984 & 0.953 & -0.186 & 13.252 & 13.274 & 0.956\\
 \midrule
\multirow{2}{*}{$\psi_{1,\cdot,n}^+ - \psi_{0,n}^+$} & 500 & -0.047 & 1.317 & 1.317 & 0.940 & -0.143 & 9.617 & 9.628 & 0.942\\
 & 4000 & -0.006 & 1.237 & 1.236 & 0.944 & -0.045 & 8.751 & 8.744 & 0.948\\
 \midrule
\addlinespace[0.3em]
\multicolumn{10}{l}{\textbf{PI satisfied, ER violated}}\\
\multirow{2}{*}{$\psi_{1,\text{PI},n}^+ - \psi_{0,n}^+$} & 500 & 0.012 & 1.289 & 1.288 & 0.934 & -0.111 & 16.945 & 16.940 & 0.946\\
 & 4000 & -0.009 & 1.202 & 1.201 & 0.957 & -0.065 & 15.812 & 15.801 & 0.959\\
 \midrule
\multirow{2}{*}{$\psi_{1,\text{ER},n}^+ - \psi_{0,n}^+$} & 500 & -1.629 & 1.791 & 4.442 & 0.767 & -8.273 & 56.000 & 124.390 & 0.937\\
 & 4000 & -4.650 & 1.692 & 23.314 & 0.063 & -21.461 & 43.152 & 503.687 & 0.060\\
 \midrule
\multirow{2}{*}{$\psi_{1,\cdot,n}^+ - \psi_{0,n}^+$} & 500 & 1.163 & 1.193 & 2.545 & 0.811 & 4.058 & 11.895 & 28.352 & 0.761\\
 & 4000 & 3.410 & 1.138 & 12.764 & 0.110 & 12.008 & 11.210 & 155.396 & 0.059\\
 \midrule
\addlinespace[0.3em]
\multicolumn{10}{l}{\textbf{PI violated, ER satisfied}}\\
\multirow{2}{*}{$\psi_{1,\text{PI},n}^+ - \psi_{0,n}^+$} & 500 & 0.970 & 1.388 & 2.328 & 0.873 & 3.359 & 14.813 & 26.081 & 0.845\\
 & 4000 & 2.777 & 1.263 & 8.974 & 0.323 & 9.929 & 13.113 & 111.687 & 0.239\\
 \midrule
\multirow{2}{*}{$\psi_{1,\text{ER},n}^+ - \psi_{0,n}^+$} & 500 & -0.032 & 1.936 & 1.936 & 0.945 & -0.530 & 28.926 & 29.178 & 0.954\\
 & 4000 & -0.028 & 1.753 & 1.752 & 0.955 & -0.209 & 24.305 & 24.324 & 0.952\\
 \midrule
\multirow{2}{*}{$\psi_{1,\cdot,n}^+ - \psi_{0,n}^+$} & 500 & 1.787 & 1.246 & 4.436 & 0.634 & 5.994 & 11.254 & 47.167 & 0.533\\
 & 4000 & 5.197 & 1.180 & 28.189 & 0.000 & 17.500 & 10.373 & 316.616 & 0.000\\
 \midrule
\addlinespace[0.3em]
\multicolumn{10}{l}{\textbf{PI and ER violated}}\\
\multirow{2}{*}{$\psi_{1,\text{PI},n}^+ - \psi_{0,n}^+$} & 500 & 0.489 & 1.184 & 1.422 & 0.931 & 2.297 & 21.309 & 26.563 & 0.907\\
 & 4000 & 1.382 & 1.160 & 3.070 & 0.765 & 7.097 & 21.333 & 71.675 & 0.654\\
 \midrule
\multirow{2}{*}{$\psi_{1,\text{ER},n}^+ - \psi_{0,n}^+$} & 500 & -1.651 & 1.652 & 4.375 & 0.743 & -13.596 & 145.221 & 329.932 & 0.999\\
 & 4000 & -4.654 & 1.657 & 23.311 & 0.061 & -33.039 & 113.628 & 1205.101 & 0.047\\
 \midrule
\multirow{2}{*}{$\psi_{1,\cdot,n}^+ - \psi_{0,n}^+$} & 500 & 2.060 & 1.163 & 5.405 & 0.523 & 9.027 & 13.347 & 94.828 & 0.325\\
 & 4000 & 6.005 & 1.139 & 37.194 & 0.000 & 26.388 & 13.023 & 709.326 & 0.000\\
 \midrule
\bottomrule
\end{tabular}
\label{tab:tab:bias_var_cov}
\end{table}

\begin{table}[!h]
\centering
\caption{\label{tab:sim1_bounds_n500_rearranged}Bias, Coverage, and Bound Width ($n = 500$)}
\fontsize{9}{11}\selectfont
\begin{tabular}[t]{llcccccc}
\toprule
\multicolumn{2}{c}{} & \multicolumn{2}{c}{$n^{1/2} \times$ Bias} & \multicolumn{3}{c}{Coverage} & \multicolumn{1}{c}{} \\
\cmidrule(l{3pt}r{3pt}){3-4} \cmidrule(l{3pt}r{3pt}){5-7}
Setting & Covariates & $\ell$ & $u$ & $\ell$ & $u$ & Effect & Med. Width (IQR) \\
\midrule
\multirow{8}{*}{\textbf{PI and ER satisfied}}
  & Unadjusted & -0.034 & 0.018  & 0.939 & 0.947 & 0.983 & 0.28 (0.26, 0.31) \\
  & $X_1$         & -0.038 & 0.025  & 0.945 & 0.950 & 0.984 & 0.28 (0.26, 0.31) \\
  & $X_2$         & 0.049  & 0.030  & 0.946 & 0.951 & 0.981 & 0.28 (0.26, 0.30) \\
  & $X_3$         & 0.063  & -0.163 & 0.943 & 0.940 & 0.991 & 0.32 (0.29, 0.34) \\
  & $X_1$,$X_2$       & 0.153  & 0.007  & 0.942 & 0.943 & 0.977 & 0.27 (0.25, 0.30) \\
  & $X_1$,$X_3$       & 0.182  & -0.081 & 0.946 & 0.936 & 0.984 & 0.29 (0.27, 0.32) \\
  & $X_2$,$X_3$       & 0.173  & -0.280 & 0.938 & 0.926 & 0.981 & 0.29 (0.27, 0.32) \\
  & $X_1$,$X_2$,$X_3$     & 0.408  & -0.234 & 0.918 & 0.923 & 0.962 & 0.26 (0.24, 0.29) \\
\midrule
\multirow{8}{*}{\textbf{PI satisfied, ER violated}}
  & Unadjusted & 0.141  & -0.008 & 0.949 & 0.950 & 0.981 & 0.28 (0.25, 0.30) \\
  & $X_1$         & 0.343  & -0.001 & 0.932 & 0.953 & 0.978 & 0.27 (0.25, 0.29) \\
  & $X_2$         & 0.178  & 0.001  & 0.950 & 0.946 & 0.970 & 0.26 (0.23, 0.28) \\
  & $X_3$         & 0.197  & 0.004  & 0.946 & 0.946 & 0.984 & 0.28 (0.26, 0.31) \\
  & $X_1$,$X_2$       & 0.434  & 0.012  & 0.914 & 0.949 & 0.946 & 0.25 (0.22, 0.27) \\
  & $X_1$,$X_3$       & 0.434  & 0.012  & 0.914 & 0.949 & 0.946 & 0.25 (0.22, 0.27) \\
  & $X_2$,$X_3$       & 0.434  & 0.012  & 0.914 & 0.949 & 0.946 & 0.25 (0.22, 0.27) \\
  & $X_1$,$X_2$,$X_3$     & 0.434  & 0.012  & 0.914 & 0.949 & 0.946 & 0.25 (0.22, 0.27) \\
\midrule
\multirow{8}{*}{\textbf{PI violated, ER satisfied}}
  & Unadjusted & 0.071  & 0.043  & 0.959 & 0.959 & 0.906 & 0.23 (0.21, 0.26) \\
  & $X_1$         & 0.209  & 0.047  & 0.946 & 0.964 & 0.911 & 0.22 (0.20, 0.25) \\
  & $X_2$         & 0.262  & 0.048  & 0.941 & 0.956 & 0.917 & 0.21 (0.19, 0.24) \\
  & $X_3$         & 0.228  & 0.056  & 0.944 & 0.951 & 0.951 & 0.22 (0.20, 0.24) \\
  & $X_1$,$X_2$       & 0.372  & 0.056  & 0.919 & 0.958 & 0.930 & 0.21 (0.18, 0.23) \\
  & $X_1$,$X_3$       & 0.349  & 0.052  & 0.930 & 0.950 & 0.955 & 0.22 (0.19, 0.24) \\
  & $X_2$,$X_3$       & 0.320  & 0.074  & 0.932 & 0.948 & 0.947 & 0.21 (0.18, 0.23) \\
  & $X_1$,$X_2$,$X_3$     & 0.507  & -0.005 & 0.902 & 0.949 & 0.931 & 0.19 (0.17, 0.22) \\
\midrule
\multirow{8}{*}{\textbf{PI and ER violated}}
  & Unadjusted & -0.009 & 0.017  & 0.944 & 0.949 & 0.918 & 0.16 (0.14, 0.18) \\
  & $X_1$         & 0.009  & 0.021  & 0.949 & 0.954 & 0.913 & 0.16 (0.14, 0.18) \\
  & $X_2$         & 0.013  & 0.022  & 0.950 & 0.950 & 0.911 & 0.15 (0.13, 0.18) \\
  & $X_3$         & 0.045  & 0.033  & 0.948 & 0.953 & 0.905 & 0.16 (0.13, 0.18) \\
  & $X_1$,$X_2$       & 0.081  & 0.027  & 0.954 & 0.952 & 0.897 & 0.15 (0.13, 0.17) \\
  & $X_1$,$X_3$       & 0.122  & 0.041  & 0.947 & 0.952 & 0.887 & 0.16 (0.13, 0.18) \\
  & $X_2$,$X_3$       & 0.123  & 0.050  & 0.943 & 0.948 & 0.851 & 0.15 (0.13, 0.17) \\
  & $X_1$,$X_2$,$X_3$     & 0.254  & 0.006  & 0.931 & 0.943 & 0.753 & 0.14 (0.12, 0.16) \\
\bottomrule
\end{tabular}
\end{table}

\begin{table}[!h]
\centering
\caption{\label{tab:sim1_bounds_n4000_rearranged}Bias, Coverage, and Bound Width ($n = 4000$)}
\fontsize{9}{11}\selectfont
\begin{tabular}[t]{llcccccc}
\toprule
\multicolumn{2}{c}{} & \multicolumn{2}{c}{$n^{1/2} \times$ Bias} & \multicolumn{3}{c}{Coverage} & \multicolumn{1}{c}{} \\
\cmidrule(l{3pt}r{3pt}){3-4} \cmidrule(l{3pt}r{3pt}){5-7}
Setting & Covariates & $\ell$ & $u$ & $\ell$ & $u$ & Effect & Med. Width (IQR) \\
\midrule
\multirow{8}{*}{\textbf{PI and ER satisfied}}
  & Unadjusted & -0.041 & -0.034 & 0.945 & 0.943 & 1.000 & 0.28 (0.27, 0.29) \\
  & $X_1$         & -0.049 & -0.032 & 0.944 & 0.943 & 1.000 & 0.28 (0.28, 0.29) \\
  & $X_2$         & -0.044 & -0.028 & 0.943 & 0.939 & 1.000 & 0.28 (0.27, 0.29) \\
  & $X_3$         & -0.025 & -0.021 & 0.943 & 0.952 & 1.000 & 0.33 (0.32, 0.34) \\
  & $X_1$,$X_2$       & -0.009 & 0.002  & 0.944 & 0.939 & 1.000 & 0.28 (0.27, 0.29) \\
  & $X_1$,$X_3$       & -0.021 & 0.007  & 0.949 & 0.952 & 1.000 & 0.31 (0.30, 0.32) \\
  & $X_2$,$X_3$       & 0.051  & -0.115 & 0.946 & 0.946 & 1.000 & 0.31 (0.30, 0.32) \\
  & $X_1$,$X_2$,$X_3$     & 0.162  & -0.007 & 0.939 & 0.953 & 1.000 & 0.29 (0.28, 0.30) \\
\midrule
\multirow{8}{*}{\textbf{PI satisfied, ER violated}}
  & Unadjusted & -0.023 & -0.027 & 0.951 & 0.950 & 1.000 & 0.28 (0.27, 0.29) \\
  & $X_1$         & 0.241  & -0.026 & 0.952 & 0.950 & 1.000 & 0.28 (0.27, 0.29) \\
  & $X_2$         & 0.011  & -0.018 & 0.951 & 0.946 & 1.000 & 0.27 (0.26, 0.27) \\
  & $X_3$         & -0.017 & -0.029 & 0.952 & 0.952 & 1.000 & 0.29 (0.28, 0.30) \\
  & $X_1$,$X_2$       & 0.153  & -0.004 & 0.944 & 0.947 & 1.000 & 0.26 (0.26, 0.27) \\
  & $X_1$,$X_3$       & 0.153  & -0.004 & 0.944 & 0.947 & 1.000 & 0.26 (0.26, 0.27) \\
  & $X_2$,$X_3$       & 0.153  & -0.004 & 0.944 & 0.947 & 1.000 & 0.26 (0.26, 0.27) \\
  & $X_1$,$X_2$,$X_3$     & 0.153  & -0.004 & 0.944 & 0.947 & 1.000 & 0.26 (0.26, 0.27) \\
\midrule
\multirow{8}{*}{\textbf{PI violated, ER satisfied}}
  & Unadjusted & -0.013 & 0.050 & 0.949 & 0.952 & 1.000 & 0.23 (0.22, 0.24) \\
  & $X_1$         & 0.007  & 0.050 & 0.950 & 0.955 & 1.000 & 0.23 (0.22, 0.24) \\
  & $X_2$         & 0.180  & 0.050 & 0.946 & 0.957 & 1.000 & 0.22 (0.21, 0.23) \\
  & $X_3$         & 0.125  & 0.034 & 0.954 & 0.951 & 1.000 & 0.23 (0.22, 0.23) \\
  & $X_1$,$X_2$       & 0.184  & 0.057 & 0.936 & 0.956 & 1.000 & 0.22 (0.21, 0.23) \\
  & $X_1$,$X_3$       & 0.214  & 0.044 & 0.940 & 0.950 & 1.000 & 0.23 (0.22, 0.24) \\
  & $X_2$,$X_3$       & 0.096  & 0.039 & 0.952 & 0.957 & 1.000 & 0.22 (0.21, 0.23) \\
  & $X_1$,$X_2$,$X_3$     & 0.203  & 0.051 & 0.935 & 0.959 & 1.000 & 0.22 (0.21, 0.23) \\
\midrule
\multirow{8}{*}{\textbf{PI and ER violated}}
  & Unadjusted & -0.014 & 0.029 & 0.948 & 0.950 & 1.000 & 0.16 (0.15, 0.17) \\
  & $X_1$         & -0.015 & 0.029 & 0.949 & 0.949 & 1.000 & 0.16 (0.15, 0.17) \\
  & $X_2$         & -0.009 & 0.032 & 0.947 & 0.947 & 1.000 & 0.16 (0.15, 0.16) \\
  & $X_3$         & -0.019 & 0.015 & 0.951 & 0.953 & 1.000 & 0.16 (0.15, 0.17) \\
  & $X_1$,$X_2$       & 0.000  & 0.039 & 0.946 & 0.949 & 1.000 & 0.15 (0.15, 0.16) \\
  & $X_1$,$X_3$       & -0.007 & 0.023 & 0.946 & 0.949 & 1.000 & 0.16 (0.15, 0.17) \\
  & $X_2$,$X_3$       & -0.011 & 0.023 & 0.945 & 0.948 & 1.000 & 0.15 (0.15, 0.16) \\
  & $X_1$,$X_2$,$X_3$     & 0.046  & 0.037 & 0.936 & 0.951 & 1.000 & 0.15 (0.15, 0.16) \\
\bottomrule
\end{tabular}
\end{table}

\subsection{Additional details and results for ``Comparing power of estimands in realistic setting'' simulation}

\subsubsection{Data generation details}
$X_1$ was generated as a Bernoulli$(0.5)$ variable, $X_2$ was generated from a normal distribution with mean $-0.97$ and standard deviation $0.90$, and $X_3$ was generated from a negative binomial distribution with mean $5.26$ and dispersion chosen to match observed variability, truncated to have minimum value one. Conditional on $X$, principal stratum membership was generated using a multinomial logistic model. Probabilities parameterizing this model were defined by letting $g_{\text{D}}(X)=-1.2+0.81X_1+0.18X_2+0.06X_3+\delta_{\text{P}}$ and $g_{\text{I}}(X)=1.5-0.30X_1+0.10X_2-0.08X_3+\delta_{\text{I}}+\delta_{\text{P}}$. Principal stratum probabilities were defined by softmax transformation of these linear predictors: $P\{S(1)=1,S(0)=1\mid X\}=\exp\{g_{\text{D}}(X)\}/\{1+\exp\{g_{\text{D}}(X)\}+\exp\{g_{\text{I}}(X)\}\}$, $P\{S(1)=0,S(0)=0\mid X\}=\exp\{g_{\text{I}}(X)\}/\{1+\exp\{g_{\text{D}}(X)\}+\exp\{g_{\text{I}}(X)\}\}$, and $P\{S(1)=0,S(0)=1\mid X\}=1/\{1+\exp\{g_{\text{D}}(X)\}+\exp\{g_{\text{I}}(X)\}\}$. Potential infection outcomes were generated deterministically based on principal stratum membership. 

The parameters $\delta_{\text{I}}$ and $\delta_{\text{P}}$ were used to shift the relative probabilities of the Immune and Protected strata, respectively. We refer to these parameters as \emph{strata composition governing}. 

Binary outcome potential outcomes were generated from stratum- and treatment-specific logistic regression models calibrated to antibiotic use patterns observed in PROVIDE. In the Doomed stratum, antibiotic use risk was specified as $P\{Y(1)=1\mid S(0)=1,S(1)=1,X\}=\text{expit}(-0.70+0.78X_1-1.44X_2+0.49X_3)$, which was estimated from the PROVIDE data set by fitting a logistic regression to the infected vaccinated participants. The probability of the potential outcome in the Doomed under placebo was defined as
$P\{Y(0)=1\mid S(0)=1,S(1)=1,X\}=\text{expit}(\text{logit}\{P(Y(1)=1\mid S(0)=1,S(1)=1,X)\}-\eta_{\text{D}})$,
so that positive values of $\eta_{\text{D}}$ corresponded to larger protective effects of vaccination on the post-infection outcome among Doomed individuals. For individuals in the Protected stratum, outcome risk under placebo was set equal to that of the Doomed stratum,
$P\{Y(0)=1\mid S(0)=1,S(1)=0,X\}=P\{Y(0)=1\mid S(0)=1,S(1)=1,X\}$, so that the partial principal ignorability Assumption S2 required to identify the effect in the Doomed stratum was satisfied (see Section H.1). The probability for antibiotic treatment under vaccine in the Protected stratum was then set to 
$P\{Y(1)=1\mid S(0)=1,S(1)=0,X\}=\text{expit}(\text{logit}\{P(Y(0)=1\mid S(0)=1,S(1)=0,X)\}+\eta_{\text{P}})$,
so that positive values of $\eta_{\text{P}}$ corresponded to larger protective effects in the Protected stratum. For Immune individuals, we set $P\{Y(1)=1\mid S(0)=0,S(1)=0,X\}=P\{Y(1)=1\mid S(0)=1,S(1)=0,X\}$, thereby imposing Assumption 7. Finally, we set $P\{Y(0)=1\mid S(0)=0,S(1)=0,X\}=P\{Y(1)=1\mid S(0)=0,S(1)=0,X\}$, thereby imposing the exclusion restriction. Observed infection and outcome were defined as $S=S(Z)$ and $Y=Y(Z)$.

We refer to $\eta_{\text{D}}$ and $\eta_{\text{P}}$ as \emph{post-infection outcome effect governing}. We varied these parameters over a two-dimensional grid and for each grid point, computed the corresponding marginal risk differences within the Doomed and Protected strata using a single large independent Monte Carlo sample of size $10^6$. Tables \ref{tab:sim_params1} and \ref{tab:sim_params2} summarize the parameter settings used to construct the power contour simulations. Table S1 reports the values of the stratum-composition parameters $\delta_{\text{I}}$ and $\delta_{\text{P}}$, which shift the linear predictors governing principal stratum membership and thereby control the marginal proportion of Immune individuals and the marginal vaccine efficacy against infection. For each setting, the resulting marginal probability $P\{S(1)=0,S(0)=0\}$ and marginal vaccine efficacy were computed using a large independent Monte Carlo sample of size $10^6$, and are reported to document the realized stratum composition underlying each set of contour plots. Table S2 reports the mapping between the outcome-effect parameters $\eta_{\text{D}}$ and $\eta_{\text{P}}$, which govern the strength of the vaccine effect on the post-infection outcome in the Doomed and Protected principal strata, respectively, and the corresponding marginal risk differences used to label the contour plot axes. These risk differences were computed by averaging stratum-specific potential outcomes over baseline covariates in the same large Monte Carlo sample, ensuring that contour axes are interpretable on the risk-difference scale. Together, these tables provide a complete description of the principal stratum composition settings and outcome-effect magnitudes underlying the power contour analyses.

\begin{table}[ht]
\centering
\caption{Principal stratum composition settings used in the power contour simulations. 
Parameters $\delta_{\text{I}}$ and $\delta_{\text{P}}$ shift the linear predictors for the Immune stratum and for the non-Protected strata, respectively, inducing changes in the marginal proportion Immune and in marginal vaccine efficacy against infection. 
Marginal quantities were computed from a large independent Monte Carlo sample of size $10^6$.}
\begin{tabular}{cccc}
\hline
$P\{S(1)=0,S(0)=0\}$ & VE & $\delta_{\text{I}}$ & $\delta_{\text{P}}$ \\
\hline
0.40 & 0.66 & $-0.73$ & $-0.12$ \\
0.60 & 0.66 & $0.16$  & $-0.19$ \\
0.80 & 0.66 & $1.11$  & $-0.11$ \\
0.60 & 0.50 & $-0.29$ & $0.55$  \\
0.60 & 0.85 & $0.95$  & $-1.21$ \\
\hline
\end{tabular}
\label{tab:sim_params1}
\end{table}

\begin{table}[ht]
\centering
\caption{Mapping between outcome-effect parameters and marginal risk differences used as contour plot axes. 
For each value of $\eta_{\text{D}}$ and $\eta_{\text{P}}$, marginal risk differences were computed within the Doomed and Protected principal strata, respectively, by averaging over baseline covariates in a large independent Monte Carlo sample of size $10^6$.}
\begin{tabular}{cccc}
\hline
$\eta_{\text{D}}$ & RD$_{\text{D}}$ & $\eta_{\text{P}}$ & RD$_{\text{P}}$ \\
\hline
$0.0$  & $0.00$ & $0.0$  & $0.00$ \\
$-0.5$ & $-0.06$ & $-0.5$ & $-0.08$ \\
$-1.0$ & $-0.12$ & $-1.0$ & $-0.15$ \\
$-1.5$ & $-0.18$ & $-1.5$ & $-0.23$ \\
$-2.0$ & $-0.24$ & $-2.0$ & $-0.30$ \\
$-2.5$ & $-0.29$ & $-2.5$ & $-0.36$ \\
$-3.0$ & $-0.33$ & $-3.0$ & $-0.41$ \\
\hline
\end{tabular}
\label{tab:sim_params2}
\end{table}

\section{PROVDE: Additional results}

\subsection{Covariate adjusted results}

Table \ref{tab:tab:realdata_bounds} shows unadjusted and covariate-adjusted bounds for the Naturally Infected effects in the PROVIDE analysis. There were no covariates that led meaningfully narrower bounds for the effect.

\begin{table}[!h]
\centering
\fontsize{9}{11}\selectfont
\begin{tabular}[t]{lccc}
\toprule
\multicolumn{1}{c}{} & \multicolumn{1}{c}{Lower bound (95\% CI)} & \multicolumn{1}{c}{Upper bound (95\% CI)} \\
\midrule

\addlinespace[0.3em]
\multicolumn{3}{l}{\textbf{Additive}}\\

\addlinespace[0.3em]
\multicolumn{3}{l}{\textbf{Unadjusted Bound}}\\
\hspace{1em}Unadjusted & -0.425 (-0.568, -0.257) & 0.086 (0.022, 0.161) \\

\addlinespace[0.3em]
\multicolumn{3}{l}{\textbf{Covariate-Adjusted Bound}}\\
\hspace{1em}Gender & -0.431 (-0.544, -0.258) & 0.087 (0.022, 0.165) \\
\hspace{1em}Enrollment HAZ (bin) & -0.433 (-0.560, -0.258) & 0.082 (0.014, 0.160) \\
\hspace{1em}Household size (bin) & -0.433 (-0.558, -0.259) & 0.084 (0.017, 0.161) \\
\hspace{1em}Gender $\times$ HAZ & -0.426 (-0.558, -0.254) & 0.083 (0.016, 0.161) \\
\hspace{1em}Gender $\times$ Household & -0.432 (-0.572, -0.254) & 0.086 (0.021, 0.160) \\
\hspace{1em}HAZ $\times$ Household & -0.422 (-0.557, -0.245) & 0.085 (0.022, 0.161) \\
\hspace{1em}All interactions & -0.424 (-0.564, -0.255) & 0.086 (0.019, 0.162) \\

\addlinespace[0.8em]
\midrule

\addlinespace[0.3em]
\multicolumn{3}{l}{\textbf{Multiplicative}}\\

\addlinespace[0.3em]
\multicolumn{3}{l}{\textbf{Unadjusted Bound}}\\
\hspace{1em}Unadjusted & 0.524 (0.379, 0.705) & 1.096 (1.024, 1.197) \\

\addlinespace[0.3em]
\multicolumn{3}{l}{\textbf{Covariate-Adjusted Bound}}\\
\hspace{1em}Gender & 0.516 (0.398, 0.704) & 1.098 (1.024, 1.201) \\
\hspace{1em}Enrollment HAZ (bin) & 0.517 (0.385, 0.706) & 1.091 (1.015, 1.195) \\
\hspace{1em}Household size (bin) & 0.515 (0.382, 0.703) & 1.094 (1.018, 1.195) \\
\hspace{1em}Gender $\times$ HAZ & 0.524 (0.389, 0.709) & 1.093 (1.017, 1.194) \\
\hspace{1em}Gender $\times$ Household & 0.516 (0.377, 0.706) & 1.096 (1.021, 1.194) \\
\hspace{1em}HAZ $\times$ Household & 0.528 (0.393, 0.725) & 1.096 (1.023, 1.196) \\
\hspace{1em}All interactions & 0.525 (0.384, 0.703) & 1.096 (1.020, 1.196) \\

\bottomrule
\end{tabular}
\caption{Unadjusted and covariate-adjusted bounds for additive and multiplicative Naturally Infected effects of rotavirus vaccine on antibiotic prescribing within one-year of vaccination}
\label{tab:tab:realdata_bounds}
\end{table}

\subsection{Sensitivity analysis}

In our PROVIDE sensitivity analysis, we let $\epsilon$ range from 0.55 to 2.2, the former of which was the value that led to the point estimate for the effect approximately equaling the estimated upper bound; the latter was the largest value deemed clinically plausible. For each $\epsilon$, we estimated the sensitivity analysis parameter and plotted the implied estimates as a function of $\epsilon$ along with pointwise 95\% confidence intervals. We found that small positive values of $\epsilon$ led to evidence of positive vaccine effect, indicating that if children in the Immune strata were slightly more likely than children in the Protected to receive antibiotics when vaccinated, then we would have evidence of a positive vaccine effect. However, it may be more realistic to assume that $\epsilon < 1$ since there may be shared exposure pathways and susceptibility factors for rotavirus acquisition as with other causes of diarrhea for which antibiotics may be prescribed. Thus, children who are ``Immune'' with respect to infection with rotavirus, may also be at lower risk for acquisition of other diarrhea-causing pathogens and therefore less likely to receive antibiotics than children who are ``Protected'' with respect to rotavirus.

The results for $\epsilon = 1$ in this analysis differ very slightly from those reported in the main text for point identification. To ensure that the sensitivity parameter estimate is well defined, we must ensure that monotonicity holds in our estimates of $\rho_1$ and $\rho_0$ for each $x$. If not, then it is possible for terms in denominators of the sensitivity parameter to evaluate to 0. In the main analysis, we did not enforce any monotonicity requirement in our nuisance estimation strategies. Here, we estimated $\rho_0$ and $\rho_1$ using a main terms logistic regression model that regressed $S$ on $Z$ and $X$. Because the coefficient associated with $Z$ was negative, monotonicity held for this estimator and the sensitivity analysis could proceed. Future research will be devoted to developing arbitrary ML estimators that respect monotonicity.

\begin{figure}
    \centering
    \includegraphics[width=0.8\linewidth]{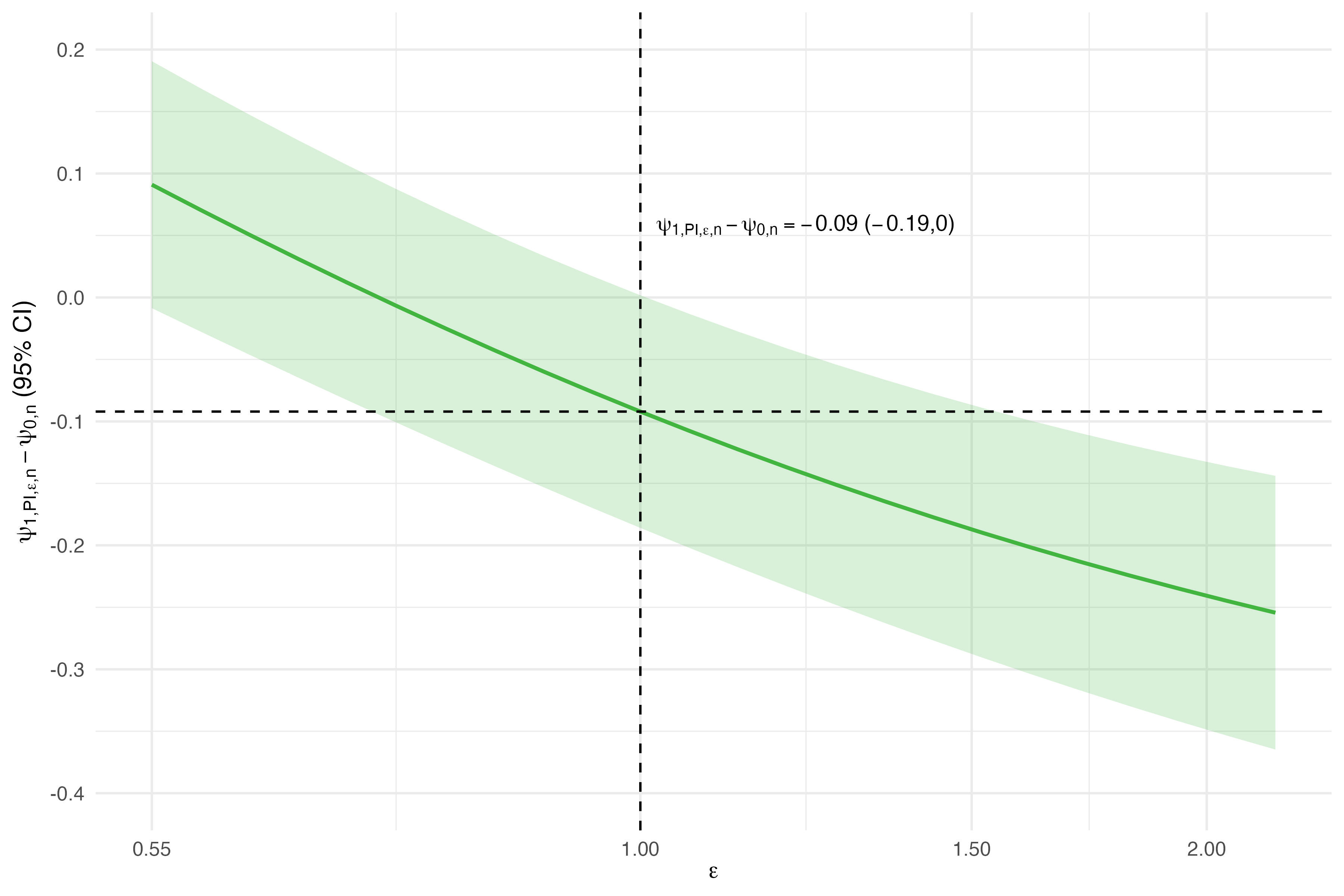}
    \caption{Sensitivity analysis for PROVIDE data. The green line shows the estimated additive effect as a function of the sensitivity parameter $epsilon$.}
    \label{fig:placeholder}
\end{figure}

\section{Proofs}

\subsection{Proof of Theorem 1}

\begin{proof}
We have that
\begin{align*}
E\{Y(0) \mid S(0) = 1\} &= E[Y(0) \mid Z = 0, S(0) = 1] \\
&= E\{Y \mid Z = 0, S = 1\} \\
&= E\left[ \frac{P(S = 1 \mid Z = 0, X)}{P(S = 1 \mid Z = 0)} E(Y \mid Z = 0, S = 1, X) \right] \ .
\end{align*}
The first and second equalities follow from vaccine randomization and causal consistency. Positivity ensures the identifying functional is well defined.

\end{proof}

\subsection{Proof of Theorem 2}

\begin{proof}
We have that
\begin{align*}
E\{Y(1) \mid S(0) = 1, S(1) = 1 \} &= E\{ Y(1) \mid S(1) = 1 \} \\
&= E\{Y(1) \mid Z = 1, S(1) = 1\} \\
&= E\{Y \mid Z = 1, S = 1\} \ .
\end{align*}
The equalities follows from monotonicity, vaccine randomization, and causal consistency. Positivity ensures the identifying functional is well defined.

We also have that \begin{align*}
P\{ S(1) = 0, S(0) = 0 \} &= P\{ S(0) = 0 \} = P\{ S(0) = 0 \mid Z = 0 \} = P\{ S = 0 \mid Z = 0\} = 1 - \bar{\rho}_0 \ .
\end{align*}
Similarly, \begin{align*}
P\{ S(1) = 1, S(0) = 1 \} &= P\{ S(1) = 1 \} = P\{ S(1) = 1 \mid Z = 1 \} = P\{ S = 1 \mid Z = 1 \} = \bar{\rho}_1 \ .
\end{align*}
Finally, monotonicity implies that \begin{align*}
    P\{ S(1) = 0, S(0) = 1\} &= 1 - [ P\{ S(1) = 0, S(0) = 0 \} + P\{ S(1) = 1, S(0) = 1 \} ] \\
    &= 1 - \{ 1 - \bar{\rho}_0 + \bar{\rho}_1 \} = \bar{\rho}_0 - \bar{\rho}_1 \ . 
\end{align*}
\end{proof}

For a visual representation, consider Figure \ref{fig:bounds}. From the Figure, we may infer that identification of $E\{Y(1) \mid S(0) = 1, S(1) = 1\}$ is straightforward as all observed infected vaccinated individuals must be Doomed. We may also infer that the joint distribution of potential infection outcomes is also identified. The fraction of the vaccine arm that is infected $\bar{\rho}_1$ gives the proportion of the population in the Doomed stratum; the fraction of the placebo arm that is uninfected $1 - \bar{\rho}_0$ gives the proportion of the population in the Immune stratum; while one minus the sum of these quantities therefore yields the proportion of the population in the Protected stratum. Thus, the fraction of the Naturally infected who are Doomed is identified by the ratio of infected vaccinated vs. placebo recipients, $\bar{\rho}_1/\bar{\rho}_0$ (dashed line, right side of Figure \ref{fig:bounds}). 

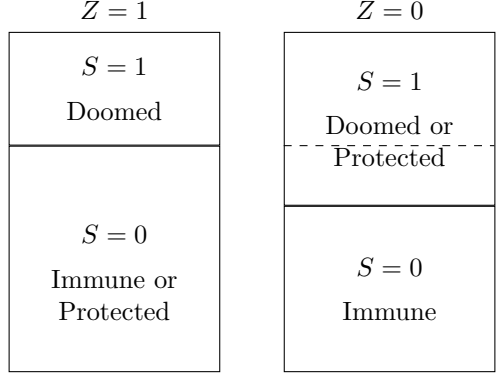
\begin{figure}
\begin{center}
\begin{tikzpicture}
\node[] at (0, 0) (label1) {$Z = 1$};
\node[draw, minimum width=2.8cm, minimum height=1.5cm, align=center, below = 0.05cm of label1] (vaxinf) {$S = 1$ \\[0.5em] Doomed};
\node[draw, minimum width=2.8cm, minimum height=3cm, below=0cm of vaxinf, align = center] (vaxuninf) {\\[0.5em] $S = 0$ \\[0.5em] Immune or \\ Protected};

\node[right=2.5cm of label1] (label2) {$Z = 0$};
\node[draw, minimum width=2.8cm, minimum height=2.3cm, align=center, below = 0.05cm of label2] (plcinf) {$S = 1$ \\[0.5em] Doomed or \\ Protected};
\node[draw, minimum width=2.8cm, minimum height=2.2cm, below=0cm of plcinf, align = center] (plcuninf) {$S = 0$ \\[0.5em] Immune};

\path (plcinf.south west) ++(0,0.8cm) coordinate (entrypoint2);
\draw[dashed] (entrypoint2) -- ++(2.8cm, 0);

\end{tikzpicture}

\caption{
The observed vaccinated (left) and placebo (right) groups can be divided (solid lines) based on observed infection status ($S = 1$, top = infected, $S = 0$, bottom = uninfected). Under monotonicity, these observed strata are mixtures of basic principal strata.
}
\end{center}
\label{fig:bounds}
\end{figure}

\subsection{Proof of Theorem 3}

\begin{proof}
    By randomization and consistency, $E\{ Y(1) \mid S(1) = 0 \} = E(Y \mid Z = 1, S = 0)$. Thus, the observed strata of vaccinated uninfected participants is a mixture of the Immune and Protected strata with $q = (\bar{\rho}_0 - \bar{\rho}_1) / (1 - \bar{\rho}_1)$ proportion Protected and $(1 - q)$ proportion Immune. Therefore, it must be true that the mean in the Protected is at least as large as $E(Y \mid Z = 1, S = 0, Y < Y_{\ell})$ and can be no larger than $E(Y \mid Z = 1, S = 0, Y > Y_u)$.

\end{proof}

\subsection{Proof of Theorem 4}

\begin{proof}
    We have that \begin{align*}
        E\{ Y(1) \} &= E\{ Y(1) \mid S(0) = 1, S(1) = 1 \} P\{S(0) = 1, S(1) = 1\} \\
        &\hspace{1em} + E\{ Y(1) \mid S(0) = 0, S(1) = 0 \} P\{S(0) = 0, S(1) = 0\} \\
        &\hspace{2em} + E\{ Y(1) \mid S(0) = 0, S(1) = 1 \} P\{S(0) = 0, S(1) = 1\} \\
        &= \bar{\mu}_{11} \bar{\rho}_1 + E\{ Y(1) \mid S(0) = 0, S(1) = 0 \} (1 - \bar{\rho}_0) \\
        &\hspace{2em} + E\{ Y(1) \mid S(0) = 0, S(1) = 1 \} (\bar{\rho}_0 - \bar{\rho}_1) \\
    \end{align*}
The first equality follows from monotonicity and the tower rule; the second equality follows from Theorem 2. We also have that under vaccine randomization, $E\{ Y(1) \} = E\{ Y \mid Z = 1 \} = \bar{\mu}_{11} \bar{\rho}_1 + \bar{\mu}_{10} (1 - \bar{\rho}_1).$ Moreover, under an exclusion restriction \begin{align*}
    E\{ Y(1) \mid S(0) = 0, S(1) = 0 \} &= E\{ Y(1, 0) \mid S(0) = 0, S(1) = 0 \} \\
    &= E\{ Y(0, 0) \mid S(0) = 0, S(1) = 0 \} \\
    &= E\{ Y(0) \mid S(0) = 0, S(1) = 0 \} = \bar{\mu}_{00} \ .
\end{align*}
That is, if the exclusion restriction holds then outcomes under vaccine in the Immune stratum are no different on average than outcomes under placebo in the Immune stratum. The latter quantity is identified simply by the observed average outcome in the placebo uninfecteds (who must all belong to the Immune stratum). Thus, we have argued that \[
\bar{\mu}_{11} \bar{\rho}_1 + \bar{\mu}_{10} (1 - \bar{\rho}_1) =  \bar{\mu}_{11} \bar{\rho}_1 + \bar{\mu}_{00} (1 - \bar{\rho}_0) + E\{ Y(1) \mid S(0) = 0, S(1) = 1 \} (\bar{\rho}_0 - \bar{\rho}_1) \ . 
\]
Rearranging terms gives the result.
\end{proof}

To understand this result intuitively consider that under randomization, the marginal average post-infection outcome under vaccine is identifiable as $\bar{\mu}_{1\cdot}$. This marginal average decomposes into weighted averages in each of the three basic principal strata. As established in Theorem 2, both the average outcome under vaccine in the Doomed as well as distribution of basic principal strata are identified. The exclusion restriction allows us to identify the average outcome under vaccine in the Immune via the average observed outcome in the placebo uninfecteds. By the exclusion restriction, these observed outcomes, even though observed under placebo, are no different than those we would have observed under vaccine. This then allows us to solve for the mean in the protected as a function of these other identifying parameters.

\subsection{Proof of Theorem 5}

\begin{proof}
We have that \begin{align*}
    E\{ Y(1) \mid S(1) = 0, S(0) = 1, X\} 
    &= E\{ Y(1) \mid S(1) = 0, X \} \\
    &= E\{ Y(1) \mid Z = 1, S(1) = 0,  X \} \\
    &= E( Y \mid Z = 1, S = 0, X )  \ ,
\end{align*}
where the first equality follows from partial principal ignorability. Our positivity assumption ensures that $E(Y \mid Z = 1, S = 0, X)$ is well defined for all $X$ such that $P(S = 1 \mid Z = 0, X) > 0$.
\end{proof}

Our assumption of partial principle ignorability is  similar to the assumption of principal ignorability \citep{jo2009use}, which in this case would stipulate that $Y(1), Y(0) \perp S(1), S(0) \mid X$. However, due to the fact that our principal stratum of interest is partially identified, we do not need the full principal ignorability assumption. \citet{feller2017principal} noted a weaker form of principal ignorability that can also often be leveraged to identify principal strata estimands. Their assumption that $Y(1) \perp S(0) \mid X$ is also stronger than needed for identification in this case, since we only require this independence to hold in the $S(1) = 0$ strata.

\subsection{Proofs for $\psi_0$}

\subsubsection{Proof of Theorem 6}
We work in the nonparametric model for the observed data $O=(X,Z,S,Y)$. All parameters in the paper are functionals of the observed distribution $P$ and depend only on 
\[
\pi_z(X)=P(Z=z\mid X),\quad
\rho_z(X)=P(S=1\mid Z=z,X),\quad
\mu_{zs}(X)=E(Y\mid Z=z,S=s,X).
\]

The efficient gradient or \textit{efficient influence function} (EIF) is obtained by computing the pathwise derivative of the parameter along an arbitrary regular parametric submodel $P_\varepsilon$ with score $s(O)$ and rewriting the derivative as
\[
\left.\frac{d}{d\varepsilon}\psi(P_\varepsilon)\right|_{\varepsilon=0}
= E\{\Phi(O)s(O)\}.
\]
The function $\Phi$ is then the EIF because the model is fully nonparametric.

Rather than computing the derivative directly every time, we repeatedly use the same three decomposition principles described below. 

\textbf{Conditional mean contributions.} Every nuisance regression produces a residual weighted by the inverse probability of observing that regression stratum. So the pathwise derivative of a conditional mean $E(Y\mid A=a,X)=m_a(X)$ for some event $A=a$ contributes the residual term $\frac{\mathbb{I}(A=a)}{P(A\mid X)}\{Y-m_a(X)\}$. In the present paper this produces the following terms: 
\[
\frac{\mathbb{I}(Z=z,S=s)}{\pi_z(X)P(S=s\mid Z=z,X)}\{Y-\mu_{zs}(X)\}, 
\quad  \text{and} \quad
\frac{\mathbb{I}(Z=z)}{\pi_z(X)}\{S-\rho_z(X)\}.
\]

\textbf{Marginal distribution contribution.} If a parameter can be written as an expectation over $X$, $\psi=E\{h(X)\}$, then perturbations of the marginal law of $X$ contribute $h(X)-\psi$. Thus every functional of $X$ generates a plug-in correction term equal to
\[
\text{conditional functional evaluated at }X - \text{target parameter}.
\]

\textbf{Ratio functionals.} Many parameters in the paper are ratios $\psi=\frac{A}{B}$. If $\Phi_A$ and $\Phi_B$ are influence functions for $A$ and $B$, then the influence function for $\psi$ is $\Phi_\psi=\frac{1}{B}\big(\Phi_A-\psi\Phi_B\big)$.

All together, we derive the EIF for our parameters of interest by following the steps below: 
\begin{enumerate}
\item Express the parameter using only $\mu_{zs}(X)$, $\rho_z(X)$ and expectations over $X$.
\item For each $\mu_{zs}(X)$ include an outcome residual term.
\item For each $\rho_z(X)$ include a selection residual term.
\item Add the marginal $X$ correction $h(X)-\psi$.
\item If the parameter is a ratio, apply the ratio rule.
\end{enumerate}

After simplification the resulting expression is the EIF.

\begin{proof}
We want the EIF of 
\[
\psi_0
=E\!\left\{\frac{\rho_0(X)}{\bar\rho_0}\mu_{01}(X)\right\}
=\frac{E\{\rho_0(X)\mu_{01}(X)\}}{\bar\rho_0} = \frac{E\{\rho_0(X)\mu_{01}(X)\}}{E\{\rho_0(X)\}} = \frac{A}{B}.
\]

The numerator depends on $\mu_{01}(X)$, $\rho_0(X)$, and  the distribution of $X$, each with the following contributions: 
\[
\frac{(1-Z)S}{\pi_0(X)}\{Y-\mu_{01}(X)\}, 
\quad 
\frac{(1-Z)}{\pi_0(X)}\mu_{01}(X)\{S-\rho_0(X)\}, 
\quad 
\rho_0(X)\mu_{01}(X)-A.
\]

The denominator contribution is: 
\[
\frac{(1-Z)}{\pi_0(X)}\{S-\rho_0(X)\}
+\rho_0(X)-\bar\rho_0.
\]

Given that we wrote $\psi_0$ as a ratio parameter, the final EIF $\Phi_0$ is given by 
\[
\Phi_0(O)
=\frac{1}{\bar\rho_0}\{\Phi_A-\psi_0\Phi_B\}.
\]

After collecting terms, we arrive at the expression in Theorem 7. 
\[
\Phi_0(O)=
\frac{(1-Z)S}{\pi_0(X)\bar\rho_0}\{Y-\mu_{01}(X)\}
+\frac{(1-Z)}{\pi_0(X)\bar\rho_0}\{\mu_{01}(X)-\psi_0\}\{S-\rho_0(X)\}
-\frac{\psi_0}{\bar\rho_0}\{\rho_0(X)-\bar\rho_0\}
+\tilde\psi_0(X)-\psi_0.
\]
\end{proof}

\subsubsection{Proof of asymptotic linearity and robustness}

For brevity, we adopt the notation $Pf = E_P\{ f(O) \}$ for a $P$-integrable function $f$. Similarly, we let $P_n$ denote the empirical distribution of $n$ samples from $P$ and thus $P_n f = n^{-1} \sum_{i=1}^n f(O_i)$. We also denote by $|| f ||_P = \left\{\int f(o)^{2} dP(o) \right\}^{1/2}$ the $L_2(P)$-norm of a given integrable function $f$.

We assume the following regularity conditions:\begin{itemize}
    \item $\Phi'_{0,n}$ falls in a $P$-Donsker class with probability tending to 1 and $|| \Phi_{0,n} - \Phi_{0,n} ||_P = o_P(1)$
    \item $ || \mu_{01,n} - \mu_{01} ||_P = o_P(n^{-1/4})$
    \item $ || \rho_{0,n} - \rho_0 ||_P = o_P(n^{-1/4})$
    \item $ || \pi_{0,n} - \pi_0 ||_P = o_P(n^{-1/4})$
    \item $\pi_{0,n}$ and $\bar{\rho}_{0,n}$ are bounded below by constant $\delta > 0$ with probability 1
\end{itemize}

We begin by providing a lemma that establishes the linear expansion for the parameter in our model. We use $P$ to denote the sampling distribution of interest and $P'$ to denote another distribution in our model. We add an apostrophe to nuisance parameters to denote their value under sampling from $P'$. Similarly, we denote by $\Psi_0'$ the EIF evaluated at nuisance parameters under sampling from $P'$. 

\begin{lemma}
For any two distributions $P$ and $P'$ in our model,  \[
\psi_0' - \psi_0 = -P \Phi_0' + R_2(P, P') \ ,
\]
where \begin{align*}
R_2(P, P') &= 
P \left\{ \frac{\rho_0}{\bar{\rho}_0'} \frac{(\pi_0 - \pi_0')}{\pi_0'} (\mu_{01} - \mu_{01}') \right\}   + P\left\{ \frac{(\mu_{01}' - \psi_0')}{\bar{\rho}_0'} \frac{(\pi_0 - \pi_0')}{\pi_0'} (\rho_0 - \rho_0') \right\} \\
&\hspace{2em} + \frac{(\bar{\rho}_0' - \bar{\rho}_0)}{\bar{\rho}_0'} (\psi_0' - \psi_0) \ .
\end{align*}
\label{thm:r2_lemma_psi0}
\end{lemma}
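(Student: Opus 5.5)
The plan is to compute $P\Phi_0'$ directly: condition on $X$ inside each term, then add and subtract so that the first-order pieces cancel against $\psi_0'-\psi_0$. What is left over should be exactly the product-form remainder $R_2(P,P')$. Here $\Phi_0'$ is the gradient of Theorem 6 with every nuisance ($\pi_0',\rho_0',\mu_{01}',\bar\rho_0',\psi_0'$) taken under $P'$, while the outer expectation and the law of $X$ are under $P$. I will use two identities:
\begin{align*}
\bar\rho_0 &= P\rho_0, &
\psi_0\bar\rho_0 &= P(\rho_0\mu_{01}),
\end{align*}
and the analogous identities under $P'$. Note that $P\rho_0'$ is in general \emph{not} $\bar\rho_0'$, so this distinction must be tracked carefully.

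\textbf{Step 1 (condition on $X$).} Take the four terms of $\Phi_0'$ in turn, writing $T_1,\dots,T_4$ for their $P$-expectations.
\begin{align*}
T_1 &= P\Big\{\tfrac{\pi_0\rho_0}{\pi_0'\bar\rho_0'}(\mu_{01}-\mu_{01}')\Big\}, &
T_2 &= P\Big\{\tfrac{(\mu_{01}'-\psi_0')}{\bar\rho_0'}\tfrac{\pi_0}{\pi_0'}(\rho_0-\rho_0')\Big\}, \\
T_3 &= -\tfrac{\psi_0'}{\bar\rho_0'}\big(P\rho_0'-\bar\rho_0'\big), &
T_4 &= \tfrac{P(\rho_0'\mu_{01}')}{\bar\rho_0'}-\psi_0'.
\end{align*}
For $T_1$, conditioning on $(X,Z)$ gives $E\{(1-Z)S(Y-\mu_{01}')\mid X\}=\pi_0\rho_0(\mu_{01}-\mu_{01}')$. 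For $T_2$, conditioning on $(X,Z)$ gives $E\{(1-Z)(S-\rho_0')\mid X\}=\pi_0(\rho_0-\rho_0')$. $T_3$ and $T_4$ are already functions of $X$ only.

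\textbf{Step 2 (split off the doubly robust products).} In $T_1$ and $T_2$, write $\pi_0/\pi_0' = 1+(\pi_0-\pi_0')/\pi_0'$. The parts carrying $(\pi_0-\pi_0')/\pi_0'$ are exactly the first two terms of $R_2(P,P')$. What remains is
\[
\tfrac{1}{\bar\rho_0'}\Big[P\{\rho_0(\mu_{01}-\mu_{01}')\}+P\{(\mu_{01}'-\psi_0')(\rho_0-\rho_0')\}\Big],
\]
to be added to $T_3+T_4$.

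\textbf{Step 3 (collect the algebra).} Expanding, the cross terms $\pm P(\rho_0\mu_{01}')$ cancel, and so do $\pm P(\rho_0'\mu_{01}')$. The $\psi_0'$ terms combine as $-\psi_0'(P\rho_0-P\rho_0')-\psi_0'(P\rho_0'-\bar\rho_0') = -\psi_0'\bar\rho_0+\psi_0'\bar\rho_0'$. The $+\psi_0'\bar\rho_0'/\bar\rho_0'$ piece cancels the trailing $-\psi_0'$ of $T_4$. Using $P(\rho_0\mu_{01})=\psi_0\bar\rho_0$, the non-product part collapses to
\[
\frac{\bar\rho_0}{\bar\rho_0'}(\psi_0-\psi_0').
\]

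\textbf{Step 4 (conclude).} Hence
\[
-P\Phi_0' + \tfrac{\rho_0}{\bar\rho_0'}\text{-product} + \tfrac{\pi_0-\pi_0'}{\pi_0'}\text{-product} = \tfrac{\bar\rho_0}{\bar\rho_0'}(\psi_0'-\psi_0).
\]
Adding the third remainder term $\frac{\bar\rho_0'-\bar\rho_0}{\bar\rho_0'}(\psi_0'-\psi_0)$ to both sides gives $-P\Phi_0'+R_2(P,P') = \psi_0'-\psi_0$, which is the claim.

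The main obstacle is bookkeeping rather than any conceptual step. The delicate point is keeping $P\rho_0'$ distinct from $\bar\rho_0'$ in $T_3$, and checking that the leftover $\bar\rho_0$-versus-$\bar\rho_0'$ mismatch is precisely the third term of $R_2$. Positivity of $\pi_0'$ and $\bar\rho_0'$ (Assumption 5 under $P'$) guarantees that every quantity written above is well defined.
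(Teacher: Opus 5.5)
Your proof is correct and carries out the direct computation the paper has in mind; the paper itself only remarks that the lemma follows from ``straightforward, albeit cumbersome algebra.'' You condition on $X$, split $\pi_0/\pi_0' = 1 + (\pi_0-\pi_0')/\pi_0'$, and use only $P\rho_0=\bar\rho_0$ and $P(\rho_0\mu_{01})=\psi_0\bar\rho_0$ to obtain $-P\Phi_0' + (\text{first two terms of } R_2) = (\bar\rho_0/\bar\rho_0')(\psi_0'-\psi_0)$, and the third term of $R_2$ then closes the identity exactly. Your cancellations never use any relation among the primed quantities (not even $\bar\rho_0' = P'\rho_0'$ or $\psi_0'\bar\rho_0' = P'(\rho_0'\mu_{01}')$). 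The identity therefore holds for arbitrary primed nuisance values with $\pi_0'$ and $\bar\rho_0'$ nonzero, which is convenient when $P_n'$ is only required to be compatible with the estimates.
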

The proof follows from straightforward, albeit cumbersome algebra.

Lemma \ref{thm:r2_lemma_psi0} paves the way for a proof of asymptotic normality and of robustness of the one-step estimator. To this end, we may let $P_n'$ denote any distribution in our model that is compatible with nuisance estimates $\rho_{0,n}, \mu_{01,n}, \pi_{0,n}$, and $\bar{\rho}_{0,n}$ and with the marginal distribution of $X$ implied by $P_n'$ equal to the empirical distribution of $X$. Then letting $\Phi_{0,n}$ denote the EIF with nuisance parameters evaluated at their estimated values, Lemma \ref{thm:r2_lemma_psi0} implies that \[
\psi_{0,n}^+ - \psi_0 = (P_n - P) \Phi_{0,n}' + R_2(P, P_n') \ ,
\]
and thus that \[
\psi_{0,n}^+ - \psi_0 = P_n \Phi_{0} + (P_n - P) (\Phi_{0,n}' - \Phi_0) + R_2(P, P_n') \ ,
\]
noting that $P \Phi_0 = 0$. The second term on the right hand side is an empirical process term and is such that if $\Phi'_{0,n}$ falls in a $P$-Donsker class with probability tending to 1 and that $P\{ \Phi_{0,n}' - \Phi_{0,n} \}^2 = o_P(1)$, then $(P_n - P) (\Phi_{0,n}' - \Phi_0) = o_P(n^{-1/2})$ \citep{van1996weak}. Then it remains to show that $R_2(P, P_n') = o_P(n^{-1/2})$. This is often shown via application of boundedness conditions and the Cauchy-Schwarz inequality. For example, considering the first term in $R_2$ in Lemma \ref{thm:r2_lemma_psi0}: \begin{align*}
P \left\{ \frac{\rho_0}{\pi_{0,n}\bar{\rho}_{0,n}} (\pi_0 - \pi_{0,n}) (\mu_{10} - \mu_{10,n}) \right\} &\le P \left\{ \frac{\rho_0}{\pi_{0,n}\bar{\rho}_{0,n}} | \pi_0 - \pi_{0,n}| \ |\mu_{10} - \mu_{10,n}| \right\} \\
&\le \frac{\mbox{sup}_x \rho_0(x)}{\delta^2} P \left\{ | \pi_0 - \pi_{0,n}| \ |\mu_{10} - \mu_{10,n}| \right\} \\
&\le \frac{\mbox{sup}_x \rho_0(x)}{\delta_1 \delta_2} || \pi_0 - \pi_{0,n}||_P || \mu_{10} - \mu_{10,n} ||_P \\
&= o_P(n^{-1/2}) \ .
\end{align*}
Similar arguments can be applied to each of the terms in the remainder to prove asymptotic linearity.

Lemma \ref{thm:r2_lemma_psi0} also implies the double robustness of our estimates indicating that either consistent estimation of $\pi_0$ or consistent estimation of both $\mu_{01}$ and of $\rho_0$ are sufficient to ensure consistency of the one-step estimator of $\psi_0$. The proof of multiple robustness follows directly from Lemma \ref{thm:r2_lemma_psi0} and Cauchy Schwarz, where for this result we only require $L^2(P)$ norms of estimation error for nuisance parameters to be $o_P(1)$. 

For the remainder of the proofs of asymptotic linearity of one-step estimators, we opt to merely state the remainder term understanding that similar calculus along with Cauchy-Schwarz can be used to bound remainder terms.

\subsection{Proofs for $\psi_{1,\text{ER}}$}

\subsubsection{Proof of Theorem 7}
\begin{proof}
We want the EIF of 
\[
\psi_{1,\mathrm{ER}}
=\frac{\bar\mu_{1\cdot}-\bar\mu_{00}(1-\bar\rho_0)}{\bar\rho_0}
= \frac{A}{B}.
\]

The EIF of the numerator $A$ is: 
\[
\Phi_A
=\Phi_{\bar\mu_{1\cdot}}
-(1-\bar\rho_0)\Phi_{\bar\mu_{00}}
+\bar\mu_{00}\Phi_{\bar\rho_0}, 
\]
where 
\begin{align*}
\Phi_{\bar\mu_{1\cdot}} &= \frac{Z}{\pi_1(X)}\{Y-\mu_{1\cdot}(X)\}
+\mu_{1\cdot}(X)-\bar\mu_{1\cdot}, 
\\
\Phi_{\bar\mu_{00}} &= \frac{(1-S)(1-Z)}{(1-\bar\rho_0)\bar\pi_0}\{Y-\bar\mu_{00}\}, 
\\
\Phi_{\bar\rho_0} &= \frac{(1-Z)}{\pi_0(X)}\{S-\rho_0(X)\}
+\rho_0(X)-\bar\rho_0.  
\end{align*}

Applying the ratio rule yields 
\[
\Phi_{1,\mathrm{ER}}
=\frac{1}{\bar\rho_0}(\Phi_A-\psi_{1,\mathrm{ER}}\Phi_{\bar\rho_0}).
\]

After simplification the expression matches Theorem 9. 

\end{proof}

\subsubsection{Proof of asymptotic linearity and robustness}

\begin{lemma}
For any two distributions $P$ and $P'$ in our model,  \[
\bar{\mu}_{1\cdot}' -  \bar{\mu}_{1\cdot} = -P \Phi_{\bar{\mu}_{1\cdot}}' + R_2(P, P') \ ,
\]
where \[
R_2(P, P') = P\left\{ \frac{(\pi_1 - \pi_1')}{\pi_1} (\mu_{1\cdot} - \mu_{1\cdot}') \right\} \ . 
\]
We also have \[
\bar{\mu}_{00}' - \bar{\mu}_{00} = -P \Phi_{\bar{\mu}_{00}}' + R_2(P, P') \ , 
\] 
where \[
R_2(P, P') = \frac{(1 - \bar{\rho}_0)}{(1 - \bar{\rho}_0')}\frac{(\pi_0 - \pi_0')}{\pi_0'} (\bar{\mu}_{00} - \bar{\mu}_{00}') + \frac{(\bar{\rho}_0' - \bar{\rho}_0)}{1 - \bar{\rho}_0'} (\bar{\mu}_{00} - \bar{\mu}_{00}') \ .
\]
We also have \[
\bar{\rho}_0' - \bar{\rho}_0 = -P \Phi_{\bar{\rho}_0} + R_2(P, P') \ ,
\]
where \[
R_2(P, P') = P \left\{ \frac{\pi_0 - \pi_0'}{\pi_0'} (\rho_0 - \rho_0') \right\} \ .
\]
\label{thm:lemma_r2_er}
\end{lemma}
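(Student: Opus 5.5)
The plan is to prove each of the three expansions separately by direct computation. In each case I evaluate $P\Phi'$, the gradient with nuisance parameters taken under $P'$ but integrated under the true $P$, by iterated expectations conditional on $X$ (or on the relevant marginal stratum). I then add the parameter difference and show that the first-order terms cancel, leaving a product of two differences. Throughout I use only the definitions of $\pi_z$, $\rho_0$, $\mu_{1\cdot}$, $\bar\mu_{00}$ and the gradients $\Phi_{\bar\mu_{1\cdot}}$, $\Phi_{\bar\mu_{00}}$, $\Phi_{\bar\rho_0}$ given in the proof of Theorem 7, together with positivity (Assumption 5) so that every ratio is well defined.

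\textbf{First and third expansions.} For $\bar\mu_{1\cdot}$ I condition on $X$ and use $E\{Z(Y-\mu_{1\cdot}')\mid X\}=\pi_1(X)\{\mu_{1\cdot}(X)-\mu_{1\cdot}'(X)\}$. This gives
\[
P\Phi_{\bar\mu_{1\cdot}}' = P\Big\{\tfrac{\pi_1}{\pi_1'}(\mu_{1\cdot}-\mu_{1\cdot}') + \mu_{1\cdot}'\Big\} - \bar\mu_{1\cdot}' .
\]
Since $\bar\mu_{1\cdot}=P\mu_{1\cdot}$, adding $\bar\mu_{1\cdot}'-\bar\mu_{1\cdot}$ yields
\[
\bar\mu_{1\cdot}'-\bar\mu_{1\cdot}+P\Phi_{\bar\mu_{1\cdot}}' = P\Big\{\big(\tfrac{\pi_1}{\pi_1'}-1\big)(\mu_{1\cdot}-\mu_{1\cdot}')\Big\} = P\Big\{\tfrac{\pi_1-\pi_1'}{\pi_1'}(\mu_{1\cdot}-\mu_{1\cdot}')\Big\}.
\]
This is the claimed remainder, except that the computation naturally produces $\pi_1'$ in the denominator rather than $\pi_1$. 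I would state it in that form, since only the product structure matters for the Cauchy--Schwarz argument used after Lemma \ref{thm:r2_lemma_psi0}. The expansion for $\bar\rho_0$ is identical, replacing $(Z,Y,\mu_{1\cdot})$ by $(1-Z,S,\rho_0)$ and using $\bar\rho_0=P\rho_0$.

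\textbf{Second expansion.} For $\bar\mu_{00}$, the gradient used in the proof of Theorem 7 depends only on marginal quantities. I will use the two identities
\[
E\{(1-S)(1-Z)Y\}=\bar\pi_0(1-\bar\rho_0)\bar\mu_{00}, \qquad E\{(1-S)(1-Z)\}=\bar\pi_0(1-\bar\rho_0).
\]
They give
\[
P\Phi_{\bar\mu_{00}}' = \frac{\bar\pi_0(1-\bar\rho_0)}{\bar\pi_0'(1-\bar\rho_0')}(\bar\mu_{00}-\bar\mu_{00}').
\]
Adding $\bar\mu_{00}'-\bar\mu_{00}$ therefore leaves $(r-1)(\bar\mu_{00}-\bar\mu_{00}')$, where $r$ is the displayed ratio. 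The remaining step is the algebraic split
\[
r-1=\frac{1-\bar\rho_0}{1-\bar\rho_0'}\,\frac{\bar\pi_0-\bar\pi_0'}{\bar\pi_0'}+\Big(\frac{1-\bar\rho_0}{1-\bar\rho_0'}-1\Big),
\]
whose last bracket equals $(\bar\rho_0'-\bar\rho_0)/(1-\bar\rho_0')$. This reproduces the two stated terms, with the $\pi_0$'s read as the marginal probabilities $\bar\pi_0$. This is consistent with the marginal form of $\Phi_{\bar\mu_{00}}$.

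\textbf{Main obstacle.} None of the steps is conceptually hard. The main obstacle is bookkeeping: keeping track of which nuisances are evaluated under $P'$ and which under $P$, and reconciling the notation in the statement, where $\pi_1$ versus $\pi_1'$ appears in a denominator and $\pi_0$ versus $\bar\pi_0$ appears in the $\bar\mu_{00}$ remainder, with what the calculation actually produces. I would resolve each discrepancy explicitly in favour of the computed expressions. I would then note that each remainder is a product of two estimation errors, so the argument after Lemma \ref{thm:r2_lemma_psi0} applies verbatim. Combining the three expansions through the ratio rule will then give the remainder for $\psi_{1,\text{ER}}$.
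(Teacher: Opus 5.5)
Your argument is correct and is the same direct iterated-expectation algebra the paper relies on; the paper states this lemma without proof and defers to the ``straightforward algebra'' it cites for Lemma \ref{thm:r2_lemma_psi0}. Your resolutions of the notational discrepancies are also right: the exact remainder for $\bar\mu_{1\cdot}$ has $\pi_1'$ in the denominator (the printed version with $\pi_1$ differs from it by $P\{(\pi_1-\pi_1')^2(\mu_{1\cdot}-\mu_{1\cdot}')/(\pi_1\pi_1')\}$, which is higher order, and $\pi_0'$ is also what appears in Lemma \ref{thm:r2_lemma_psi0}), the $\pi_0$'s in the $\bar\mu_{00}$ remainder must be read as the marginal $\bar\pi_0$, and the third expansion should use the primed gradient $\Phi_{\bar\rho_0}'$.
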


Lemma \ref{thm:lemma_r2_er} implies that, along with appropriate Donsker conditions, the following rate conditions are sufficient to ensure that $\psi_{1,\text{ER},n}^+$ is asymptotically linear:\begin{itemize}
    \item $|| \mu_{1\cdot,n} - \mu_{1\cdot} || = o_P(n^{-1/4})$
    \item $|| \pi_{1,n} - \pi_{1} || = o_P(n^{-1/4})$
    \item $|| \pi_{0,n} - \pi_{0} || = o_P(n^{-1/4})$
    \item $|| \rho_{0,n} - \rho_{0} || = o_P(n^{-1/4})$
\end{itemize}

Similarly, Lemma \ref{thm:lemma_r2_er} implies that the combinations of nuisance estimates shown in Table \ref{tab:consistent_est_of_psi1er} are sufficient to ensure consistent estimation of $\psi_{1,\text{ER}}$. In the context of a randomized trial, where $\pi_1$ and $\pi_0$ are known, consistent estimation is always possible irrespective of inconsistent estimation of $\mu_{1,\cdot}$ and/or $\rho_0$.

\begin{table}[ht]
\centering
\begin{tabular}{cccc}
\toprule
$\pi_1$ & $\pi_0$ & $\rho_0$ & $\mu_{1\cdot}$ \\
\midrule
\checkmark & \checkmark &  &  \\
\checkmark &  & \checkmark &  \\
 & \checkmark &  & \checkmark \\
 &  & \checkmark & \checkmark \\
\bottomrule
\end{tabular}
\caption{Minimal combinations of nuisance parameters sufficient for consistency of the one-step estimator of $\psi_{1,\text{ER}}$}
\label{tab:consistent_est_of_psi1er}
\end{table}

\subsection{Proofs for $\psi_{1,\text{PI}}$}
\subsubsection{Proof of Theorem 8}

\begin{proof}
We want the EIF of 
\[
\psi_{1,\mathrm{PI}}
=\frac{E\{\rho_1(X)\mu_{11}(X)+(\rho_0(X)-\rho_1(X))\mu_{10}(X)\}}{\bar\rho_0}
=\frac{A}{B}.
\]

We compute influence functions for $A$ and $B$, then combine them using
\[
\Phi_{1,\mathrm{PI}}=\frac{1}{B}(\Phi_A-\psi_{1,\mathrm{PI}}\Phi_B).
\]

Since $\rho_0(X)=E(S\mid Z=0,X)$, the EIF of $B$ is: 
\[
\Phi_B(O)=\frac{1-Z}{\pi_0(X)}\{S-\rho_0(X)\}+\rho_0(X)-\bar\rho_0.
\]

We write $A=E\{h(X)\}$ where 
\[
h(X)=\rho_1(X)\mu_{11}(X)+(\rho_0(X)-\rho_1(X))\mu_{10}(X).
\]

The contributions of $\mu_{11}(X)$, $\mu_{10}(X)$, and $\rho_1(X)$ are 
\[
\frac{ZS}{\pi_1(X) \rho_1(X)}\{Y-\mu_{11}(X)\}, 
\quad 
\frac{Z(1-S)}{\pi_1(X) (1-\rho_1(X))}\{Y-\mu_{10}(X)\}, 
\quad 
\frac{Z}{\pi_1(X)}\{S - \rho_1(X)\}. 
\]

Therefore the EIF for $A$ is: 
\begin{align*}
\Phi_A(O) &=
\frac{ZS}{\pi_1(X)}\{Y-\mu_{11}(X)\}
+\frac{Z(1-S)}{\pi_1(X)}\frac{\rho_0(X) - \rho_1(X)}{1 - \rho_1(X)}\{Y-\mu_{10}(X)\}\\
&\hspace{2em} +\frac{Z}{\pi_1(X)}\{\mu_{11}(X)-\mu_{10}(X)\}\{S-\rho_1(X)\}
+h(X)-A.
\end{align*}

By the ratio rule, we have: 
\[
\Phi_{1,\mathrm{PI}}(O)
=\frac{1}{\bar\rho_0}\{\Phi_A(O)-\psi_{1,\mathrm{PI}}\Phi_B(O)\}.
\]

Substituting and simplifying,
\[
\begin{aligned}
\Phi_{1,\mathrm{PI}}(O)=&
\frac{ZS}{\pi_1(X)\bar\rho_0}\{Y-\mu_{11}(X)\}
+\frac{Z(1-S)}{\pi_1(X)\bar\rho_0}\frac{\rho_0(X) - \rho_1(X)}{1 - \rho_1(X)}\{Y-\mu_{10}(X)\}\\
&+\frac{Z}{\pi_1(X)\bar\rho_0}\{\mu_{11}(X)-\mu_{10}(X)\}\{S-\rho_1(X)\}+\frac{\mu_{10}(X) - \psi_{1,\mathrm{PI}}}{\bar\rho_0}\frac{1-Z}{\pi_0(X)}\{S-\rho_0(X)\}\\
&-\frac{\psi_{1,\mathrm{PI}}}{\bar\rho_0}\{\rho_0(X)-\bar\rho_0\}
+\tilde\psi_{1,\mathrm{PI}}(X)-\psi_{1,\mathrm{PI}}.
\end{aligned}
\]

This equals the EIF stated in Theorem 8.
\end{proof}

\subsubsection{Proof of asymptotic linearity and robustness}

\begin{lemma}
For any two distributions $P$ and $P'$ in our model,  \[
\psi_{1,\text{PI}}' - \psi_{1,\text{PI}} = -P \Phi_{1,\text{PI}}' + R_2(P, P') \ ,
\]
where \begin{align*}
R_2(P, P') &= 
P\left\{ \frac{\rho_1}{\bar{\rho}_0'} \frac{(\pi_1 - \pi_1')}{\pi_1'} (\mu_{11} - \mu_{11}') \right\} +
P \left\{ \frac{(\rho_1 - \rho_1')}{\bar{\rho}_0'}(\mu_{11} - \mu_{11}') \right\} \\
&\hspace{2em} + P\left\{ \frac{(1 - \rho_1)}{(1 - \rho_1')} (\rho_0' - \rho_1')\frac{(\pi_1 - \pi_1')}{\pi_1'} ( \mu_{10} - \mu_{10}' ) \right\} + 
P\left\{ \frac{\rho_0' - \rho_1'}{\bar{\rho}_0'} \frac{(\rho_1 - \rho_1')}{(1 - \rho_1')} (\mu_{10} - \mu_{10}') \right\} \\
&\hspace{2em} + P\left\{ \frac{\mu_{11}' - \mu_{10}'}{\bar{\rho}_0'} \frac{(\pi_1 - \pi_1')}{\pi_1'} (\rho_1 - \rho_1') \right\} + 
P\left\{ \frac{(\mu_{10}' - \psi_{1, \text{PI}}')}{\bar{\rho}_0'} \frac{(\pi_0 - \pi_0')}{\bar{\pi}_0'} (\rho_0 - \rho_0') \right\} \\
&\hspace{2em} +
P \left\{ \frac{(\rho_0 - \rho_0')}{\bar{\rho}_0'} (\mu_{10} - \mu_{10}') \right\} + P\left\{ \frac{(\rho_1 - \rho_1')}{\bar{\rho}_0'} (\mu_{10} - \mu_{10}') \right\} \\
&\hspace{2em} + P\left\{\frac{(\mu_{11}' - \mu_{11})}{\bar{\rho}_0'} (\rho_1 - \rho_1') \right\} + \frac{(\bar{\rho}_0' - \bar{\rho}_0)}{\bar{\rho}_0'} (\psi_{1,\text{PI}}' - \psi_{1,\text{PI}}) 
\end{align*}
\label{thm:r2_lemma_psi1_pi}
\end{lemma}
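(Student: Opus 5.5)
The argument is a direct algebraic expansion of the exact von Mises remainder. Write $\psi_{1,\text{PI}} = A/B$ with $A = P h$, where $h = \rho_1\mu_{11} + (\rho_0-\rho_1)\mu_{10}$ and $B = \bar\rho_0 = P\rho_0$. The primed analogues are $A'$, $B'$ and $h'$.

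The first step is to use the ratio structure. From the proof of Theorem 8,
\[
\Phi_{1,\text{PI}}' = (\Phi_A' - \psi_{1,\text{PI}}'\Phi_B')/\bar\rho_0' .
\]
Hence
\[
\psi_{1,\text{PI}}' - \psi_{1,\text{PI}} + P\Phi_{1,\text{PI}}' = \frac{1}{\bar\rho_0'}\Big[ (A' - A + P\Phi_A') - \psi_{1,\text{PI}}'(B' - B + P\Phi_B') \Big] + \frac{(\bar\rho_0' - \bar\rho_0)}{\bar\rho_0'}(\psi_{1,\text{PI}}' - \psi_{1,\text{PI}}).
\]
This identity follows from $A'-\psi'B' = 0$ and $A - \psi B = 0$: adding and subtracting $\psi' B$ produces the final term of the stated remainder. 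It then suffices to compute the remainders of $A$ and $B$ separately.

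For $B$, iterated expectations over $Z$ given $X$ give
\[
B' - B + P\Phi_B' = P\{(\pi_0-\pi_0')(\rho_0-\rho_0')/\pi_0'\}.
\]

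For $A$, take expectations of each term of $\Phi_A'$ conditionally on $X$, using $E\{ZS(Y-\mu_{11}')\mid X\} = \pi_1\rho_1(\mu_{11}-\mu_{11}')$ and the analogous identities for the other terms. This yields
\[
P\Big\{\tfrac{\pi_1}{\pi_1'}\Big[\rho_1(\mu_{11}-\mu_{11}') + \tfrac{(1-\rho_1)(\rho_0'-\rho_1')}{1-\rho_1'}(\mu_{10}-\mu_{10}') + (\mu_{11}'-\mu_{10}')(\rho_1-\rho_1')\Big]\Big\} + P(h'-h).
\]
In each bracketed term, write $\pi_1/\pi_1' = 1 + (\pi_1-\pi_1')/\pi_1'$. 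The pieces carrying $(\pi_1-\pi_1')/\pi_1'$ are exactly the three $\pi_1$-product terms of $R_2$.

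The main obstacle is the remaining piece, which must be shown to be a sum of second-order products:
\[
\rho_1(\mu_{11}-\mu_{11}') + \tfrac{(1-\rho_1)(\rho_0'-\rho_1')}{1-\rho_1'}(\mu_{10}-\mu_{10}') + (\mu_{11}'-\mu_{10}')(\rho_1-\rho_1') + h' - h .
\]
To handle it, expand $h'-h$ by adding and subtracting cross terms such as $\rho_1'\mu_{11}$ and $\rho_0'\mu_{10}$. Then use
\[
\frac{1-\rho_1}{1-\rho_1'} = 1 - \frac{\rho_1-\rho_1'}{1-\rho_1'}.
\]
After this, every first-order term in $(\mu-\mu')$ or $(\rho-\rho')$ should cancel. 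What is left are the products $(\rho_1-\rho_1')(\mu_{11}-\mu_{11}')$, $(\rho_0'-\rho_1')(\rho_1-\rho_1')(\mu_{10}-\mu_{10}')/(1-\rho_1')$ and $(\rho_0-\rho_0')(\mu_{10}-\mu_{10}')$, together with the $\rho_1\mu_{10}$ cross terms, which appear as the $(\rho_1-\rho_1')(\mu_{10}-\mu_{10}')$ term.

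Next, combine with $-\psi'_{1,\text{PI}}$ times the $B$ remainder, divide by $\bar\rho_0'$, and collect the $(\pi_0-\pi_0')(\rho_0-\rho_0')$ terms. The $\mu_{10}'$ coefficient of that term comes from the $(1-Z)$ correction inherited through $\rho_0$ in $h$. I would verify the bookkeeping by checking that the result reduces to zero when $P'=P$ and is linear in each single-nuisance perturbation with zero derivative. Any discrepancies in signs or denominators, such as $\bar\pi_0'$ versus $\pi_0'$, would be resolved in favor of the direct computation.
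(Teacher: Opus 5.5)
Your route (ratio decomposition, separate remainders for $A=Ph$ and $B=\bar{\rho}_0$, conditioning on $X$, and splitting $\pi_z/\pi_z'=1+(\pi_z-\pi_z')/\pi_z'$) is the natural one. The paper itself gives no derivation of this display.

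\textbf{Gap in the numerator step.} Your expression for $A'-A+P\Phi_A'$ uses the $\Phi_A$ written in the proof of Theorem 8. That $\Phi_A$ drops the contribution of $\rho_0$ in $h$, namely $\tfrac{1-Z}{\pi_0}\mu_{10}(X)\{S-\rho_0(X)\}$. The gradient actually stated in Theorem 8 does contain it, through the coefficient $\mu_{10}-\psi_1$. With your expression, the unweighted piece you call the main obstacle reduces exactly to
\[
\frac{1-\rho_0'}{1-\rho_1'}(\rho_1-\rho_1')(\mu_{10}-\mu_{10}')-(\rho_0-\rho_0')\mu_{10} \ .
\]
So a first-order term survives, and your claim that every first-order term cancels is false.

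The fix is to include $P\{\tfrac{\pi_0}{\pi_0'}\mu_{10}'(\rho_0-\rho_0')\}$ in the numerator remainder and split it like the $\pi_1$ terms. Its unit part converts $-(\rho_0-\rho_0')\mu_{10}$ into $-(\rho_0-\rho_0')(\mu_{10}-\mu_{10}')$. Only its $(\pi_0-\pi_0')/\pi_0'$ part combines with $-\psi_{1,\text{PI}}'$ times the $B$ remainder, and that is the only place you invoke this correction. Separately, every $\mu_{11}$ term cancels identically, so no $(\rho_1-\rho_1')(\mu_{11}-\mu_{11}')$ product remains. This is harmless, because the two such terms in the stated $R_2$ cancel each other.

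\textbf{The stated display is false.} Even with correct bookkeeping you cannot reach it. Carrying your computation through gives
\begin{align*}
R_2(P,P') &= \frac{1}{\bar{\rho}_0'}P\Big\{\frac{\pi_1-\pi_1'}{\pi_1'}\Big[\rho_1(\mu_{11}-\mu_{11}')+\frac{(1-\rho_1)(\rho_0'-\rho_1')}{1-\rho_1'}(\mu_{10}-\mu_{10}')+(\mu_{11}'-\mu_{10}')(\rho_1-\rho_1')\Big] \\
&\hspace{3em} +\frac{\pi_0-\pi_0'}{\pi_0'}(\mu_{10}'-\psi_{1,\text{PI}}')(\rho_0-\rho_0')+\frac{1-\rho_0'}{1-\rho_1'}(\rho_1-\rho_1')(\mu_{10}-\mu_{10}')-(\rho_0-\rho_0')(\mu_{10}-\mu_{10}')\Big\} \\
&\hspace{3em} +\frac{\bar{\rho}_0'-\bar{\rho}_0}{\bar{\rho}_0'}(\psi_{1,\text{PI}}'-\psi_{1,\text{PI}}) \ .
\end{align*}
Relative to the statement:
\begin{itemize}
\item the $(\rho_0-\rho_0')(\mu_{10}-\mu_{10}')$ term has the wrong sign;
\item the $(\rho_0'-\rho_1')(\rho_1-\rho_1')(\mu_{10}-\mu_{10}')/(1-\rho_1')$ term has the wrong sign;
\item the $(\pi_1-\pi_1')(\mu_{10}-\mu_{10}')$ term lacks the factor $1/\bar{\rho}_0'$;
\item $\bar{\pi}_0'$ should be $\pi_0'$.
\end{itemize}

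For a concrete check, take $X$ degenerate, $\pi_z=\pi_z'=1/2$, $\rho_1=\rho_1'=0.2$, $\mu_{11}=\mu_{11}'=0$, $(\rho_0,\mu_{10})=(0.6,1)$ and $(\rho_0',\mu_{10}')=(0.5,0)$. Then $\psi_{1,\text{PI}}=2/3$, $\psi_{1,\text{PI}}'=0$ and $P\Phi_{1,\text{PI}}'=0.6$. So $\psi_{1,\text{PI}}'-\psi_{1,\text{PI}}+P\Phi_{1,\text{PI}}'=-1/15$, while the stated $R_2$ evaluates to $0.2+2/15=1/3$.

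Your proposed sanity check (vanishing at $P'=P$, no first-order terms) cannot detect errors like these. A sign flip on a product of two different nuisance errors passes it. Your plan to resolve discrepancies in favor of the direct computation would therefore prove a corrected lemma, not the stated one.

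The correction does not change how the lemma is used. Every term is still a product of two nuisance errors, or of $\bar{\rho}_0'-\bar{\rho}_0$ with $\psi_{1,\text{PI}}'-\psi_{1,\text{PI}}$, so the Cauchy--Schwarz rate arguments go through unchanged.
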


Lemma \ref{thm:r2_lemma_psi1_pi} implies that, along with appropriate Donkser conditions, the following conditions are sufficient to ensure that $\psi_{1,\text{PI},n}^+$ is asymptotically linear:\begin{itemize}
    \item $|| \mu_{11,n} - \mu_{11} || = o_P(n^{-1/4})$
    \item $|| \mu_{10,n} - \mu_{10} || = o_P(n^{-1/4})$
    \item $|| \pi_{1,n} - \pi_{1} || = o_P(n^{-1/4})$
    \item $|| \pi_{0,n} - \pi_{0} || = o_P(n^{-1/4})$
    \item $|| \rho_{0,n} - \rho_{0} || = o_P(n^{-1/4})$
    \item $|| \rho_{1,n} - \rho_{1} || = o_P(n^{-1/4})$
\end{itemize}

Similarly, Lemma \ref{thm:lemma_r2_er} implies that the combinations of nuisance estimates shown in Table \ref{tab:consistent_est_psi1pi} are sufficient to ensure consistent estimation of $\psi_{1,\text{PI}}$. In the context of a randomized trial, where $\pi_1$ and $\pi_0$ are known, consistent estimation is always possible under the minimal combinations shown in Table \ref{tab:consistent_est_psi1pi_rct}.

\begin{table}[ht]
\centering
\begin{tabular}{cccccc}
\toprule
$\pi_1$ & $\pi_0$ & $\rho_1$ & $\rho_0$ & $\mu_{11}$ & $\mu_{10}$ \\
\midrule
\checkmark &  & \checkmark & \checkmark &  &  \\
\checkmark & \checkmark & \checkmark &  &  & \checkmark \\
\checkmark &  &  & \checkmark & \checkmark & \checkmark \\
\checkmark & \checkmark &  &  & \checkmark & \checkmark \\
 &  & \checkmark & \checkmark & \checkmark & \checkmark \\
 & \checkmark & \checkmark &  & \checkmark & \checkmark \\
\bottomrule
\end{tabular}
\caption{Minimal combinations of consistently estimated nuisance parameters that result in consistent estimation of $\psi_{1,\text{PI}}$.}
\label{tab:consistent_est_psi1pi}
\end{table}

\begin{table}[ht]
\centering
\begin{tabular}{cccc}
\toprule
$\rho_1$ & $\rho_0$ & $\mu_{11}$ & $\mu_{10}$ \\
\midrule
\checkmark & \checkmark &  &  \\
\checkmark &  &  & \checkmark \\
 &  & \checkmark & \checkmark \\
\bottomrule
\end{tabular}
\caption{Minimal combinations of consistently estimated nuisance parameters that result in consistent estimation of $\psi_{1,\text{PI}}$ in the context of a randomized trial (where $\pi_1$ and $\pi_0$ are guaranteed consistent).}
\label{tab:consistent_est_psi1pi_rct}
\end{table}

\subsection{Proofs for exposure-conditional effects}

\subsubsection{Proof of Theorem 9}

\begin{proof}
    We have that \begin{align*}
        E\{ Y(0) \mid E = 1 \} &= E\{ Y(0) \mid Z = 0, E = 1 \} \\
        &= E( Y \mid Z = 0, E = 1 ) \\
        &= E( Y \mid Z = 0, E = 1, S = 0 ) P( S = 0 \mid Z = 0, E = 1 ) + \\
        &\hspace{2em} + E( Y \mid Z = 0, E = 1, S = 1 ) P( S = 1 \mid Z = 0, E = 1 ) \\
        &= E( Y \mid Z = 0, E = 1, S = 0 ) \ , 
    \end{align*}
    where the first line follows from randomization and Assumption 11, the second from the tower rule and the third from exposure sufficiency (Assumption 10)
    We also have that \begin{align*}
        E( Y \mid Z = 0, S = 0 ) &= E( Y \mid Z = 0, S = 0, E = 1 ) P(E = 1 \mid S = 0, Z = 0) \\
        &\hspace{2em} + E( Y \mid Z = 0, S = 0, E = 0 ) P(E = 0 \mid S = 0, Z = 0) \ ,
    \end{align*}
    We then write that \begin{align*}
    P(E = 1 \mid S = 0, Z = 0) &= \frac{P(S = 0 \mid E = 1, Z = 0) P(E = 1 \mid Z = 0)}{P(S = 0 \mid Z = 0)} \\
    &= 0 \ ,
    \end{align*}
    which follows from exposure sufficiency, and \begin{align*}
        P(E = 0 \mid S = 0, Z = 0) &= \frac{P(S = 0 \mid E = 0, Z = 0) P(E = 0 \mid Z = 0)}{P(S = 0 \mid Z = 0)} = 1 \ ,
    \end{align*}
    which follows from exposure necessity. Thus, we have shown that $E( Y \mid Z = 0, S = 0 ) = E( Y \mid Z = 0, E = 1, S = 0 )$ and therefore that $E\{ Y(0) \mid E = 1 \} = P(E = 0 \mid Z = 0, S = 0)$.
\end{proof}

\subsubsection{Proof of Theorem 10}

\begin{proof}
    As established in the Proof of Theorem 4, we have that $E\{ Y(1) \} = E(Y \mid Z = 1)$ and $E\{ Y(0) \} = E(Y \mid Z = 0)$. Furthermore, we have that \begin{align*}
        E\{Y(1) - Y(0) \} &= E\{Y(1) - Y(0) \mid E = 1\} P(E = 1) + E\{Y(1) - Y(0) \mid E = 0\} P(E = 0) \\
        &= E\{Y(1) - Y(0) \mid E = 1\} P(E = 1) \ , 
    \end{align*}
    which follows from the exposure-conditional exclusion restriction: \begin{align*}
        E\{Y(1) - Y(0) \mid E = 0\} &= E\{Y(1) \mid E = 0\} - E\{ Y(0) \mid E = 0 \} \\
        &= E\{ Y(1) \mid E = 0, Z = 1 \} - E\{ Y(0) \mid E = 0, Z = 0 \} \\
        &= E\{ Y \mid E = 0, Z = 1 \} - E\{ Y \mid E = 0, Z = 0 \} = 0 \ . 
    \end{align*}
    We also have that \begin{align*}
        E\{ Y(0) \mid E = 1 \} &= E\{ Y(0) \mid E = 1, Z = 0 \} \\
        &= E(Y \mid E = 1, Z = 0) \\
        &= E(Y \mid E = 1, Z = 0, S = 1) P(S = 1 \mid E = 1, Z = 0) \\
        &\hspace{2em} + E(Y \mid E = 1, Z = 0, S = 0) P(S = 0 \mid E = 1, Z = 0) \\
        &= E(Y \mid S = 1, Z = 0) \ ,
    \end{align*}
    which follows from exposure sufficiency and necessity under placebo. Next, we write \begin{align*}
        P(E = 1) &= P(E = 1 \mid Z = 0) \\
        &= P(E = 1 \mid Z = 0, S = 1) P(S = 1 \mid Z = 1) + P(E = 1 \mid Z = 0, S = 0) P(S = 0 \mid Z = 0) \\
        &= P(S = 1 \mid Z = 1) \ , 
    \end{align*}
    which is true since $P(E = 1 \mid Z = 0, S = 1) = 1$ and $P(E = 0 \mid Z = 0, S = 1) = 0$. These facts can be shown as follows: \begin{align*}
        P(E = 1 \mid Z = 0, S = 1) &= \frac{P(S = 1 \mid E = 1, Z = 0)P(E = 1 \mid Z = 0)}{P(S = 1 \mid Z = 0)} \\
        &= \frac{P(S = 1 \mid E = 1, Z = 0)P(E = 1 \mid Z = 0)}{P(S = 1 \mid Z = 0, E = 1)P(E = 1 \mid Z = 0) + P(S = 1 \mid Z = 0, E = 0) P(E = 0 \mid Z = 0)} \\
        &= \frac{P(S = 1 \mid E = 1, Z = 0)P(E = 1 \mid Z = 0)}{1 \times P(E = 1 \mid Z = 0) + 0 \times P(E = 0 \mid Z = 0)} \\
        &= \frac{P(S = 1 \mid E = 1, Z = 0)P(E = 1 \mid Z = 0)}{1 \times P(E = 1 \mid Z = 0) + 0 \times P(E = 0 \mid Z = 0)} \\
        &= P(S = 1 \mid E = 1, Z = 0) = 1 \\
        P(E = 0 \mid Z = 0, S = 1) &= \frac{P(S = 1 \mid E = 0, Z = 0) P(E = 0 \mid Z = 0)}{P(S = 1 \mid Z = 0)} \\
        &= \frac{0 \times P(E = 0 \mid Z = 0)}{P(S = 1 \mid Z = 0)} = 0
    \end{align*}
    Thus, we have that $E\{ Y(0) \mid E = 1 \} = E(Y \mid S = 1, Z = 0)$ and that 
    \begin{align*}
        E\{ Y(1) \mid E = 1 \} &= \frac{E(Y \mid Z = 1) - E(Y \mid Z = 0)}{P(S = 1 \mid Z = 1)} + E(Y \mid S = 1, Z = 0) \\
        &= \frac{\bar{\mu}_1 - \bar{\mu}_{00} ( 1 - \bar{\rho}_0 )}{\bar{\rho}_0} = \psi_{1,\text{ER}} \ .
    \end{align*}
\end{proof}

\subsubsection{Proof of Theorem 11}

\begin{proof}
For simplicity and without loss of generality, assume $X$ is discrete. We have \begin{align*}
E\{ Y(1) \mid E = 1 \} = \sum_x E\{ Y(1) \mid E = 1, X = x \} P(X = x \mid E = 1) \ . 
\end{align*}

Note that \begin{align}
P(X = x \mid E = 1) &= \frac{P(E = 1 \mid X = x) P(X = x)}{P(E = 1)} \notag \\
&= \frac{P(E = 1 \mid Z = 0, X = x) P(X = x)}{P(E = 1 \mid Z = 0)}  \notag \\
&= \frac{P(S = 1 \mid Z = 0, X = x) P(X = x)}{P(S = 1 \mid Z = 0)} \ . \label{eq:px_mid_e1_id}
\end{align}
The second line follows from randomization. The equality in the numerator in the third line can be shown as follows: \begin{align*}
    P(E = 1 \mid Z = 0, X = x) &= P(E = 1, S = 0 \mid Z = 0, X = x) + P(E = 1, S = 1 \mid Z = 0, X = x) \\
    &= P(E = 1 \mid S = 0, Z = 0, X = x) P(S = 0 \mid Z = 0, X = x) \\
    &\hspace{2em} + P(E = 1 \mid S = 1, Z = 0, X = x) P(S = 1 \mid Z = 0, X = x) \\ 
    &= P(S = 1 \mid Z = 0, X = x)  \ , 
\end{align*}
where the last line follows since our assumptions imply that $P(E = 1 \mid S = 0, Z = 0, X = x) = 0$ and $P(E = 1 \mid S = 1, Z = 0, X = x) = 1$. The former can be shown as follows: \begin{align*}
    P(E = 1 \mid S = 0, Z = 0, X = x) &= \frac{P(S = 0 \mid E = 1, Z = 0, X = x) P(E = 1 \mid Z = 0, X = x)}{P(S = 0 \mid Z = 0, X = x)} \\
    &= \frac{0 \times P(E = 1 \mid Z = 0, X = x)}{P(S = 0 \mid Z = 0, X = x)} \\ 
    &= 0
\end{align*}
The latter can be shown as follows: \begin{align*}
    &P(E = 1 \mid S = 1, Z = 0, X = x) \\
    &\hspace{2em} = \frac{P(S = 1 \mid E = 1, Z = 0, X = x) P(E = 1 \mid Z = 0, X = x)}{P(S = 1 \mid Z = 0, X = x)} \\
    &\hspace{2em} = \frac{P(E = 1 \mid Z = 0, X = x)}{\left\{ P(S = 1 \mid E = 1, Z = 0, X = x) P(E = 1 \mid Z = 0, X = x) \right. } \\
    &\hspace{6em} \left. + P(S = 1 \mid E = 0, Z = 0, X = x) P(E = 0 \mid Z = 0, X = x) \right\} \\ 
    &\hspace{2em}= \frac{P(E = 1 \mid Z = 0, X = x)}{P(S = 1 \mid E = 1, Z = 0, X = x) P(E = 1 \mid Z = 0, X = x)} \\
    &\hspace{2em}= \frac{P(E = 1 \mid Z = 0, X = x)}{P(E = 1 \mid Z = 0, X = x)} \\
    &\hspace{2em}= 1
\end{align*}

The equality in the denominator of (\ref{eq:px_mid_e1_id}) follows from the fact that $P(E = 1 \mid Z = 0, X = x) = P(S = 1 \mid Z = 0, X = x)$ for all $x$ and thus it must be true that $P(E = 1 \mid Z = 0) = P(S = 1 \mid Z = 0)$.

Now, we consider identification of $E[Y(1) \mid E = 1, X = x]$. We note that \begin{align*}
    E\{ Y(1) \mid E = 1, X = x \}
    &= E\{ Y(1) \mid Z = 1, E = 1, X = x \}  \\
    &= E( Y \mid Z = 1, E = 1, X = x ) \\
    &= \sum_{s = 0}^1 E( Y \mid Z = 1, E = 1, X = x, S = s ) P(S = s \mid Z = 1, E = 1, X = x) \\
    &= \sum_{s = 0}^1 E( Y \mid Z = 1, X = x, S = s ) P(S = s \mid Z = 1, E = 1, X = x) \ .
\end{align*}
Here, the equalities follow from randomization of vaccine, consistency, the law of total expectation, and that $Y \perp E \mid Z, X, S$. Now, we consider identification of $P(S = s \mid Z = 1, E = 1, X = x)$ for $s = 1$. Note that 
\begin{align*}
    P(S = 1 \mid Z = 1, X = x) &= P(S = 1 \mid Z = 1, E = 1, X = x) P(E = 1 \mid Z = 1, X = x) \\
    &\hspace{2em} + P(S = 1 \mid Z = 1, E = 0, X = x) P(E = 0 \mid Z = 1, X = x) \\
    &= P(S = 1 \mid Z = 1, E = 1, X = x) P(E = 1 \mid Z = 1, X = x) \ ,
\end{align*}
which follows from the assumption that $P(S = 1 \mid Z = 1, E = 0, X = x) = 0$. Similarly,\begin{align*}
    P(S = 1 \mid Z = 0, X = x) &= P(S = 1 \mid Z = 0, E = 1, X = x) P(E = 1 \mid Z = 0, X = x) \\
    &\hspace{2em} + P(S = 1 \mid Z = 0, E = 0, X = x) P(E = 0 \mid Z = 0, X = x) \\
    &= P(S = 1 \mid Z = 0, E = 1, X = x) P(E = 1 \mid Z = 0, X = x) \\
    &= P(E = 1 \mid Z = 0, X = x) \\
    &= P(E = 1 \mid Z = 1, X = x)
\end{align*}
Here the equalities follow from law of total probability, the assumption that exposure is necessary for infection, the assumption that exposure is sufficient for infection, and the assumption that $E \perp V \mid X$. Thus, we have shown that \begin{align*}
P(S = 1 \mid Z = 1, E = 1, X = x) = \frac{P(S = 1 \mid Z = 1, X = x)}{P(S = 1 \mid Z = 0, X = x)} \ .
\end{align*}
Then, trivially it must also be true that \begin{align*}
P(S = 0 \mid Z = 1, E = 1, X = x) &= 1 - P(S = 1 \mid Z = 1, E = 1, X = x) \\
&= 1 - \frac{P(S = 1 \mid Z = 1, X = x)}{P(S = 1 \mid Z = 0, X = x)} \ .
\end{align*}
Thus, we have shown that $E\{ Y(1) \mid E = 1 \} = \psi_{1, \text{PI}}$. 

\end{proof}

\subsection{Proof of Theorem S1}

\begin{proof}
We have that \begin{align*}
    E\{ Y(1) \mid S(1) = 0, X = x\} &= E\{ Y(1) \mid S(1) = 0, S(0) = 0, X = x\} P\{S(0) = 0 \mid S(1) = 0 \}  \\
    &\hspace{2em} + E\{ Y(1) \mid S(1) = 0, S(0) = 1, X = x\} P\{ S(0) = 1 \mid S(1) = 0 \} \\
    &= E\{ Y(1) \mid S(1) = 0, S(0) = 0, X = x\} \left\{ \frac{1 - \rho_0(x)}{1 - \rho_1(x)} \right\} \\
    &\hspace{2em} + E\{ Y(1) \mid S(1) = 0, S(0) = 1, X = x\} \left\{ \frac{\rho_0(x) - \rho_1(x)}{1 - \rho_1(x)} \right\} \\
    &= \epsilon \ E\{Y(1) \mid S(1) = 0, S(0) = 1, X = x \} \left\{ \frac{1 - \rho_0(x)}{1 - \rho_1(x)} \right\} \\
    &\hspace{2em} + E\{ Y(1) \mid S(1) = 0, S(0) = 1, X = x\} \left\{ \frac{\rho_0(x) - \rho_1(x)}{1 - \rho_1(x)} \right\} \ .
\end{align*}
where the first equality is the tower rule, the second results from Theorem 2, the third from Assumption 9. Moreover, we also have that \begin{align*}
    E\{ Y(1) \mid S(1) = 0, X = x \} &= E\{ Y(1) \mid Z = 1, S(1) = 0, X = x \} \\
    &= E\{ Y \mid Z = 1, S = 0, X = x \} = \mu_{10}(x)
\end{align*}
Thus, we have that \begin{align*}
    \mu_{10}(x) = E\{Y(1) \mid S(1) = 0, S(0) = 1, X = x \} \left[ \epsilon \left\{ \frac{1 - \rho_0(x)}{1 - \rho_1(x)} \right\} + \left\{ \frac{\rho_0(x) - \rho_1(x)}{1 - \rho_1(x)} \right\} \right] \ ,
\end{align*}
and thus that \[
E\{Y(1) \mid S(1) = 0, S(0) = 1, X = x \} = \mu_{10}(x) \frac{1}{\epsilon \left\{ \frac{1 - \rho_0(x)}{1 - \rho_1(x)} \right\} + \left\{ \frac{\rho_0(x) - \rho_1(x)}{1 - \rho_1(x)} \right\} } \ . 
\]

Rearranging terms and plugging into equation (4) yields the result.
\end{proof}

\subsection{Proofs for $\psi_{1,\cdot}$}
\subsubsection{Proof of Theorem S2}

\begin{proof}
In the proof of Theorem 5, we showed that under partial principal ignorability, $E\{ Y(1) \mid S(0) = 1, S(1) = 0, X \} = E( Y \mid Z = 1, S = 0, X)$. However, if we additionally assume an exclusion restriction then 
\begin{align*}
    E\{ Y(1) \mid S(0) = 1, S(1) = 0, X \} &= E\{ Y(1) \mid S(0) = 0, S(1) = 0, X \} \\
    &= E\{ Y(1, 0) \mid S(0) = 0, S(1) = 0, X \} \\
    &= E\{ Y(0, 0) \mid S(0) = 0, S(1) = 0, X \} \\
    &= E\{ Y(0) \mid S(0) = 0, S(1) = 0, X \} \\
    &= E\{ Y(0) \mid S(0) = 0, X \} \\
    &= E\{ Y(0) \mid Z = 0, S(0) = 0, X \} \\
    &= E\{ Y \mid Z = 0, S = 0, X \} \ .
\end{align*}
Thus, we have shown that if both principal ignorability and exclusion restriction hold, then $E( Y \mid Z = 1, S = 0, X) = E( Y \mid Z = 0, S = 0, X)$. Furthermore, \begin{align*}
\psi_{1,\text{ER}} &= \frac{\bar{\mu}_1 - \bar{\mu}_{00} (1 - \bar{\rho}_0)}{\bar{\rho}_0} \\
&= \frac{E\left[ \mu_{11}(X) \rho_1(X) + \mu_{10}(X) \{ 1- \rho_1(X) \} \right]}{\bar{\rho}_0} - \frac{E\{ \mu_{00}(X) \{ 1 - \rho_0(X) \} \}}{\bar{\rho}_0} \\
&\hspace{2em} = \frac{1}{{\bar{\rho}_0}} E\{ \rho_1(X) \mu_{11}(X) + \{1 - \rho_1(X)\} \mu_{10}(X) - \{1 - \rho_0(X)\} \mu_{00}(X) \} \\ 
&\hspace{2em} = \frac{1}{{\bar{\rho}_0}} E\{ \rho_1(X) \mu_{11}(X) + \{1 - \rho_1(X)\} \mu_{10}(X) - \{1 - \rho_0(X)\} \mu_{10}(X) \} \\ 
&\hspace{2em} = \frac{1}{{\bar{\rho}_0}} E\{ \rho_1(X) \mu_{11}(X) + \{\rho_0(X) - \rho_1(X)\} \mu_{10}(X) \} \\
&\hspace{2em} = \psi_{1,\text{PI}} \ .
\end{align*}
\end{proof}

\subsubsection{Proof of Theorem S4}

When both the exclusion restriction and partial principal ignorability hold, then Theorem 13 implies that $Y \perp Z | S = 0, X$. In this case, the tangent space for the model is no longer $L^2_0(P)$, the full Hilbert space of mean zero functions of $O$ with finite variance equipped with covariance inner product. Instead the tangent space is partially restricted. Recall that $L^2_0(P)$ can be decomposed into a direct sum of spaces generated by scores of parametric submodels through: the conditional distribution of $Y \mid Z, S = 1, X$, the conditional distribution of $Y \mid Z, S = 0, X$, the conditional distribution of $S \mid Z, X$, the conditional distribution of $Z \mid X$, and the marginal distribution of $Z$. The independence restriction restricts the subtangent space associated with the conditional distribution of $Y \mid Z, S = 0, X$. In particular, under this model this subtangent space is instead $\mathcal{T}_{Y \mid S = 0, X}$, a Hilbert space of functions of $(Y, S, X)$ that have conditional mean zero given $(S = 0, X)$. An arbitrary element $s \in L^2_0(P)$ can be projected into this space using the projection operator $\Pi(s \mid \mathcal{T}_{Y \mid S = 0, X})(Y, S, X) = E_P\{ s(O) \mid Y, S, X \} - E_P\{ s(O) \mid S = 0, X \}$. Thus, we can compute an efficient gradient for $\psi_{1,\cdot}$ by projecting the pieces of the nonparametric gradient for $\psi_{1,\cdot}$ that contributed by $\mu_{10}$ into this subtangent space using this projection operator. Let \[
s_1(O) = \frac{Z}{\pi_1(X)} \frac{(1 - S)}{\{1 - \rho_1(X)\}} \frac{\{\rho_0(X) - \rho_1(X)\}}{\{1 - \rho_1(X)\}} \{ Y - \mu_{10}(X)\} \ , 
\]
and compute \begin{align*}
    &E_P\{ s_1(O) \mid Y, S, X \} - E_P\{ s_1(O) \mid S = 0, X \}  = \frac{(1 - S)}{1 - \bar{\rho}_{\cdot}} \frac{\{\rho_0(X) - \rho_1(X)\}}{\bar{\rho}_0} \{ Y - \mu_{\cdot0}(X) \} \ .
\end{align*}
Projections of all other pieces of the gradient for $\psi_{1,\text{PI}}$ are easily confirmed to equal zero. The proof is completed by replacing $\mu_{10}$ with $\mu_{\cdot0}$ wherever it appears in the gradient, as these quantities are equivalent in the semiparametric model.

\subsubsection{Proof of asymptotic linearity and robustness}

\begin{lemma}
For any two distributions $P$ and $P'$ in our model,  \[
\psi_{1,\cdot}' - \psi_{1,\cdot} = -P \Phi_{1,\cdot}' + R_2(P, P') \ ,
\]
where \begin{align*}
R_2(P, P') &= 
P\left\{ \frac{\rho_1}{\bar{\rho}_0'} \frac{(\pi_1 - \pi_1')}{\pi_1'} (\mu_{11} - \mu_{11}') \right\} +
P \left\{ \frac{(\rho_1 - \rho_1')}{\bar{\rho}_0'}(\mu_{11} - \mu_{11}') \right\} \\
&\hspace{2em} + \frac{(\bar{\rho}_{\cdot} - \bar{\rho}_{\cdot}')}{\bar{\rho}_0} P\left\{ (\rho_0' - \rho_1')\frac{(\pi_1 - \pi_1')}{\pi_1'} ( \mu_{\cdot0} - \mu_{\cdot0}' ) \right\} + 
P\left\{ \frac{\rho_0' - \rho_1'}{\bar{\rho}_0'} \frac{(\rho_1 - \rho_1')}{(1 - \rho_1')} (\mu_{\cdot0} - \mu_{\cdot0}') \right\} \\
&\hspace{2em} + P\left\{ \frac{\mu_{11}' - \mu_{\cdot0}'}{\bar{\rho}_0'} \frac{(\pi_1 - \pi_1')}{\pi_1'} (\rho_1 - \rho_1') \right\} + 
P\left\{ \frac{(\mu_{\cdot0}' - \psi_{1,\cdot}')}{\bar{\rho}_0'} \frac{(\pi_0 - \pi_0')}{\bar{\pi}_0'} (\rho_0 - \rho_0') \right\} \\
&\hspace{2em} +
P \left\{ \frac{(\rho_0 - \rho_0')}{\bar{\rho}_0'} (\mu_{\cdot0} - \mu_{\cdot0}') \right\} + P\left\{ \frac{(\rho_1 - \rho_1')}{\bar{\rho}_0'} (\mu_{\cdot0} - \mu_{\cdot0}') \right\} \\
&\hspace{2em} + P\left\{\frac{(\mu_{11}' - \mu_{11})}{\bar{\rho}_0'} (\rho_1 - \rho_1') \right\} + \frac{(\bar{\rho}_0' - \bar{\rho}_0)}{\bar{\rho}_0'} (\psi_{1,\cdot}' - \psi_{1,\cdot}) 
\end{align*}
\label{thm:r2_lemma_psi1_both}
\end{lemma}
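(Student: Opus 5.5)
The plan is to verify the expansion by direct computation. I would evaluate $P\Phi_{1,\cdot}'$, the gradient of Theorem S4 with every nuisance quantity replaced by its value under $P'$ but integrated under $P$. I would then add it to $\psi_{1,\cdot}'-\psi_{1,\cdot}$ and show that everything left over is the sum of second-order products listed in $R_2(P,P')$.

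\textbf{Step 1: reduce each term to a function of $X$.} Apply iterated expectations conditional on $(X,Z,S)$ and then on $(X,Z)$.
\begin{itemize}
\item The term $\frac{Z}{\pi_1'}\frac{S}{\bar\rho_0'}(Y-\mu_{11}')$ becomes $P\{\frac{\pi_1}{\pi_1'}\frac{\rho_1}{\bar\rho_0'}(\mu_{11}-\mu_{11}')\}$.
\item The two $S$-residual terms become $P\{\frac{\pi_1}{\pi_1'}\frac{\mu_{11}'-\mu_{\cdot0}'}{\bar\rho_0'}(\rho_1-\rho_1')\}$ and $P\{\frac{\pi_0}{\pi_0'}\frac{\mu_{\cdot0}'-\psi_{1,\cdot}'}{\bar\rho_0'}(\rho_0-\rho_0')\}$.
\item The new term is the one with weight $(1-S)/(1-\bar\rho_{\cdot}')$ multiplying $(Y-\mu_{\cdot0}')$. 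Its conditional mean given $X$ is $\frac{1-\rho_{\cdot}(X)}{1-\bar\rho_{\cdot}'}\frac{\rho_0'-\rho_1'}{\bar\rho_0'}(\mu_{\cdot0}-\mu_{\cdot0}')$, where $\rho_\cdot(X)=P(S=1\mid X)$.
\item The remaining terms are the marginal-$X$ correction $\tilde\psi_1'(X)-\psi_{1,\cdot}'$ and the term $-\frac{\psi_{1,\cdot}'}{\bar\rho_0'}(\rho_0'-\bar\rho_0')$.
\end{itemize}

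\textbf{Step 2: write $\psi_{1,\cdot}'-\psi_{1,\cdot}$ under the common integrating measure.} Following the ratio structure used for Theorem 6, I would write
\[
\psi_{1,\cdot}'-\psi_{1,\cdot}=\frac{1}{\bar\rho_0'}P\{\rho_1'\mu_{11}'+(\rho_0'-\rho_1')\mu_{\cdot0}'-\rho_1\mu_{11}-(\rho_0-\rho_1)\mu_{\cdot0}\}+\frac{\bar\rho_0-\bar\rho_0'}{\bar\rho_0'}\psi_{1,\cdot}.
\]
Here the first expectation is taken against the marginal law of $X$ under $P$. The marginal-$X$ term of $\Phi'$ converts the $P'$-marginal to the $P$-marginal exactly, and the $\bar\rho_0$ mismatch produces the final term $\frac{(\bar\rho_0'-\bar\rho_0)}{\bar\rho_0'}(\psi_{1,\cdot}'-\psi_{1,\cdot})$.

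\textbf{Step 3: add and subtract to isolate first-order differences.} Expand the product differences using identities such as $\rho_1'\mu_{11}'-\rho_1\mu_{11}=\rho_1(\mu_{11}'-\mu_{11})+\mu_{11}'(\rho_1'-\rho_1)$. In each residual term, write $\pi_z/\pi_z'=1+(\pi_z-\pi_z')/\pi_z'$. The "$1$" parts cancel the first-order differences from Step 2 term by term: the $\mu_{11}$, $\rho_1$ and $\rho_0$ pieces cancel as in the proof of Lemma \ref{thm:r2_lemma_psi1_pi}. What remains are products of the form (difference in $\pi$) $\times$ (difference in $\mu$ or $\rho$), and (difference in $\rho$) $\times$ (difference in $\mu$). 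This yields the $\mu_{11}$ rows and the $\pi$–$\rho$ rows of $R_2$, with $\mu_{10}$ replaced by $\mu_{\cdot0}$.

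\textbf{Main obstacle: the $\mu_{\cdot0}$ term.} After projection onto $\mathcal{T}_{Y\mid S=0,X}$, its weight is normalized by the marginal $1-\bar\rho_{\cdot}'$ rather than by $1-\rho_1'(X)$. So the coefficient $\frac{1-\rho_\cdot(X)}{1-\bar\rho_\cdot'}$ does not exactly cancel the first-order piece $\frac{\rho_0'-\rho_1'}{\bar\rho_0'}(\mu_{\cdot0}-\mu_{\cdot0}')$ coming from Step 2. I would write this coefficient as $1$ plus a discrepancy, using $1-\rho_\cdot=\pi_1(1-\rho_1)+\pi_0(1-\rho_0)$ and the restriction $\mu_{10}=\mu_{00}=\mu_{\cdot0}$ of the semiparametric model. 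The discrepancy would split into
\begin{itemize}
\item a piece proportional to $(\bar\rho_\cdot-\bar\rho_\cdot')$ multiplying $(\pi_1-\pi_1')(\mu_{\cdot0}-\mu_{\cdot0}')$, matching the third displayed term of $R_2$;
\item pieces carrying $(\rho_1-\rho_1')$ or $(\rho_0-\rho_0')$ multiplied by $(\mu_{\cdot0}-\mu_{\cdot0}')$, matching the $\mu_{\cdot0}$ rows of $R_2$.
\end{itemize}
I expect this bookkeeping to need the most care. If the exact constants do not match, I would fall back to showing that the leftover is a product of nuisance errors, which is all that asymptotic linearity and the robustness claims require.
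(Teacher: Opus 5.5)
\textbf{There is a genuine gap, and it sits in the step you flag as the main obstacle. It is not a bookkeeping problem: with $\Phi_{1,\cdot}$ as displayed in Theorem S4, the statement is false, and your fallback fails too.}

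Your route is the right one: integrate $\Phi_{1,\cdot}'$ under $P$, add $\psi_{1,\cdot}'-\psi_{1,\cdot}$, and isolate products. The paper offers no other argument, and Steps 1--3 are essentially correct. After adding and subtracting you are left with:
\begin{itemize}
\item the $(\pi_z-\pi_z')/\pi_z'$-weighted products;
\item the term $(\bar\rho_0'-\bar\rho_0)(\psi_{1,\cdot}'-\psi_{1,\cdot})/\bar\rho_0'$;
\item the term $\frac{1}{\bar\rho_0'}P[\{(\rho_1-\rho_1')-(\rho_0-\rho_0')\}(\mu_{\cdot0}-\mu_{\cdot0}')]$;
\item one more piece coming from the $(1-S)$ residual term, which is
\[
\frac{1}{\bar\rho_0'}\,P\left\{\frac{\bar\rho_\cdot'-\rho_\cdot(X)}{1-\bar\rho_\cdot'}\,\{\rho_0'(X)-\rho_1'(X)\}\{\mu_{\cdot0}(X)-\mu_{\cdot0}'(X)\}\right\}.
\]
\end{itemize}

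At $P'=P$ the coefficient $\bar\rho_\cdot'-\rho_\cdot(X)$ becomes $P(S=1)-P(S=1\mid X)$. That is not an estimation error. The identity $1-\rho_\cdot=\pi_1(1-\rho_1)+\pi_0(1-\rho_0)$ and the restriction $\mu_{10}=\mu_{00}$ only relate conditional quantities to one another. Neither can convert $\rho_\cdot(X)$ into the marginal $\bar\rho_\cdot'$.

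Here is a concrete counterexample. Let $P'$ agree with $P$ except that $\mu_{\cdot0}'=\mu_{\cdot0}+c$ for a constant $c$; this $P'$ stays in the model. Every displayed term of $R_2(P,P')$ vanishes, yet
\[
\psi_{1,\cdot}'-\psi_{1,\cdot}+P\Phi_{1,\cdot}'=\frac{c}{\bar\rho_0(1-\bar\rho_\cdot)}\,P\bigl[\{\rho_0(X)-\rho_1(X)\}\{\rho_\cdot(X)-\bar\rho_\cdot\}\bigr].
\]
This is nonzero whenever $\rho_0-\rho_1$ and $\rho_\cdot$ covary. So the leftover is linear in the $\mu_{\cdot0}$ error, not a product of nuisance errors, and your fallback cannot succeed either.

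The root cause is the weight in Theorem S4. Under $Y\perp Z\mid S=0,X$ one has $P(Z=1\mid Y,S=0,X)=\pi_1(1-\rho_1)/(1-\rho_\cdot)$. Hence $E[Z/\{\pi_1(1-\rho_1)\}\mid Y,S=0,X]=1/\{1-\rho_\cdot(X)\}$, and the projected term should be
\[
\frac{(1-S)}{1-\rho_\cdot(X)}\,\frac{\rho_0(X)-\rho_1(X)}{\bar\rho_0}\,\{Y-\mu_{\cdot0}(X)\}.
\]
With the marginal weight $1-\bar\rho_\cdot$, pairing $\Phi_{1,\cdot}$ with scores $(1-S)g(Y,X)$ does not reproduce the pathwise derivative, so the displayed $\Phi_{1,\cdot}$ is not a gradient.

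If you make that correction, your plan goes through:
\begin{itemize}
\item The leftover becomes $\frac{1}{\bar\rho_0'}P\{(\rho_\cdot'-\rho_\cdot)(1-\rho_\cdot')^{-1}(\rho_0'-\rho_1')(\mu_{\cdot0}-\mu_{\cdot0}')\}$.
\item Expanding $\rho_\cdot'-\rho_\cdot=\pi_1'\rho_1'+\pi_0'\rho_0'-\pi_1\rho_1-\pi_0\rho_0$ splits it into ($\pi$ or $\rho$ error)$\times$($\mu_{\cdot0}$ error), as you intended.
\end{itemize}

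Even then, do not aim for the displayed constants:
\begin{itemize}
\item the $(\rho_0-\rho_0')(\mu_{\cdot0}-\mu_{\cdot0}')$ product enters with a minus sign;
\item the second and ninth displayed terms cancel identically;
\item the third and fourth displayed terms are replaced by the expansion just described.
\end{itemize}
What you can prove is a corrected version of the lemma, not the statement as written.
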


Lemma \ref{thm:r2_lemma_psi1_both} along with appropriate Donkser conditions implies that the following conditions are sufficient to ensure that $\psi_{1,\cdot,n}^+$ is asymptotically linear:\begin{itemize}
    \item $|| \mu_{11,n} - \mu_{11} || = o_P(n^{-1/4})$
    \item $|| \mu_{\cdot0,n} - \mu_{\cdot0} || = o_P(n^{-1/4})$
    \item $|| \pi_{1,n} - \pi_{1} || = o_P(n^{-1/4})$
    \item $|| \pi_{0,n} - \pi_{0} || = o_P(n^{-1/4})$
    \item $|| \rho_{0,n} - \rho_{0} || = o_P(n^{-1/4})$
    \item $|| \rho_{1,n} - \rho_{1} || = o_P(n^{-1/4})$
\end{itemize}

Similarly, Lemma \ref{thm:r2_lemma_psi1_both} implies that the combinations of nuisance estimates shown in Table \ref{tab:consistent_est_psi1pi} and Table \ref{tab:consistent_est_psi1pi_rct} are sufficient to ensure consistent estimation of $\psi_{1,\cdot}$, where the conditions requiring consistent estimation of $\mu_{10}$ are replaced by consistent estimation of $\mu_{\cdot0}$.

\subsection{Proofs for Doomed Estimand}

\subsubsection{Proof of Theorem S5}

\begin{proof}
We have \begin{align*}
    E\{Y(1) \mid S(0) = 1, S(1) = 1 \} &= E\{Y(1) \mid S(1) = 1 \} \\
    &= E\{Y(1) \mid Z = 1, S(1) = 1 \} \\
    &= E\{Y \mid Z = 1, S = 1 \} \\
    &= E \left\{ \frac{\rho_1(X)}{\bar{\rho}_1} \mu_{11}(X) \right\} \ .
\end{align*}
The first equality follows from monotonicity, the second from vaccine randomization. The third follows from consistency, the last from the tower rule.
\end{proof}

\subsubsection{Proof of Theorem S6}

\begin{proof}
We have \begin{align*}
    E\{Y(0) \mid S(0) = 1, S(1) = 1 \} 
    &= E \left\{ \frac{\rho_1(X)}{\bar{\rho}_1} E\{Y(0) \mid S(0) = 1, S(1) = 1, X \}  \right\} \\
   &= E \left\{ \frac{\rho_1(X)}{\bar{\rho}_1} E\{Y(0) \mid S(0) = 1, X \}  \right\} \\
   &= E \left\{ \frac{\rho_1(X)}{\bar{\rho}_1} E\{Y(0) \mid Z = 0, S(0) = 1, X \}  \right\} \\
   &= E \left\{ \frac{\rho_1(X)}{\bar{\rho}_1} E\{Y \mid Z = 0, S = 1, X \}  \right\} \\
   &= E \left\{ \frac{\rho_1(X)}{\bar{\rho}_1} \mu_{01}(X)  \right\} \ . 
\end{align*}
The first equality follows from the tower rule, the second from principal ignorability. The third follows from vaccine randomization, the fourth from consistency.
\end{proof}

\subsubsection{Proof of Theorem S9} \label{sec:proof_s9}
\begin{proof}
Assume again that $X$ is discrete. We have \begin{align*}
E\{ Y(1) \mid E^* = 1 \} &= \sum_x E\{ Y(1) \mid E^* = 1, X = x \} P(X = x \mid E^* = 1) \  , \ \mbox{and} \\
E\{ Y(0) \mid E^* = 1 \} &= \sum_x E\{ Y(0) \mid E^* = 1, X = x \} P(X = x \mid E^* = 1) \  .
\end{align*}

Note that \begin{align}
P(X = x \mid E^* = 1) &= \frac{P(E^* = 1 \mid X = x) P(X = x)}{P(E^* = 1)} \notag \\
&= \frac{P(E^* = 1 \mid Z = 1, X = x) P(X = x)}{P(E^* = 1 \mid Z = 1)}  \notag \\
&= \frac{P(Y = 1 \mid Z = 1, X = x) P(X = x)}{P(Y = 1 \mid Z = 1)} \ . \label{eq:px_mid_e1_id2}
\end{align}
The second line follows from randomization. The equality in the numerator in the third line can be shown as follows: \begin{align*}
    P(E^* = 1 \mid Z = 1, X = x) &= P(E^* = 1, S = 0 \mid Z = 1, X = x) + P(E^* = 1, S = 1 \mid Z = 1, X = x) \\
    &= P(E^* = 1 \mid S = 0, Z = 1, X = x) P(S = 0 \mid Z = 1, X = x) \\
    &\hspace{2em} + P(E^* = 1 \mid S = 1, Z = 1, X = x) P(S = 1 \mid Z = 1, X = x) \\ 
    &= P(S = 1 \mid Z = 1, X = x)  \ , 
\end{align*}
where the last line follows since our assumptions imply that $P(E^* = 1 \mid S = 0, Z = 1, X = x) = 0$ and $P(E^* = 1 \mid S = 1, Z = 1, X = x) = 1$. The former can be shown as follows: \begin{align*}
    P(E^* = 1 \mid S = 0, Z = 1, X = x) &= \frac{P(S = 0 \mid E^* = 1, Z = 1, X = x) P(E^* = 1 \mid Z = 1, X = x)}{P(S = 0 \mid Z = 1, X = x)} \\
    &= \frac{0 \times P(E^* = 1 \mid Z = 1, X = x)}{P(S = 0 \mid Z = 1, X = x)} \\ 
    &= 0
\end{align*}
The latter can be shown as follows: \begin{align*}
    &P(E^* = 1 \mid S = 1, Z = 1, X = x) \\
    &\hspace{2em} = \frac{P(S = 1 \mid E^* = 1, Z = 1, X = x) P(E^* = 1 \mid Z = 1, X = x)}{P(S = 1 \mid Z = 1, X = x)} \\
    &\hspace{2em} = \frac{P(E^* = 1 \mid Z = 1, X = x)}{\left\{ P(S = 1 \mid E^* = 1, Z = 1, X = x) P(E^* = 1 \mid Z = 1, X = x) \right. } \\
    &\hspace{6em} \left. + P(S = 1 \mid E^* = 1, Z = 1, X = x) P(E^* = 0 \mid Z = 1, X = x) \right\} \\ 
    &\hspace{2em}= \frac{P(E^* = 1 \mid Z = 1, X = x)}{P(S = 1 \mid E^* = 1, Z = 1, X = x) P(E^* = 1 \mid Z = 1, X = x)} \\
    &\hspace{2em}= \frac{P(E^* = 1 \mid Z = 1, X = x)}{P(E^* = 1 \mid Z = 1, X = x)} \\
    &\hspace{2em}= 1
\end{align*}

The equality in the denominator of (\ref{eq:px_mid_e1_id2}) follows from the fact that $P(E^* = 1 \mid Z = 1, X = x) = P(S = 1 \mid Z = 1, X = x)$ for all $x$ and thus it must be true that $P(E^* = 1 \mid Z = 1) = P(S = 1 \mid Z = 1)$.

Thus, it remains to identify $E[Y(1) \mid E^* = 1, X = x]$ and $E[Y(0) \mid E = 1, X = x]$.

To identify $E\{ Y(1) \mid E^* = 1 \}$, we note that  \begin{align*}
    E\{ Y(1) \mid E^* = 1, X = x \} &= E \{ Y(1) \mid Z = 1, E^* = 1, X = x \} \\
    &= E( Y \mid Z = 1, E^* = 1, X = x ) \\
    &= E( Y \mid Z = 0, E^* = 1, X = x, S = 0 ) P(S = 0 \mid Z = 1, E^* = 1, X = x) \\
    &\hspace{2em} + E( Y \mid Z = 1, E^* = 1, X = x, S = 1) P(S = 1 \mid Z = 1, E^* = 1, X = x) \\
    &= E( Y  \mid Z = 1, E^* = 1, X = x, S = 1 ) \\
    &= E( Y  \mid Z = 1, X = x, S = 1 ) \\
    &= \mu_{11}(x)
\end{align*}
These equalities follow respectively from vaccine randomization, consistency, law of total expectation, exposure necessity and sufficiency, and the assumption that $P(S = 1 \mid Z = 1, E^* = 1, X = x) = 1$ for all $x$.

Now, we consider identification of $E[Y(0) \mid E^* = 1, X = x]$. We note that \begin{align*}
    E\{ Y(0) \mid E^* = 1, X = x \}
    &= E\{ Y(0) \mid Z = 0, E^* = 1, X = x \}  \\
    &= E( Y \mid Z = 0, E^* = 1, X = x ) \\
    &= \sum_{s = 0}^1 E( Y \mid Z = 0, E^* = 1, X = x, S = s ) P(S = s \mid Z = 0, E^* = 1, X = x) \\
    &= \sum_{s = 0}^1 E( Y \mid Z = 0, E^* = 1, X = x, S = s ) P(S = s \mid Z = 0, E^* = 1, X = x) \\
    &= E( Y \mid Z = 0, X = x, S = 1 ) \\
    &= \mu_{01}(x)
\end{align*}
Here, the equalities follow respectively from randomization of vaccine, consistency, the law of total expectation, the assumption that $Y \perp E^* \mid V, X, S$, and the assumptions that $P(S = 0 \mid Z = 0, E^* = 1, X = x) = 0$ and $P(S = 1 \mid Z = 0, E^* = 1, X = x) = 1$ for all $x$. 
\end{proof}

\bibliographystyle{apalike}
\bibliography{sample_short}

\end{document}